\theoremstyle{plain}
\newtheorem{theorem}{Theorem}[section]
\newtheorem{assumption}[theorem]{Assumption}
\newtheorem{lemma}[theorem]{Lemma}
\newtheorem{corollary}[theorem]{Corollary}
\newtheorem{proposition}[theorem]{Proposition}
\theoremstyle{remark}
\newtheorem{remark}[theorem]{Remark}
\newtheorem{notation}{Notation}
\def\Tend#1#2{\mathop{\longrightarrow}\limits_{#1\rightarrow#2}}
\font\tenms=msbm10
\font\sevenms=msbm7
\font\fivems=msbm5
\newcommand{\beq}{\begin{eqnarray}}
\newcommand{\eeq}{\end{eqnarray}}
\newcommand{\bq}{\begin{equation}}
\newcommand{\eq}{\end{equation}}
\newcommand{\beqn}{\begin{eqnarray*}}
\newcommand{\eeqn}{\end{eqnarray*}}
\def\op{{\rm op}}
\def\DD{\mathop{\bf D\kern 0pt}\nolimits}
\def\SS{\mathop{\bf S\kern 0pt}\nolimits}
\def\ZZ{\mathop{\bf Z\kern 0pt}\nolimits}
\def\TT{\mathop{\bf T\kern 0pt}\nolimits}
\def\virgp{\raise 2pt\hbox{,}}
\def\cdotpv{\raise 1pt\hbox{ ;}}
\def\eps{\varepsilon}
\def\beq{\begin{equation}}
\def\eeq{\end{equation}}
\def\cdotv{\raise 2pt\hbox{,}}
\def\C{{\mathbf C}}
\def\R{{\mathbf R}}
\def\N{{\mathbf N}}
\def\virgp{\raise 2pt\hbox{,}}
\def\({\left(}
\def\){\right)}
\def\<{\left\langle}
\def\>{\right\rangle}
\def\le{\leqslant}
\def\ge{\geqslant}
\def\Tend#1#2{\mathop{\longrightarrow}\limits_{#1\rightarrow#2}}
\newcommand{\be}{\begin{equation}}
\newcommand{\ee}{\end{equation}}
\newcommand{\bea}{\begin{eqnarray}}
\newcommand{\eea}{\end{eqnarray}}
\newcommand{\bee}{\begin{eqnarray*}}
\newcommand{\eee}{\end{eqnarray*}}
\def\bs{\bigskip}
\def\eps{\varepsilon}
\def\pa{\partial}
\def\na{\nabla}
\def\un{{\mathbbmss{1}}}
\numberwithin{equation}{section}
\begin{document}

\title
[Kinetic model for Graphen]
{A kinetic model for the transport of electrons in a graphene layer}
\author[C. Fermanian Kammerer]{Clotilde Fermanian Kammerer}
\address[C. Fermanian]{LAMA UMR CNRS 8050,
Universit\'e Paris EST\\
61, avenue du G\'en\'eral de Gaulle\\
94010 Cr\'eteil Cedex\\ France}
\email{Clotilde.Fermanian@u-pec.fr}
\author[F. M\'ehats]{Florian M\'ehats}
\address[F. M\'ehats]{IRMAR,
Universit\'e Rennes 1 and IPSO Inria team\\
Campus de Beaulieu\\
35042 Rennes cedex\\ France}
\email{florian.mehats@univ-rennes1.fr}
\thanks{The authors would like to express their gratitude to Caroline Lasser for her help. This work was supported by the ANR-FWF Project Lodiquas ANR-11-IS01-0003 and by the ANR project Moonrise ANR-14-CE23-0007-01.
}
\begin{abstract}
In this article, we propose a new numerical model for computation of the transport of electrons in a graphene device. The underlying quantum model for graphene is a massless Dirac equation, whose eigenvalues display a conical singularity responsible for non adiabatic transitions between the two modes. We first derive a kinetic model which takes the form of two Boltzmann equations coupled by a collision operator modeling the non-adiabatic transitions. This collision term includes a Landau-Zener transfer term and a jump operator whose presence is essential in order to ensure a good energy conservation during the transitions. We propose an algorithmic realization of the semi-group solving the kinetic model, by a particle method. We give analytic justification of the model and propose a series of numerical experiments studying the influences of the various sources of errors between the quantum and the kinetic models.
 \end{abstract}  
\maketitle

\section{Introduction}

\subsection{Graphene structures} Recently, graphene based structures have been the object of intensive research in nanoelectronics, see for instance the reviews \cite{rmp1,rmp2} and references therein. Graphene is a single 2D sheet of carbon atoms in a honeycomb lattice and, differently from conventional semiconductors, the most important aspect of graphene's energy dispersion is its linear energy-momentum relationship. Electrons behave as massless relativistic particles, the conduction and valence bands intersecting at the zero energy point, with no energy gap. These features enable to observe at low energy some physical phenomena of quantum electrodynamics, such as Klein tunneling that is, the fact that Dirac
fermions can be transmitted through a classically forbidden region.

\medskip

We are here interested in numerical schemes describing the transport of electrons in a graphene device via a kinetic model. Kinetic models are usually easier to implement numerically and have a cheaper numerical cost, compared to out-of-equilibrium full quantum models. Indeed, they fit with Lagrangian approach  while the natural treatment of the quantum model requires small discretization steps, due to the smallness of physical parameters. In this paper, we will use a particle method to solve numerically the kinetic model. Moreover, the treatment of boundary conditions is simpler in this framework, which also enables to enrich the description by adding collisional effects via Boltzmann-like terms.
However, due to the absence of gap between the conduction and valence bands, it is not correct to describe separately electrons and holes, which remain coupled even at the semiclassical limit. The objective of this paper is to introduce a kinetic model for ballistic transport, which treats the possible transitions between bands and fits with easy numerical realizations. This kinetic model is derived rigorously in a linear setting and leads to algorithmic realizations which is tested numerically. 

\medskip 

Previous kinetic models  have been discussed  by O.~Morandi and F.~Sch\"urrer in~\cite{morandi} and a quite similar strategy as ours has been developed at the same moment where we were writing this paper by A. Faraj and S. Jin in~\cite{FS}.  We refer to Section~\ref{sec:num} below for further details.

\subsection{The quantum model}

The kinetic model that will be introduced below consists in a system of approximate equations based on the Wigner counterpart of an underlying quantum transport model. 
At the quantum level, the ensemble of particles is described by its density matrix $\varrho(T)$, solving the von Neumann equation
$$i\hbar \pa_T\varrho=[{\mathcal H},\varrho].$$
The Hamiltonian reads
$${\mathcal H}=-i \hbar v_F{\mathbf \sigma}\cdot \na_X+eU=v_F\hbar \begin{pmatrix} 0 &-i\pa_{X_1}-\pa_{X_2} \\ -i\pa_{X_1}+\pa_{X_2} & 0 \end{pmatrix}+eU,$$
where $X\in \R^2$, $v_F$ is the Fermi velocity, $\sigma=(\sigma_{X_1},\sigma_{X_2})$ denotes the Pauli matrices vector and $U=U(X)$ is  a smooth bounded potential with bounded derivatives, see \cite{rmp2} for physical references. 

\medskip

Let us first put this equation in dimensionless form. We introduce a characteristic space length $L$, a characteristic energy $\overline{E}$ and a characteristic density $\overline n$, then define the associated characteristic time by $\overline t=\frac{\hbar}{\overline{E}}$ and denote
$$x=\frac{X}{L},\quad t=\frac{T}{\overline t},\quad V=\frac{eU}{\overline{E}},\quad \varrho^\eps=\frac{\varrho}{\overline n L^2}.$$
The system in dimensionless form reads
\begin{equation}\label{eq:system}
i\eps \pa_t\varrho^\eps=[(A(\eps D)+V),\varrho^\eps],
\end{equation}
where the semiclassical dimensionless parameter is
$$\eps=\frac{\hbar v_F}{\overline{E}L}\ll 1,$$
where $D=-i\na_x$ and $A$ is the matrix 
$$A(\xi)=\begin{pmatrix}0 & \xi_1-i\xi_2\\ \xi_1+i\xi_2 & 0 \end{pmatrix}.$$
The matrix~$A(\xi)$ has two eigenvalues $|\xi|$ and~$-|\xi|$ with associated eigenprojectors~$\Pi^+$ and~$\Pi^-$,
$$\Pi^\pm={1\over 2} {\rm Id} \pm {1\over 2|\xi|} A(\xi)$$
where ${\rm Id}$ is the identity matrix. The singularity of the eigenvalues at the point $\xi=0$ is called conical singularity. As the function $U(x)$ above, the applied potential $V(x)$ is supposed to be smooth, bounded with bounded derivatives. 

\medskip

We shall assume that  for any $\eps>0$, the initial data $\varrho^\eps(0)$ is  a nonnegative trace-class operator. We shall denote by ${\mathcal L}^1(L^2(\R^2))$ the set  of trace-class operators on $L^2(\R^2)$. 
We shall assume that the family of operators $\left(\varrho^\eps(0)\right)_{\eps>0}$ is a bounded family of ${\mathcal L}^ 1(L^2(\R^d))$, that is 
\begin{equation}
\label{ass:rho0}
\exists C>0,\;\;\forall\eps>0,\;\;\| \varrho^\eps(0)\|_{{\mathcal L}^1(L^2(\R^2))}\leq C.
\end{equation}
Note that under these assumptions, we obtain 
$$\forall t\in\R,\;\;\forall\eps>0,\;\;\| \varrho^\eps(t)\|_{{\mathcal L}^1(L^2(\R^2))}\leq C.$$

\medskip 

Due to the smallness of $\eps$, any numeric scheme aiming at solving~\eqref{eq:system} has to cope with small discretization steps, in space and in time simultaneously, which induces considerable computational times. We aim here at taking the smallness of~$\eps$ as an opportunity to develop asymptotic analysis, based on Wigner transform approach. As a consequence, our numeric schemes will deal with order $1$ quantities and will not require $\eps$-dependent step of discretization.

\subsection{Wigner functions}

Denoting now by $\rho^\eps(t,x,y)$ the integral kernel of $\varrho^\eps$, the Wigner function is defined by 
$$w^\eps(t,x,\xi)=\frac{1}{(2\pi)^2}\int {\rm e}^{i\xi\cdot \eta} \rho^\eps\left(t,x-\eps {\eta\over 2},x+\eps {\eta\over 2} \right)d\eta.$$
Since $\rho^\eps(t)$ is Hilbert-Schmidt, its kernel is a function of $L^2(\R^2_x\times\R^2_y)$ and similarly for $w^\eps(t)$. Note however that this fact holds for any $\eps>0$ without any uniform bound. The fact that the family $\left(\rho^\eps(t)\right)_{\eps>0}$ is bounded in ${\mathcal L}^1(L^2(\R^2))$ implies that the family of distributions $(w^\eps(t))_{\eps>0}$ is bounded in the set of distributions (see Remark~\ref{rem:feps})

\medskip

We call diagonal part of the Wigner transform the scalar distributions 
\begin{equation}\label{def:Wigner}
w^\eps_\pm(t,x,\xi)={\rm tr} \left( \Pi^\pm(\xi) w^\eps(t,x,\xi) \Pi^\pm(\xi)\right)
\end{equation}
and, since $\Pi^+$ and $\Pi^-$ are rank~$1$ operators, we have 
$$w^\eps(t,x,\xi)= w^\eps_+(t,x,\xi) \Pi^+(\xi)+w^\eps_-(t,x,\xi) \Pi^-(\xi)+\Pi^+w^\eps(t,x,\xi) \Pi^- + \Pi^-w^\eps(t,x,\xi) \Pi^+.$$

\medskip

When $\eps$ is small, the off-diagonal contribution to the Wigner transform is known to be highly oscillating in time so that 
$$ \Pi^\pm w^\eps(t,x,\xi) \Pi^\mp \Tend\eps 0 0 \;\;{\rm in}\;\;{\mathcal D}'(\R_t\times (\R^{2d}\setminus\{\xi=0\})),$$
(see \cite{GMMP}). For this reason, we focus on the quasi-distribution functions  $w^\eps_\pm(t,x,\xi) $.
Far from the crossing set $\{\xi=0\}$, $w^\eps_+$ and $w^\eps_-$ satisfy approximated  transport equations
  \begin{equation}\label{eq:kin1}
\left\{ \begin{array} l
\displaystyle\partial_t w^\eps_+ +{\xi\over|\xi|}\cdot\nabla_xw^\eps_+-\nabla V(x)\cdot\nabla_\xi w^\eps_+=\mathcal O(\eps),\\[3mm]
\displaystyle\partial_t w^\eps_- -{\xi\over|\xi|}\cdot\nabla_xw^\eps_--\nabla V(x)\cdot\nabla_\xi w^\eps_-=\mathcal O(\eps),
\end{array}\right.\end{equation}
  in ${\mathcal D}'(\R_t\times (\R^{2d}\setminus\{\xi=0\}))$.   Besides, the equations~(\ref{eq:kin1}) imply that, outside $\{\xi=0\}$, the functions  $w^\eps_\pm$ are constant along the integral curves $(\Phi^t_\pm)_{t\in\R}$  of the vector fields
  \begin{equation}\label{def:Hpm}
  H_\pm (x,\xi) =\pm\,{ \xi\over |\xi|}\cdot\nabla_x -\nabla _xV\cdot \nabla_\xi.
  \end{equation}
  Such curves -- also called Hamiltonian curves of $\pm\,|\xi|+V(x)$ -- are well-defined and smooth as long as they do not reach $\{\xi=0\}$.
  They satisfy 
  $$\dot \Phi^t_\pm(x,\xi) = H_\pm\left(\Phi^t_\pm(x,\xi)\right),\;\; \Phi^0_\pm = (x,\xi).$$
  
  Notice that the evolution of $w^\eps_+$ and $w^\eps_-$  are decoupled at leading order outside $\{\xi=0\}$: this regime is said to be adiabatic.   As long as these curves do not reach $\{\xi=0\}$, each part of the Wigner transform at time~$t$ can be simply calculated by transporting the initial Wigner transform along the curves. The natural easiest numerical scheme then consists in three steps:
  \begin{enumerate}
  \item One samples the initial Wigner functions $w^\eps_+(0)$ and $w^\eps_-(0)$ to obtain a set of weighted points $(x^j_\pm, \xi^j_\pm,w^j_\pm)$, $1\leq j\leq N^\pm$, which can be done by Monte-Carlo methods for example as in ~\cite{KL} ;
  \item One proceeds to the transport of the weighted points along the trajectories and obtains at time $t$ a family of points $(\Phi^t_\pm(x_\pm^j,\xi_\pm^j),w^j_\pm)$, $1\leq j\leq N^\pm$, which  requires to solve numerically a system of ordinary differential equations which do not depend on $\eps$ ;
  \item At time $t$ expectation values can be computed according to 
  $$\int _{\R^{2d}} a(x,\xi)w^\eps_\pm(t,x,\xi) dx\,d\xi\sim  {1\over N^\pm} \sum _{1\leq j\leq N^\pm} w^j_\pm a(\Phi^t_\pm(x_\pm^j,\xi_\pm^j)).$$
  \end{enumerate}

  \medskip 
  
 It is proved in~\cite{FG1} (see Proposition~3 therein) that the curves $\Phi^t_\pm$ may reach $\{\xi=0\}$ in finite time, and that, if $\nabla V\not=0$ at the impact point, the curve can be prolongated in a unique way away from $\{\xi=0\}$ generating a continuous trajectory (which is not~${\mathcal C}^1$). These facts are recalled in details in Section~\ref{sec:prelim} below.
 
 \medskip

  The singularity of the eigenvalues of $A(\xi)$ when $\xi=0$ is known to produce non adiabatic transitions between the modes.  The presence of a non-zero mass in the Dirac equation would prevent this difficulty. Our aim here is to propose a kinetic model which is also valid close to $\xi=0$. We are going to add  a collision kernel to the equations~(\ref{eq:kin1}), which will couple the evolutions of $w^\eps_+(t)$ and $w^\eps_-(t)$, and, thus, will generate transitions between the modes.

\subsection{Conical singularities}
Systems presenting conical singularities have been the subject of extensive works since the early thirties with the works of Landau and Zener~\cite{La,Ze}. Such singularities arise in   particular when studying molecular dynamics in the frame of Born-Oppenheimer approximation (see~\cite{MS,ST} for example). Pioneer works have been performed in this context by G. Hagedorn and his collaborators, with a wave-packet approach \cite{Ha94,HJ}. Several ideas used here are due to these contributions. Ten years ago, classification of crossings for rather general systems was performed independently by \cite{CdV1,CdV2} and \cite{FG2}. In the latter reference and in~\cite{FG1}, the analysis of the crossing is made from the point of view of Wigner transform and can be adapted  to our setting. This kind of analysis has led to numerical realizations for molecular propagation (\cite{LT}, \cite{FL1} and~\cite{FL2}) and  we have been inspired by these results. Of course, the Dirac equation arising in the graphene context presents major difference, when compared  to the Schr\"odinger equation which models molecular propagation. However, the transitions due to the conical intersections can be treated similarly.  The collision kernel which solves the transitions arising from the conical intersections, is derived from the analysis of conical intersections performed in~\cite{FG1} and from the particle description derived in~\cite{LT,FL1,FL2,FL3} for molecular dynamics. Precise statements are given below.


\subsection{The approximate kinetic model}\label{sec:kin}

The collision kernel that we are going to add in order to couple  equations~(\ref{eq:kin1})  is realized by a Landau-Zener transfer term and a jump operator that occurs on a specific manifold.  More precisely,  we 
 consider the set
$\Sigma$ defined by 
\begin{equation}\label{def:Sigma}\Sigma=\{(x,\xi)\in\R^{4},\;\; \xi\cdot\nabla V(x)=0\}
\end{equation}
which is an hypersurface  of $\R^{4}$ under the assumption 
\begin{equation}\label{ass:V}
\nabla V(x)\not=0\;\;\forall x\in\R^2. 
\end{equation}
This set is the place where the gap between the two modes (i.e. the function $2|\xi|$) is minimal along the
trajectories (see Remark~\ref{rem:Sigma}).
We notice that, assuming~(\ref{ass:V}), the vector fields $H_\pm(x,\xi)$ defined in~(\ref{def:Hpm}) are transverse to $\Sigma$ in a neighborhood of $\{\xi=0\}$. This comes from the observation that
\begin{equation}\label{eq:passingthroughSigma}
\nabla V(x)\cdot (H_\pm)_\xi + d^2V(x) \xi\cdot (H_\pm)_x=-|\nabla V(x)|^2 \pm d^2V(x) \xi\cdot {\xi\over |\xi|}<0
\end{equation}
if $\xi$ is small enough. 
As a consequence, in a small gap region, when the trajectories reach their minimal distance to the gap, they pass through $\Sigma,$  arriving from the region $\{\xi\cdot\nabla V(x)>0\}$ and going to the region $\{\xi\cdot\nabla V(x)<0\}$. 

\medskip

We define $(f^\eps_{+}(t),f^\eps_{-}(t))$ as a pair of solutions to the following system:
\begin{equation}\label{eq:kinapp}
\left\{ \begin{array} l
\displaystyle \partial_t f^\eps_{+} +{\xi\over |\xi|}\cdot \nabla_x f^\eps_{+}-\nabla V(x)\cdot \nabla_\xi f^\eps_{+} = 
 K_{+}(f^\eps_{+},f^\eps_{-})\\[3mm]
\displaystyle\partial_t f^\eps_{-} -{\xi\over |\xi|}\cdot \nabla_x f^\eps_{-}-\nabla V(x)\cdot \nabla_\xi f^\eps_{-} = 
K_{-}(f^\eps_{+},f^\eps_{-})
\end{array}\right.\end{equation}
with initial conditions $f^\eps_{+}(0)=w^\eps_+(0)$ and $f^\eps_{-}(0)=w^\eps_-(0)$ and where $K_{\pm}$ are two collision kernels, defined below in \eqref{eq:Khe} and \eqref{eq:Keh}.

$ $

The collision process is involved above $\Sigma$; as a consequence, outside $\Sigma$, the functions $f^\eps_{\pm}$ are constant along the curves $(\Phi^t_\pm)_{t\in\R}$ introduced previously and we recover system~(\ref{eq:kin1}). Starting from an initial data localized far from $\Sigma$, the solution $f^\eps_{\pm}$ of system~(\ref{eq:kinapp}) is obtained by propagating the data by the flow $(\Phi^t_\pm)_{t\in\R}$ so that the plus and the minus modes have decoupled evolutions. Whenever trajectories reach $\Sigma$, the transition kernel will generate transfers between the modes. 
\medskip

Even for smooth initial data, the result of this process will not be smooth functions and they will present discontinuities on $\Sigma$. 
In order to localize on $\Sigma$ functions that present discontinuities through it, and thus have different traces, we have to distinguish two sides of $\Sigma$. For this purpose, we take advantage from the fact that, as noticed above, the flows $H_\pm$ are transverse to $\Sigma$ in suitably chosen neighborhoods~$\Omega$ of points $(x_0,\xi_0)$ such that $\nabla V (x_0)\not=0$ and $|\xi_0|$ is small enough. 
For a function $g(t,x,\xi)$ which is defined in $I\times\Omega$, $I$ open interval of $\R$, and continuous outside $I\times\Sigma$,
we denote by $g_{\Sigma,in} $,  the restriction to $\Sigma$ of the function $g \ {\bf 1}_{\{\xi\cdot\nabla V(x)\geq 0\}}$ and 
by $g_{\Sigma,out} $  the restriction to $\Sigma$ of $g\,{\bf 1}_{\{\xi\cdot\nabla V(x)\leq 0\}}.$ We shall call $g_{\Sigma,in}$ the {\it ingoing} trace of $g$ on $\Sigma$ and $g_{\Sigma,out}$ the {\it outgoing} one.
We will see in Section~\ref{sec:markov} below that we can extend this definition to functions solutions to~(\ref{eq:kinapp}) with $L^1$-initial data in such a way that the definition coincides whenever the considered solutions happen to be continuous outside $\R\times \Sigma$.

\medskip 

Let us now 
 describe these collision kernels
 $K_{\pm} $. They  depend on a transfer coefficient
\begin{equation}
\label{def:Teps}
T_{\eps}(x,\xi)  =  \exp\left(-{\pi\over  \eps} {  |\xi|^2 \over |\nabla V(x)|}\right),
\end{equation}
and on two jump operators 
$$
J_{\pm}(x,\xi)  =  
 \left(x\pm 2|\xi|{\nabla V(x)\over|\nabla V(x)|^2},\xi\right),\;\;(x,\xi)\in\Sigma.$$
Then,
the collision kernels $K_\pm$ are defined by 
\begin{eqnarray}\label{eq:Khe}
K_+(f,g) & = &  \Lambda_+(x,\xi)\delta_\Sigma(x,\xi)\left(T_\eps f_{\Sigma,in}-\left(T_\eps g_{\Sigma,in}\right)\circ J_+\right),
\\
\label{eq:Keh}
K_-(f,g) & = & \Lambda_-(x,\xi) \delta_\Sigma(x,\xi)\left(T_\eps g_{\Sigma,in} -\left(T_\eps f_{\Sigma,in}\right)\circ J_-\right),
\end{eqnarray}
 where the Jacobians $\Lambda_\pm$ are given by
 \begin{equation}\label{def=theta+}
\forall (x,\xi) \in\Sigma,\quad \Lambda_\pm(x,\xi) = 
{-|\nabla V(x)|^2\pm|\xi|^{-1}d^2V(x)\xi\cdot\xi\over \sqrt{|\nabla V(x)|^2+|d^2V(x)\xi|^2}}.
\end{equation}

\medskip

\begin{remark}\label{UepsR}
Note that the transfer coefficient $T_\eps(x,\xi)$ is exponentially small as soon as $|\xi|>R\sqrt\eps$ for some $R>0$.  Moreover, if $|\xi|\leq R\sqrt\eps$, we have
$$\displaylines{
J_\pm= {\rm Id} +\mathcal O(R\sqrt\eps)\;\;
{\rm and} \;\;
\Lambda_\pm(x,\xi) =-|\nabla V(x)|+\mathcal O(R\sqrt\eps),\cr}$$
under the assumption \eqref{ass:V}.
 \end{remark}
 
\medskip


\begin{assumption}\label{ass}
\begin{enumerate}
\item The initial data $(\varrho^\eps_0)_{\eps>0}$ satisfies~(\ref{ass:rho0}) and its Wigner transform $(w^\eps(0))_{\eps>0}$ is  localized away from~$\Sigma$.
\item  The potential is non degenerated: $\nabla V(x)\not= 0$. 
\item  We have $w^\eps_-(0)=\mathcal O(\eps ^{1/8})$ in $L^1(\R^{2d})$ and the symbol $a$  and the time $T$ are such that within the time interval $[0,T]$, each of the trajectories  $(\Phi^t_+)$
 arriving at the
support of $a$ at time $T$ has passed through $\Sigma$ at most once. 
\end{enumerate}
\end{assumption}

Our main result is the following theorem, which states that the functions $(f^\eps_{\pm})$ provides an approximation of the Wigner transforms $(w^\eps_\pm)$.
The following statement claims that, under Assumptions~\ref{ass}, the functions $(f^\eps_{\pm})$ provides an approximation of the Wigner transforms $(w^\eps_\pm)$. Besides, as we shall see later in the next section, the functions $(f^\eps_\pm)$ fits to easy numerical realization.

\begin{theorem}
\label{theo:main}
If (1) and (2) of Assumption~\ref{ass} is satisfied on the time interval $[0,T]$, 
then \eqref{eq:kinapp} admits a unique weak solution.\\
 Moreover if  $a\in{\mathcal C}_0^\infty(\R^2) $ is compactly supported outside~$\{\xi=0\}$ and   $\chi\in{\mathcal C}_c^\infty([0,T],\R)$ satisfy (3) of Assumption~\ref{ass}, then
there exist positive constants $C, \eps_0>0$ such that, for all
$0<\eps<\eps_0$, 
\begin{equation}
\label{eq:approx} \qquad\left| {\rm
tr}\int_{\R^{5}}\,\chi(t)
\left(f^\eps_{\pm}-w^\eps_{\pm} 
\right)\!(t,x,\xi)\,a(x,\xi) \, d x\,d \xi\,d t
\right| \leq C\eps^{1/8}.
\end{equation}
\end{theorem}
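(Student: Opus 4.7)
The strategy is to compare $f^\eps_\pm$ and $w^\eps_\pm$ over the time window $[0,T]$ by splitting each classical trajectory into an adiabatic segment away from $\Sigma$ and a non-adiabatic window around the single crossing of $\Sigma$ allowed by Assumption~\ref{ass}~(3), and then invoking the Landau--Zener analysis of \cite{FG1} on the Wigner side.

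\textbf{Step 1: well-posedness of \eqref{eq:kinapp}.} I would build a weak solution as the fixed point of a Duhamel formula based on the adiabatic flows $\Phi^t_\pm$. Since assumption~(2) and \eqref{eq:passingthroughSigma} give transversality of $H_\pm$ to $\Sigma$ in a neighbourhood of $\{\xi=0\}$, a trajectory can only cross $\Sigma$ transversally, and away from $\Sigma$ we can integrate freely along $\Phi^t_\pm$. For an $L^1$ initial datum, this gives well-defined ingoing traces $f^\eps_{\pm,\Sigma,in}$ (to be justified in Section~\ref{sec:markov}), and the equation becomes a recursive update at each crossing, with the collision kernels $K_\pm$ redistributing mass between the two modes. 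Uniqueness follows from a Gronwall argument on the $L^1$ norm combined with the fact that each trajectory meets $\Sigma$ finitely many times on $[0,T]$.

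\textbf{Step 2: reduction and adiabatic propagation.} Because $w^\eps(0)$ is localised away from $\Sigma$, on a small time interval $[0,t_0]$ neither $w^\eps_\pm$ nor $f^\eps_\pm$ feels the collision kernel, and the standard Wigner calculus (as in \cite{GMMP}) gives $f^\eps_\pm - w^\eps_\pm = \mathcal{O}(\eps)$ tested against any $a\in\mathcal{C}_0^\infty$. Using Assumption~(3), for each test function $a$ supported away from $\{\xi=0\}$ we partition the integral \eqref{eq:approx} according to whether $\Phi^{-t}_+(\mathrm{supp}\, a)$ has crossed $\Sigma$ by time $t$ or not. On the adiabatic part of the support, the difference is $\mathcal{O}(\eps)$ by the same transport argument applied forward from $t_0$.

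\textbf{Step 3: Landau--Zener transition at the crossing.} This is the main obstacle. Around the one allowed crossing of $\Sigma$, one has to compute how $w^\eps_\pm$ actually transitions. I would apply the conical-crossing analysis of \cite{FG1,FG2} in a tubular neighbourhood $\Omega$ of a crossing point where $|\xi|\lesssim \sqrt\eps$: after reduction to a normal form and microlocalisation, the quantum evolution through the crossing is governed (to leading order) by a one-dimensional Landau--Zener model whose transmission coefficient is exactly $T_\eps(x,\xi)=\exp(-\pi|\xi|^2/(\eps|\nabla V|))$. The upshot is that the outgoing traces of $w^\eps_\pm$ on $\Sigma$ are, up to a remainder that vanishes weakly against $a$,
\begin{eqnarray*}
w^\eps_{+,\Sigma,out} &=& (1-T_\eps)\,w^\eps_{+,\Sigma,in} + (T_\eps\,w^\eps_{-,\Sigma,in})\circ J_+^{-1},\\
w^\eps_{-,\Sigma,out} &=& (1-T_\eps)\,w^\eps_{-,\Sigma,in} + (T_\eps\,w^\eps_{+,\Sigma,in})\circ J_-^{-1},
\end{eqnarray*}
where the spatial shift $J_\pm$ encodes the fact that incoming and outgoing Hamiltonian curves of $\pm|\xi|+V$ are not tangent at impact. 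Integration of \eqref{eq:kinapp} across $\Sigma$ via the transversality identity \eqref{eq:passingthroughSigma} and the Jacobians $\Lambda_\pm$ of \eqref{def=theta+} produces exactly the same jump for $f^\eps_\pm$. Hence, modulo controlled remainders, $f^\eps_\pm$ and $w^\eps_\pm$ carry the same outgoing traces on $\Sigma$, and Step~2 finishes the propagation on $[t_{\mathrm{cross}},T]$.

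\textbf{Step 4: error budget giving $\eps^{1/8}$.} Several error sources must be balanced. The transition analysis is valid in a $\sqrt\eps$-window around $\Sigma$, where $T_\eps$, $J_\pm$ and $\Lambda_\pm$ are computed up to $\mathcal{O}(\sqrt\eps)$ (cf.\ Remark~\ref{UepsR}); outside this window $T_\eps$ is exponentially small. On the other hand, the Wigner transform is only a distribution, and the justification of Step~3 requires truncating the tests in $|\xi|$ on a scale $\eps^\alpha$ with $\alpha<1/2$, together with an $L^1$ estimate on the residual minus-mode population $w^\eps_-(0)=\mathcal{O}(\eps^{1/8})$ (Assumption~(3)) that prevents a back-transfer contribution of order $1$ from appearing. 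Optimising the various exponents (truncation width in $\xi$ around $\Sigma$, size of the normal-form neighbourhood, magnitude of the minus-mode input) leads to the announced rate $\eps^{1/8}$. The hardest technical point is Step~3: transferring the pointwise Landau--Zener formula, classically proved on coherent states, to the distributional Wigner level while controlling the off-diagonal contributions $\Pi^+ w^\eps \Pi^-$ in the small-gap region.
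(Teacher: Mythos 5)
Your outline tracks the paper's strategy at a high level (classical transport away from $\Sigma$, Landau--Zener transfer at the crossing, balancing error terms to reach $\eps^{1/8}$), but two points deserve comment. For well-posedness, you propose a Duhamel/Gronwall fixed point, whereas the paper instead constructs the solution explicitly as $f^\eps(t)=\mathcal{L}_\eps^t w^\eps(0)$ via the Markov jump process (Proposition~\ref{prop:Markov}); this is not merely cosmetic, because that same semi-group is the object compared with the Wigner function in Proposition~\ref{prop:FL}, and its particle representation is what gives the uniqueness argument its teeth (via Lemma~\ref{lemmetechnique} and the trace maps of Lemma~\ref{lem:trace}). A purely abstract fixed-point solution would leave you with an extra task of identifying it with the semi-group before you can proceed.

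The more serious gap is Step~3, which you rightly identify as the hard part but for which you supply only a statement of intent. The paper's proof requires several non-routine ingredients to turn the Landau--Zener heuristic into an estimate on Wigner functions: (a) a two-scaled pseudodifferential calculus in which symbols like $c(x,\xi)\,b(\xi/(R\sqrt\eps))$ are quantified with explicit $R$- and $\eps$-dependent remainders (Proposition~\ref{prop:propagation1}), together with the off-diagonal cancellation trick of Lemma~\ref{lem:B1B0}; (b) the localization in energy (Lemma~\ref{lem:loc}) that converts matrix-valued observables $a\Pi^\pm$ into scalar ones in the space--time phase space, which is what makes the normal form applicable and is also precisely the step that forces the jump $J_\pm$ into the picture (Remark~\ref{rem:energy}: the energy window is of size $R\sqrt\eps$, so the trajectory after transition must have energy matching to order $\eps$, not $\sqrt\eps$); (c) the conjugation by a Fourier Integral Operator to the exact Landau--Zener normal form (via Colin de Verdi\`ere's quantum normal form), after which Proposition~\ref{prop:scat} gives the scattering matrix. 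Your account does not indicate how the jump operator's size $2|\xi|\nabla V/|\nabla V|^2$ is derived (it is forced by the energy matching in the normal-form coordinates, formulas~\eqref{def:tildedrift} and~\eqref{eq:z}), nor does it specify the choice $R=\eps^{-1/8}$ which balances the four remainders $R^{-5}\eps^{-1/2}$, $R^{-2}$, $\sqrt\eps$, and $R^3\sqrt\eps$ of Notation~\ref{notation}; ``optimising the various exponents'' is too vague to deliver the stated rate. Finally, a small but non-trivial point: your trace formulas use $J_\pm^{-1}$ where the paper's collision kernels use $J_\pm$; these agree only approximately for $|\xi|=O(\sqrt\eps)$, and the exact form matters for matching the weak-solution uniqueness argument in Section~\ref{sect:uniqueness}.
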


\begin{remark}\label{rem:theo}
\begin{enumerate}
\item The existence of solutions to these kinetic equations comes from the fact that, under (1) and (2) of Assumption~\ref{ass}, these equations have a particle description relying on a Markov semi-group that is explained in  Section~\ref{sec:markov} and is crucial for the proof of the  theorem. 
\item Note that the role of the indexes $plus$ and $minus$ can be inverted in (3) of  Assumptions~\ref{ass} and result of Theorem~\ref{theo:main} still holds. 
\item The $\eps^{1/8}$ approximation comes from our approach and we suspect that the exponent $1/8$ is not optimal.
\item The limitation induced by (3) to the range of validity of Theorem~\ref{theo:main} comes from the fact that the kinetic kernels $K_\pm$ are not adapted in some situations where the modes interfere too much. It appears nevertheless that  these kernels' description encounter a larger range of situation than those satisfying~(3), as it appears in the numerical realizations of Section~\ref{sec:numerics}. Some example of situation where (3) is not satisfied and where the description by the kernels~$K_\pm$ fails is given in~\cite{FL4} in the context of conical intersections for molecular dynamics. 
 \end{enumerate}
\end{remark}


\subsection{The algorithmic realization}\label{sec:num}
Thanks to a semi-group realization of the kinetic model which is performed in Section~\ref{sec:markov}, the mechanism describing the evolution of $f^\eps_{\pm}(t,x,\xi)$ has the simple algorithmic description:
\begin{enumerate}
\item Far from $\Sigma$, $f^\eps_+(t,x,\xi)$ propagates along the trajectories $\Phi^t_+$ and $f^\eps_-(t,x,\xi)$ propagates along the trajectories $\Phi^t_-$.
\item  Whenever a trajectory reaches $\Sigma$ at time $t^*$ in a point $(x^*,\xi^*)$, one may transmit some energy to the other mode according to a random process. One takes a random number $r$ between $0$ and $1$ and one compares $r$ and the transfer coefficient $T_\eps(x^*,\xi^*)$:
\begin{itemize}
\item If $r>T_\eps(x^*,\xi^*)$, one continues with the same trajectory and propagate the mass
$f^\eps_+(t^*,x^*,\xi^*)$
on the trajectory $\Phi^t_+(x,\xi)$.
\item If $r<T_\eps(x^*,\xi^*)$, one  initiates a trajectory $\Phi^t_-$ from the point  
\begin{equation}\label{jump}
(x^{t^*}_-,\xi^{t^*}_-):=J_+(x^*,\xi^*)=\left(x^*+2|\xi^*| {\nabla V(x^*)\over|\nabla V(x^*)|^2} ,\xi^*\right)
\end{equation}
and propagate the mass
$f^\eps_-(t^*,x^*,\xi^*)=f^\eps_+(t^*,x^*,\xi^*)$
on the new trajectory $\Phi^{t-t^*}_-(x_-^{t^*},\xi_-^{t^*})$.
\end{itemize}
\end{enumerate}
A similar process is performed on the other mode.
\begin{remark}\label{rem:(3)}
Note that the hypothesis (3) of Assumptions~\ref{ass} imply that at a transition point, only one of the trajectory is weighted. 
\end{remark}
The kinetic system proposed by O.~Morandi and F.~Sch\"urrer in~\cite{morandi} is obtained by expliciting some of the neglected terms in the pseudodifferential approach which gives~(\ref{eq:kin1}) at first approximation. Indeed, the $O(\eps)$ term in~(\ref{eq:kin1}) is no longer small when $\xi$ is close to~$0$, and O.~Morandi and F.~Sch\"urrer explicits this term which couples the equations. However, this pseudodifferential symbolic calculus can only been mathematically justified when $\xi$ is non zero and, as far as we know, O.~Morandi and F.~Sch\"urrer's approximated system can only be justified for non zero though small $\xi$. On the contrary, the approximation by system~\eqref{eq:kinapp} enjoys a mathematical justification.

\medskip

 In~\cite{FS}, 
A.~Faraj and S.~Jin uses a hopping algorithm which consists in transitions with the same rate $T_\eps$ as ours, however, they do not implement the jumps resulting from the operators $J_\pm$. We emphasize the importance of these jumps as shown in Figure~\ref{fig5} below. As pointed out in Remark~\ref{rem:energy}, these jumps aim at preserving the energy of the trajectories  during the transitions. There is also in~\cite{FS} an interesting numerical comparison of the model proposed O.~Morandi and F.~Sch\"urrer and the one of A.~Faraj and S.~Jin which shows the pertinence of the Landau Zener transition rate~$T_\eps$. 

\medskip

In Section~\ref{sec:numerics} below, we shall  present various numerical experiments in order to validate the kinetic model \eqref{eq:kinapp}. In particular, we shall study numerically several sources of error which are linked with the choice of the model, instead of numerical errors due to time and space discretization, considering that we have taken sufficiently small time steps and space steps, such that the error associated to these numerical parameters is negligible compared to the modeling errors.

\subsection{Organization of the paper} We begin by presenting in Section~\ref{sec:numerics} the numerical experiments arising from this analysis. Then, we explain  the underlying Markov semi-group realization which is at the core of the analysis in Section~\ref{sec:markov}. This allows to give a proof of Theorem~\ref{theo:main} in Section~\ref{sec:proof}, which justify the pertinence of the kinetic model and of its numerical realization. Finally, an Appendix is devoted to some technical aspects related with pseudo differential calculus.


\section{Numerical experiments}\label{sec:numerics}

\subsection{The simulated models}

In this section, we present various numerical experiments in order to validate our kinetic model \eqref{eq:kinapp}. We assume that the initial data has a Wigner transform supported in the domain $\Omega \subset \{|x_1|\leq a,\,|\xi|\leq M\}$. In pratice, we will take $a=M=10$. We will not discuss the behavior close to the boundary and we will consider a time schedule $[0,T]$ such that the trajectories issued from $\Omega$ do not reach any boundary. For the quantum model, we will use periodic boundary conditions. In the case of the barrier potential, we will also assume that the trajectories issued from points of $\Omega $ do not reach the top of the barrier with velocity $0$; thus, we are in the frame of Assumption~\ref{ass} and the above description of the trajectories is valid.

Several sources of errors can be identified in this model, if we compare it to the original quantum equation \eqref{eq:system}:
\begin{itemize}
\item[--] the error made in the computation of the initial data and on its sampling by a bunch of particles,
\item[--] the error made during the plus/minus transition processes; on the computation of the transmission coefficients and the position of particles after transitions (presence or not of the jump process),
\item[--] the error made during the transport phase, when the quantum transport in replaced by the classical transport induced by the Hamiltonian \eqref{def:Hpm} and when the coherence effects between particles are neglected,
\item[--] the error made when we neglect the transport of the antidiagonal part of the Wigner function.
\end{itemize}
In the numerical tests that we present, we concentrate our study on these modeling errors, instead of numerical errors due to time and space discretization. Hence, we have taken sufficiently small time steps and space steps, such that the error associated to these numerical parameters is negligible compared to the modeling errors.

In order to characterize these different modeling errors, let us identify the models that we simulate:
\begin{itemize}
\item[(i)] The {\em quantum graphene model} is the original quantum equation \eqref{eq:system}, computed with the Strang splitting method. In the first series of experiments (Subsections \ref{sub1}, \ref{sub2} and \ref{sub3}), the initial data is a gaussian coherent quantum state localized at the position $x^0_1$ in the $x_1$ direction, multiplied by a plane wave in the $x_2$ direction, with the momentum $\xi^0=(\xi^0_1,\xi^0_2)$, and polarized on the plus mode: 
\begin{equation}\label{initquantum}
\rho^\eps_0(x,y)= \psi^\eps_0(x_1-x_1^0,\xi^0_1)\left(\psi^\eps_0(y_1-x_1^0,\xi^0_1)\right)^*\exp\left(i\frac{(x_2-y_2)\xi_2^0}{\eps}\right),
\end{equation}
with \begin{equation}\label{initquantum2}
\psi_0^\eps(x_1,\xi^0_1)=\Pi^+(\eps D)\left(\begin{array}{c}\sqrt{2}\,u_0^\eps(x_1,\xi^0_1)\\0\end{array}\right),
\end{equation}
and
\begin{equation}
\label{gaussian}
u_0^\eps(x_1,\xi^0_1)=(\pi\eps )^{-1/4}\exp\left(-\frac{x_1^2}{2\eps}+i\frac{x_1 \xi^0_1}{\eps}\right).
\end{equation}
In the last experiments (Subsection \ref{sub4}), the initial data is a mixture of coherent quantum states
\begin{align}&&\rho^\eps_0(x,y)=\iint_{\R^2}\psi^\eps_0(x_1-x^0_1,\xi^0_1)\left(\psi^\eps_0(y_1-x^0_1,\xi^0_1)\right)^*\exp\left(i\frac{(x_2-y_2)\xi_2^0}{\eps}\right)\quad \nonumber\\
&&\times f^0(x_1^0,\xi_1^0)dx^0_1d\xi^0_1,\label{matinit}
\end{align}
where $f^0$ is a given distribution density and where, for simplicity, all the states are taken with the same momentum $\xi_2^0$ in the $x_2$ direction.
\item[(ii)] The {\em kinetic graphene model} is \eqref{eq:kinapp}, discretized according to the particle algorithm described in Section \ref{sec:num}, with randomly computed transitions between plus and minus modes, with or without jumps \eqref{jump}. The initial data $f_{+,0}^\eps$ and $f_{-,0}^\eps$ are described in Subsection \ref{sub2}. The time integrator is the triple jump method of order 4 \cite{GNI}.
\item[(iii)] The {\em quantum pseudo-graphene model} is the following modified quantum equation, computed with the Strang splitting method:
\begin{equation}\label{eq:pseudosystem}
i\eps \pa_t\varrho^\eps=[(\widetilde A(\eps D)+V),\varrho^\eps],\qquad \mbox{with }\widetilde A(\xi)=\begin{pmatrix} |\xi|&0\\0&-|\xi| \end{pmatrix}
\end{equation}
with the same initial data $\rho_0^\eps$ as (i). Far from the crossing set $\{\xi=0\}$, this model displays the same dynamics as (i), but the major difference is that no transition occur between plus and minus modes with \eqref{eq:pseudosystem}.
\item[(iv)] The {\em kinetic pseudo-graphene model} is the same as (ii), without transition process. It is the classical counterpart of the quantum pseudo-graphene model (iv).
\end{itemize}


\subsection{The Klein effect}\label{sub1}
Our aim in this subsection is to observe qualitatively the Klein effect (the tunneling of particles through a classically forbidden potential barrier) with both models: the quantum graphene model and the kinetic graphene model. For the simulations presented in this subsection, we have taken a sufficiently large number of particles ($4\times 10^6$) and assume that we are at numerical convergence. The potential is the following smooth barrier potential, depending only on $x_1$:
$$V(x_1)=4 \sin^3\left(\frac{\pi}{4}(x_1+1)\right)\mbox{ for } x_1\in [-1,3], \mbox{ and }V(x_1)=0\mbox{ otherwise}.$$ 

Let us describe the phenomenology that can be observed on Figures \ref{fig0} and \ref{fig1}. Here, $\eps=0.064$. We represent on Figure \ref{fig0} the contour plots of the trajectories of the plus and minus modes, computed with the quantum graphene model. On Figure \ref{fig1}, the plus and minus densities are represented at four instants, computed with the quantum graphene model (plotted in plain lines) and computed with our kinetic graphene model (plotted with the 'X'). It can be seen that the results given by both models are in very good agreement.
\begin{figure}[!htbp]
  \includegraphics[width=.8\textwidth]{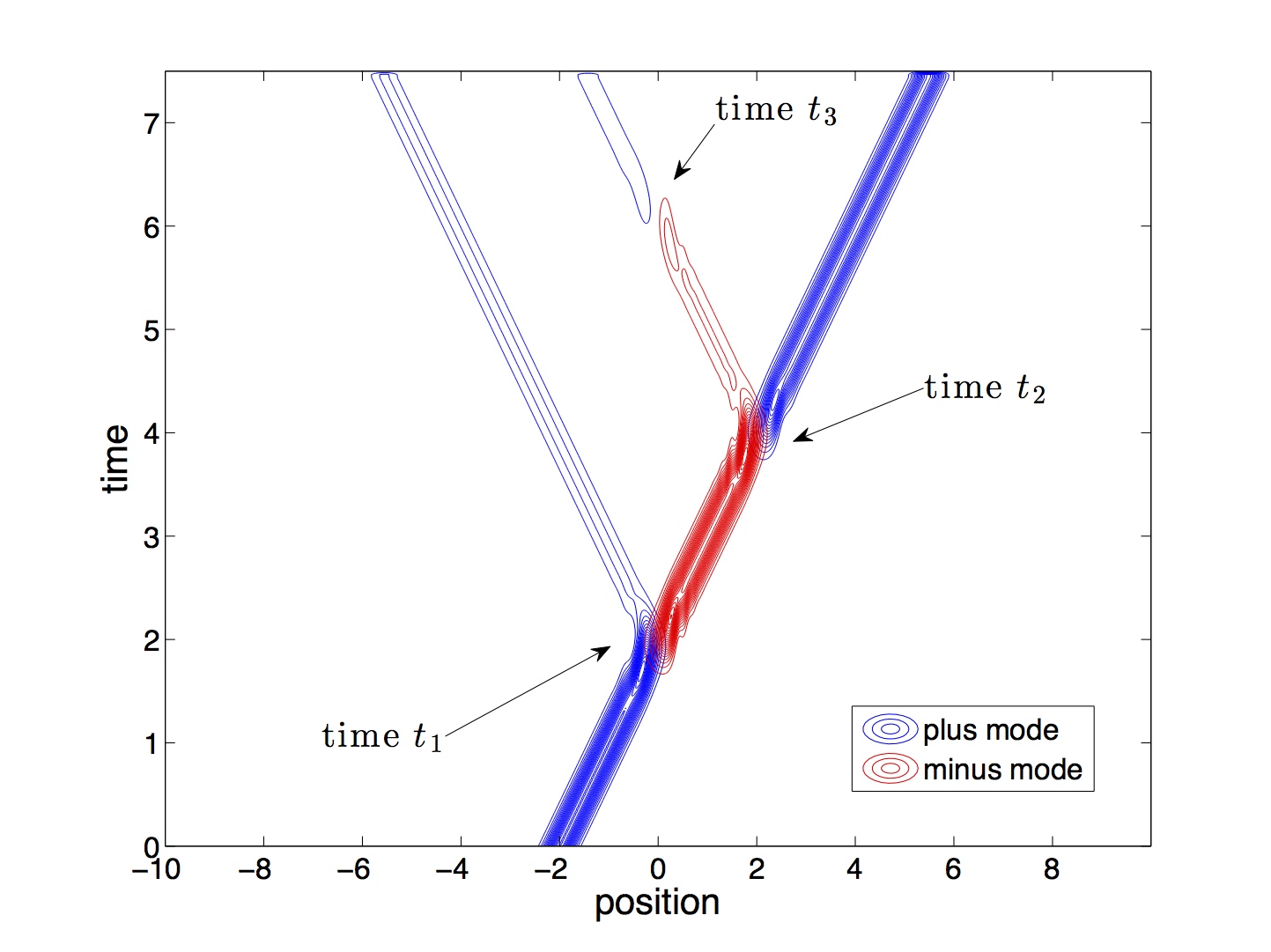}
    \caption{Contour plots of the plus and minus densities, quantum graphene model with $\eps=0.064$.}
\label{fig0}
\end{figure}
\begin{figure}[!htbp]
  \centerline{
  \subfigure[$t=0$]{\label{fig1a}\includegraphics[width=.6\textwidth]{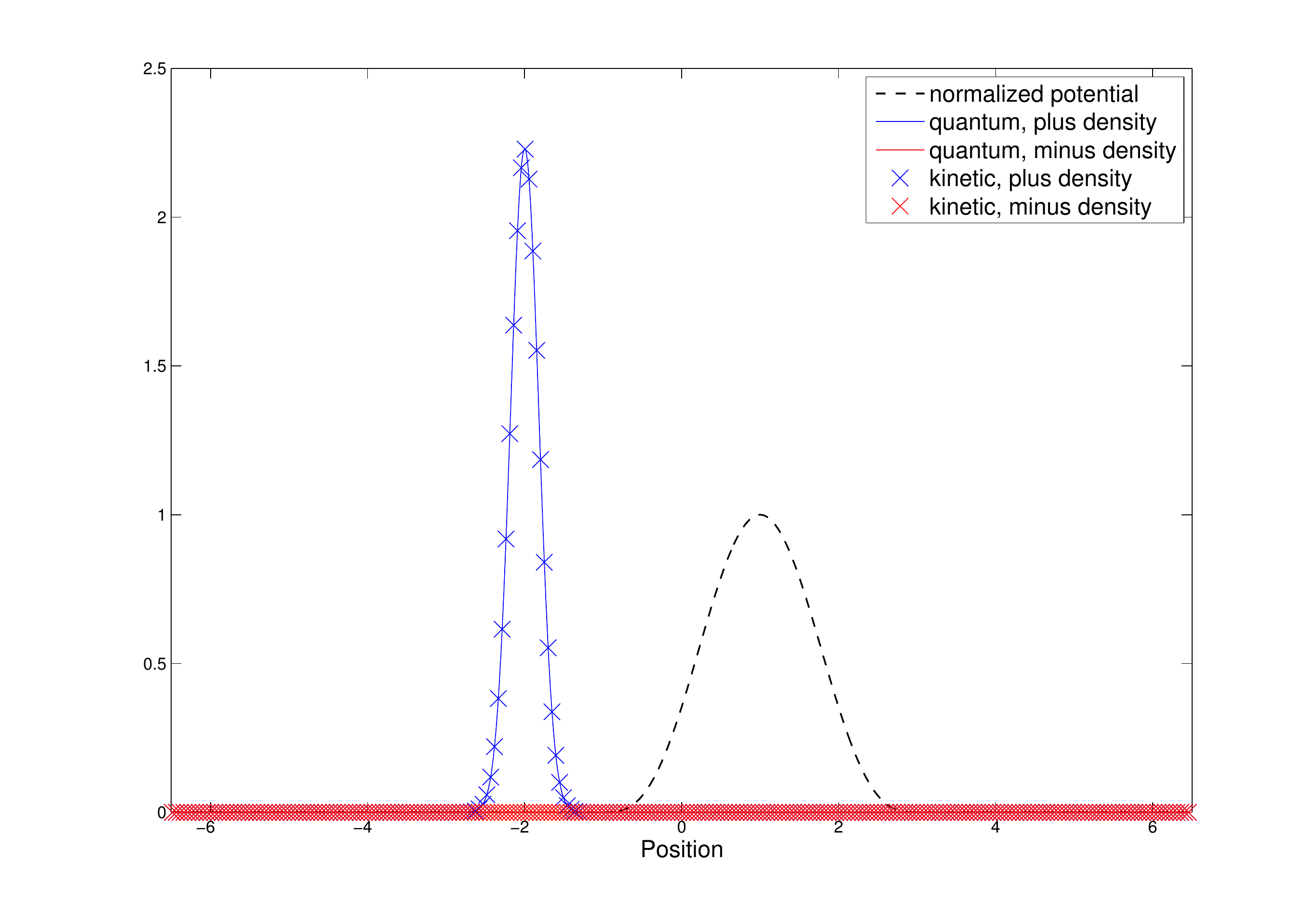}}\hspace*{-8mm}
  \subfigure[$t=3$]{\label{fig1b}\includegraphics[width=.6\textwidth]{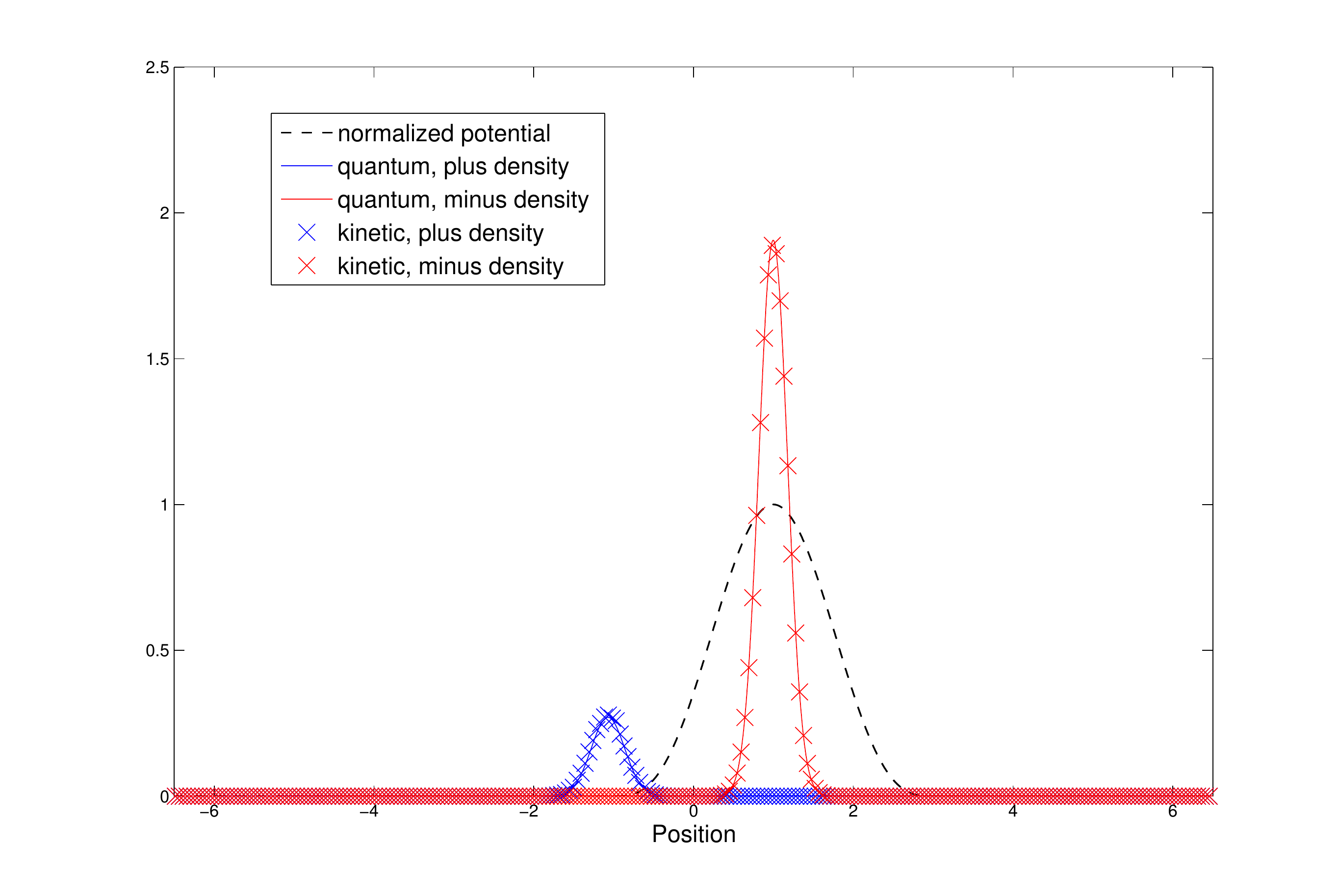}}}
  \centerline{\subfigure[$t=4.5$]{\label{fig1c}\includegraphics[width=.6\textwidth]{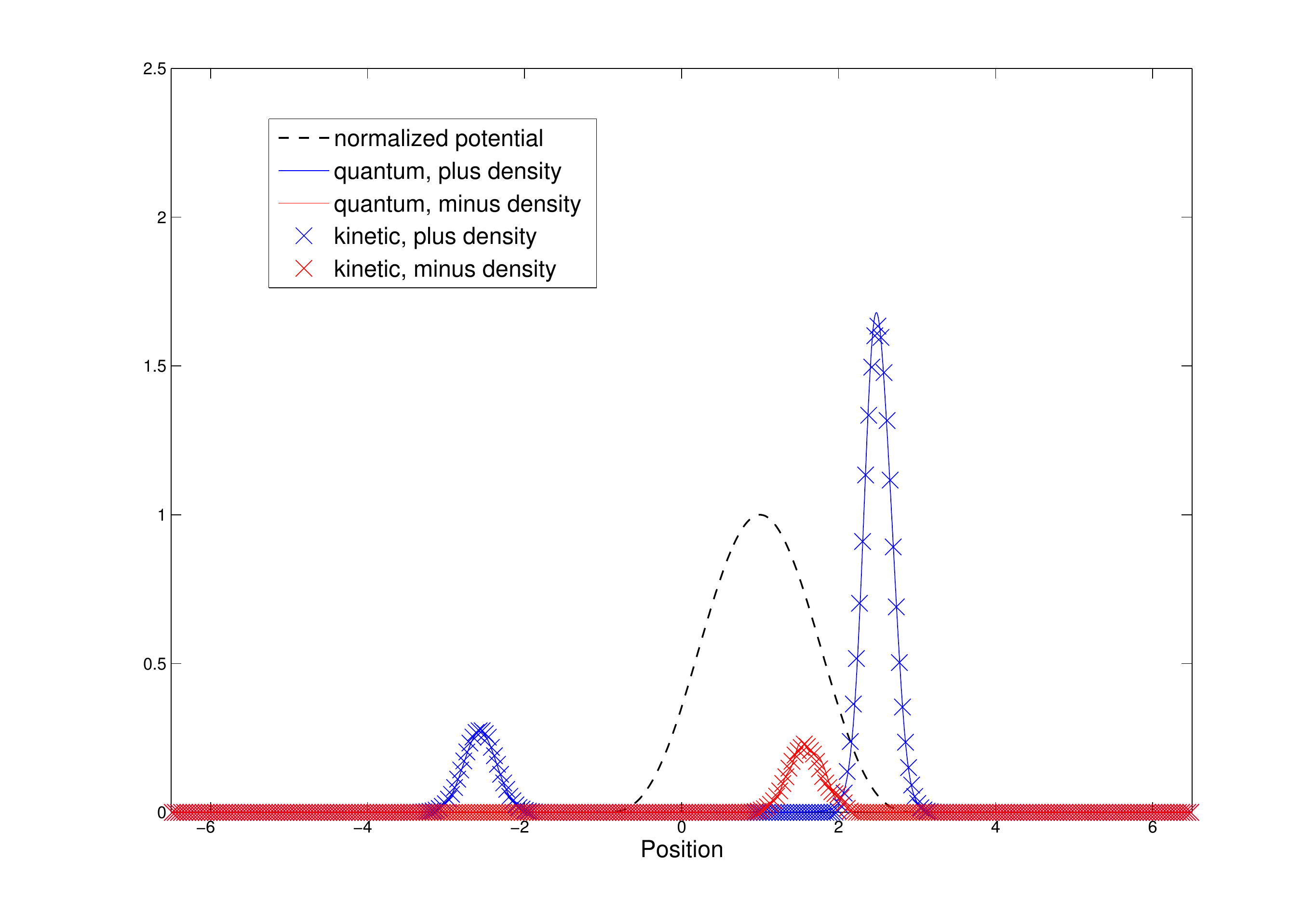}}\hspace*{-8mm}
  \subfigure[$t=7.5$]{\label{fig1d}\includegraphics[width=.6\textwidth]{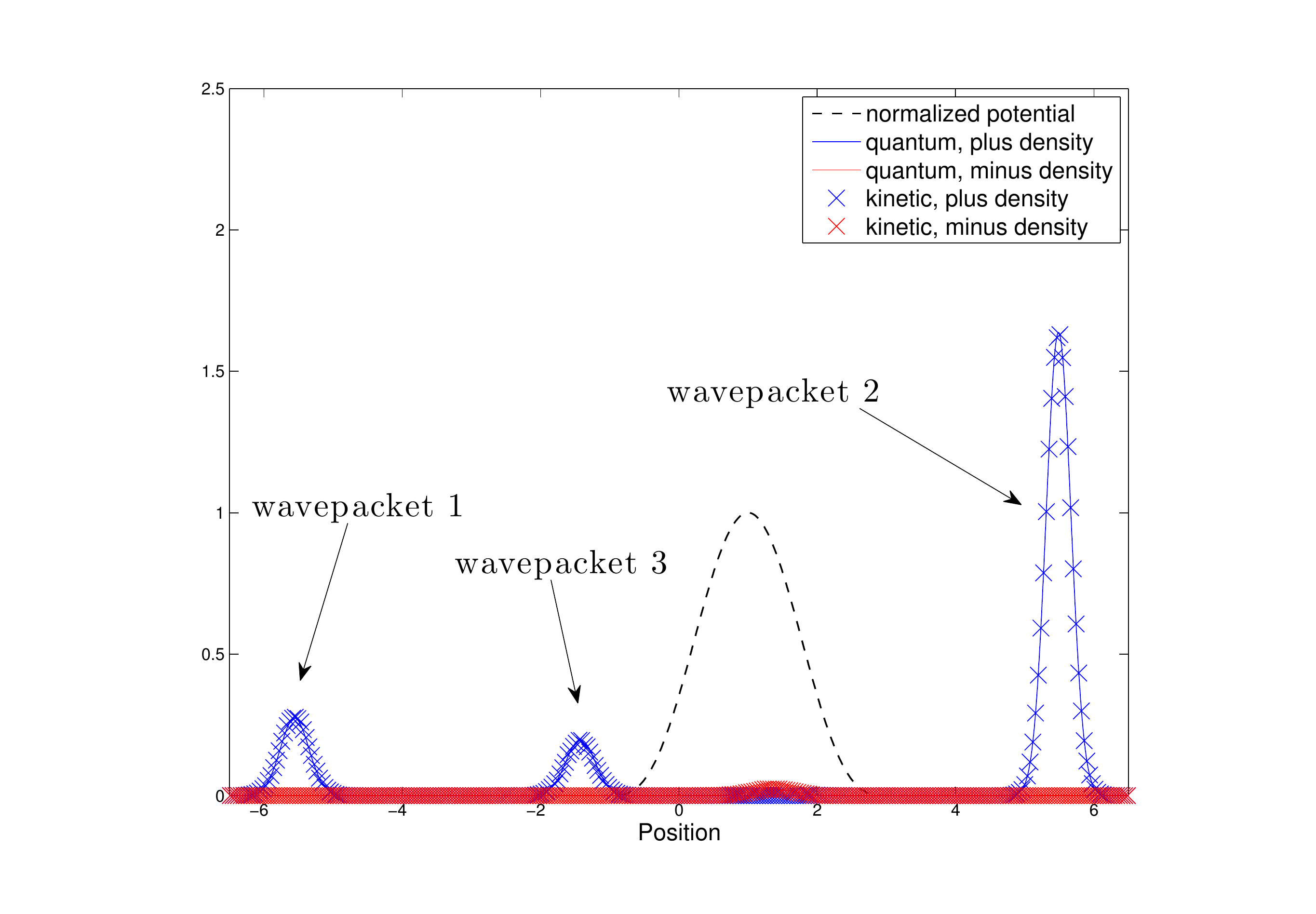}}}
    \caption{Propagation of a coherent wavepacket through a barrier for $\eps=0.064$.}
\label{fig1}
\end{figure}

Initially, a coherent wavepacket in the plus mode is at position $x^0_1=-2$ with the momentum $\xi^0=(\xi_1^0,\xi_2^0)=(1.3,0.1)$ (Figure \ref{fig1a}). The wavefunction first propagates freely, then enters inside the barrier and its momentum $\xi_1$ decreases until it vanishes. At the instant $t_1$, the wavefunction is partially reflected and partially transferred through the barrier in the minus mode. Then, the minus wavepacket propagates inside the barrier (see Figure \ref{fig1b}) until it reaches the other side of the barrier (instant $t_2$), through which it is partially transferred into a plus wavepacket and partially reflected. Finally, the remainding minus wavepacket propagates from the right to the left (see Figure \ref{fig1c}) and hits again the barrier (instant $t_3$), where it is almost integrally transferred into a third plus wavepacket outside the barrier.

At the end of the simulation (Figure \ref{fig1d}), almost all the mass have been redistributed into three plus wavepackets. In Table \ref{table1}, we give the numerical transfer rates (i.e. the ratio of mass in each wavepacket over the initial mass) of the initial mass into wavepackets 1, 2 and 3 for different values of $\eps$, computed with the quantum model and with the kinetic model. Here again, we observe a good agreement between our kinetic model and the reference one.
\begin{table}[h!]
\label{table1}
\caption{Transfer rates}\centering
\begin{tabular}{|c|c|c|c|c|c|c|}
\hline
$\eps$&0.128&0.064&0.032&0.016&0.008&0.004\\
\hline\hline
wavepacket 1 (quantum)&7.42\%&  14.18\%&   26.25\%&   45.55\%&   70.31\%&   91.17\%\\
\hline
wavepacket 1 (kinetic)&7.64\%&  14.64\%&   27.03\%&   46.73\%&   71.47\%&   91.92\%\\
\hline\hline
wavepacket 2 (quantum)&85.71\%&73.65\%&54.39\%&29.66\%&8.82\%&0.78\%\\
\hline
wavepacket 2 (kinetic)&85.69\%&73.49\%&54.17\%&29.46\%&8.79\%&0.75\%\\
\hline\hline
wavepacket 3 (quantum)&6.38\%&10.46\%&  14.28\% & 13.50\%&  6.20\%&   0.71\%\\
\hline
wavepacket 3 (kinetic)&6.20\%&  10.2\%&  13.98\%& 13.15\%&  6.01\%&  0.68\% \\
\hline
\end{tabular}
\end{table}
\subsection{Choice of the initial data}
\label{sub2}
Let us concentrate on the computation of the initial data for the kinetic model. Recall that, in the pure-state case, the initial data for the quantum graphene model is the density matrix given by \eqref{initquantum}, where the wave function \eqref{initquantum2} is the gaussian wavefunction \eqref{gaussian} projected on the plus mode.  Let us denote by $f_0^\eps(x,\xi)$ the Wigner function of $\varrho^\eps_0$ and by $\widetilde{f_0^\eps}(x,\xi)$ the Wigner function of the gaussian wavepacket
$$\widetilde {\rho_0^\eps}(x,y)=u^\eps_0(x_1-x_1^0,\xi^0_1)\left(u^\eps_0(y_1-x_1^0,\xi^0_1)\right)^*\exp\left(i\frac{(x_2-y_2)\xi_2^0}{\eps}\right),$$
which is
\begin{align}
\widetilde{f_0^\eps}(x,\xi)&=\frac{1}{2\pi}\int {\rm e}^{i\xi\cdot \eta}u^\eps_0\left(t,x_1-\eps {\eta\over 2}\right)\overline{u_0^\eps}\left(x_1+\eps {\eta\over 2}\right)d\eta\,\delta_{\xi_2=\xi_2^0}\nonumber\\
&=\frac{1}{\pi\eps}\exp\left(-\frac{(x_1-x^0_1)^2}{\eps}-\frac{(\xi_1-\xi^0_1)^2}{\eps}\right)\delta_{\xi_2=\xi_2^0}\,.\label{wig}
\end{align}
The function ${\rm tr} (\Pi^+(\xi)f_0^\eps(x,\xi)\Pi^+(\xi))$ is asymptotically close to $\widetilde{f_0^\eps}(x,\xi)$ as $\eps\to 0$. More precisely, using pseudo-differential calculus, the following expansion can be obtained:
\begin{align}
\label{devinit1}
{\rm tr} (\Pi^+(\xi)f_0^\eps(x,\xi)\Pi^+(\xi))&=\widetilde {f_0^\eps}(x,\xi)-\frac{\eps}{2}\frac{\xi_2}{|\xi|^2}\partial_x \widetilde {f_0^\eps}(x,\xi)+\mathcal O(\eps)\\\label{devinit2}
&=\widetilde {f_0^\eps}\left(x_1-\frac{\eps}{2}\frac{\xi_2}{|\xi|^2},\xi\right)+\mathcal O(\eps)\end{align}
in $L^1(\R^4)$. Notice that this expansion is in powers of $\eps^{1/2}$ (indeed, the $L^1$ norm of $\partial_x \widetilde {f_0^\eps}(x,\xi)$ is of order $\eps^{-1/2}$).

As initial data for our kinetic graphene model, in the pure-state case, let us experiment these two approximations of $f_0^\eps$: the first one is simply $\widetilde {f_0^\eps}(x,\xi)$ and the second one is the shifted function 
\begin{equation}
\label{f+0}
f_{+,0}^\eps(x,\xi)=\widetilde {f_0^\eps}\left(x_1-\frac{\eps}{2}\frac{\xi_2}{|\xi|^2},\xi\right).
\end{equation} The advantage of this last choice compared to the right-hand side of  \eqref{devinit1} is that the distribution function is always positive. On Figure \ref{fig2}, we plot in logarithmic scale the errors on the densities, i.e. the quantities 
$$\mathrm{err}_0=\int \left| |\psi_0^\eps(x)|^2-\int \widetilde{f_0^\eps}(x,\xi)d\xi\right|dx_1,\quad \mathrm{err}_1=\int \left| |\psi_0^\eps(x)|^2-\int f_{+,0}^\eps(x,\xi)d\xi\right|dx_1,$$
for $\eps=2^N\times 2.5\times10^{-3}$, with $N\in\{0,1,2,3,4,5,6,7,8,9\}$. Again, these integrals have been computed with enough discretization points such that the  numerical integration errors is negligible: the $(x,\xi)$ domain $[-10,10]^2$ is discretized with $2^{18}$ grid points in the $x$ direction and $2^{11}$ grid points in the $\xi$ direction with a non uniform cartesian mesh refined near the point $(x^0,\xi_1^0)$. As above, we have taken $(x^0_1,\xi_1^0, \xi^0_2)=(-2,1.3,0.1)$. Figure \ref{fig2} confirms the estimate \eqref{devinit1} and \eqref{devinit2}: the function $\widetilde {f_0^\eps}$ is an $\mathcal O(\eps^{1/2})$ approximation of $f_0^\eps$ and the shifted function $f_{+,0}^\eps(x,\xi)$ is an order $\mathcal O(\eps)$ approximation. From now on, we choose this shifted function $f_{+,0}^\eps$ defined by \eqref{f+0} (and $f_{-,0}^\eps=0$) as initial data for the kinetic graphene model.
\begin{figure}[!htbp]
  \includegraphics[width=.7\textwidth]{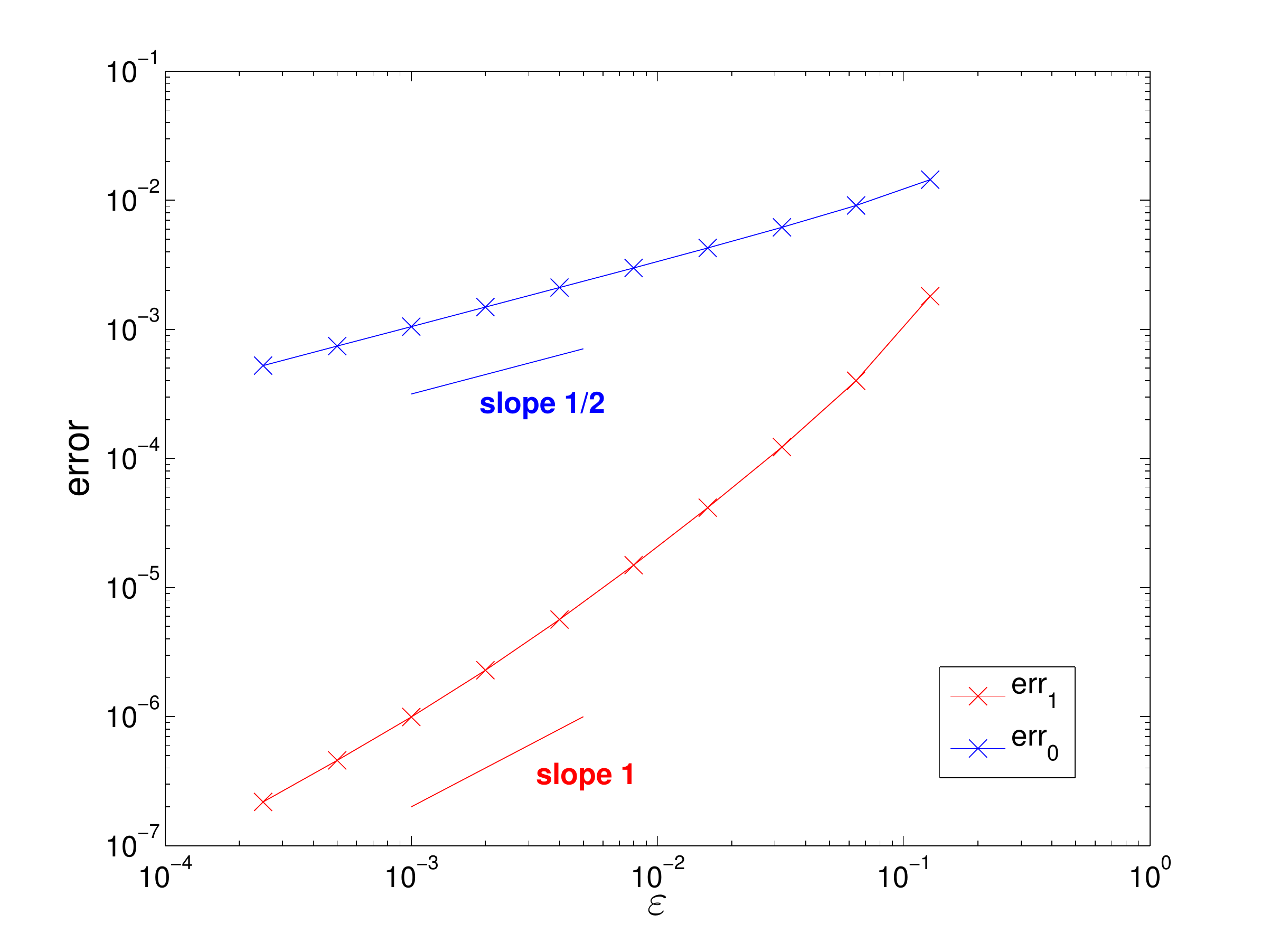}
    \caption{$L^1$ error on the initial density, for the two choices $\widetilde{f_0^\eps}$ and $f_{+,0}^\eps$.}
\label{fig2}
\end{figure}

Let us now discuss briefly the initial sampling step for the distribution function $f_{+,0}^\eps$ and compare the rates of convergence of Monte Carlo and quasi-Monte Carlo sampling. Monte Carlo sampling can be achieved by a rescaled and shifted sampling of a bidimensional Gaussian distribution, taking advantage of the tensorial structure of the function. Its convergence rate is known to be of order $\mathcal O(1/\sqrt{N_{part}})$, where $N_{part}$ denotes the number of particles. Quasi-Monte Carlo methods use quasi-random sequences, also known as low-discrepancy, which are deterministic approximation of the uniform distribution on $[0,1]^2$ and can be transformed into a Gaussian distribution by the cumulative distribution function. Such methods display better convergence rate, of the form $C(\log N_{part})^k /N_{part}$. In both cases, densities are reconstructed by a fifth order spline interpolation method.

On Figure \ref{fig3}, we have represented with the same scales, for the quasi-Monte Carlo method (using a 2D Hammersley set) and the Monte Carlo method, the $L^1$ error between the reconstructed density and the reference density $|\psi_0^\eps|^2$, for four different values of $\eps$. For the quasi-Monte Carlo method, we observe a convergence rate which is compatible with $\mathcal O(C(\log N_{part})^2 /N_{part})$ (with a saturation due to the difference between $f_{+,0}^\eps$ and $f_0^\eps$, studied above). For the Monte Carlo method, we observe a slower convergence, of the form $\mathcal O(1/\sqrt{N})$. Therefore, in the sequel we systematically use  the quasi-Monte Carlo method, with $4\times 10^6$ particles.
\begin{figure}[!htbp]
  \centerline{
  \subfigure[Quasi-Monte Carlo method]{\label{fig4a}\includegraphics[width=.6\textwidth]{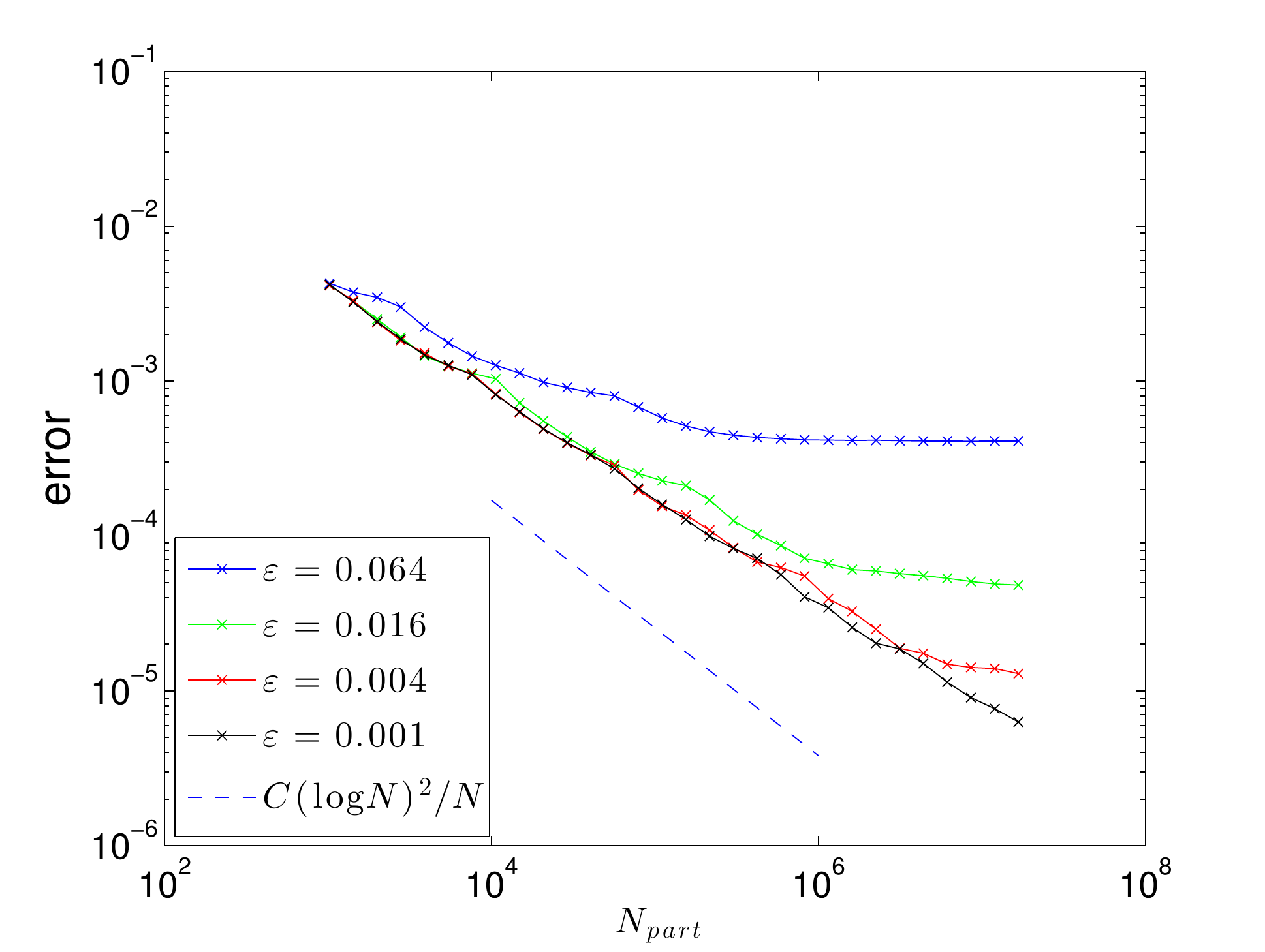}}\hspace*{-8mm}
  \subfigure[Monte Carlo method]{\label{fig4b}\includegraphics[width=.6\textwidth]{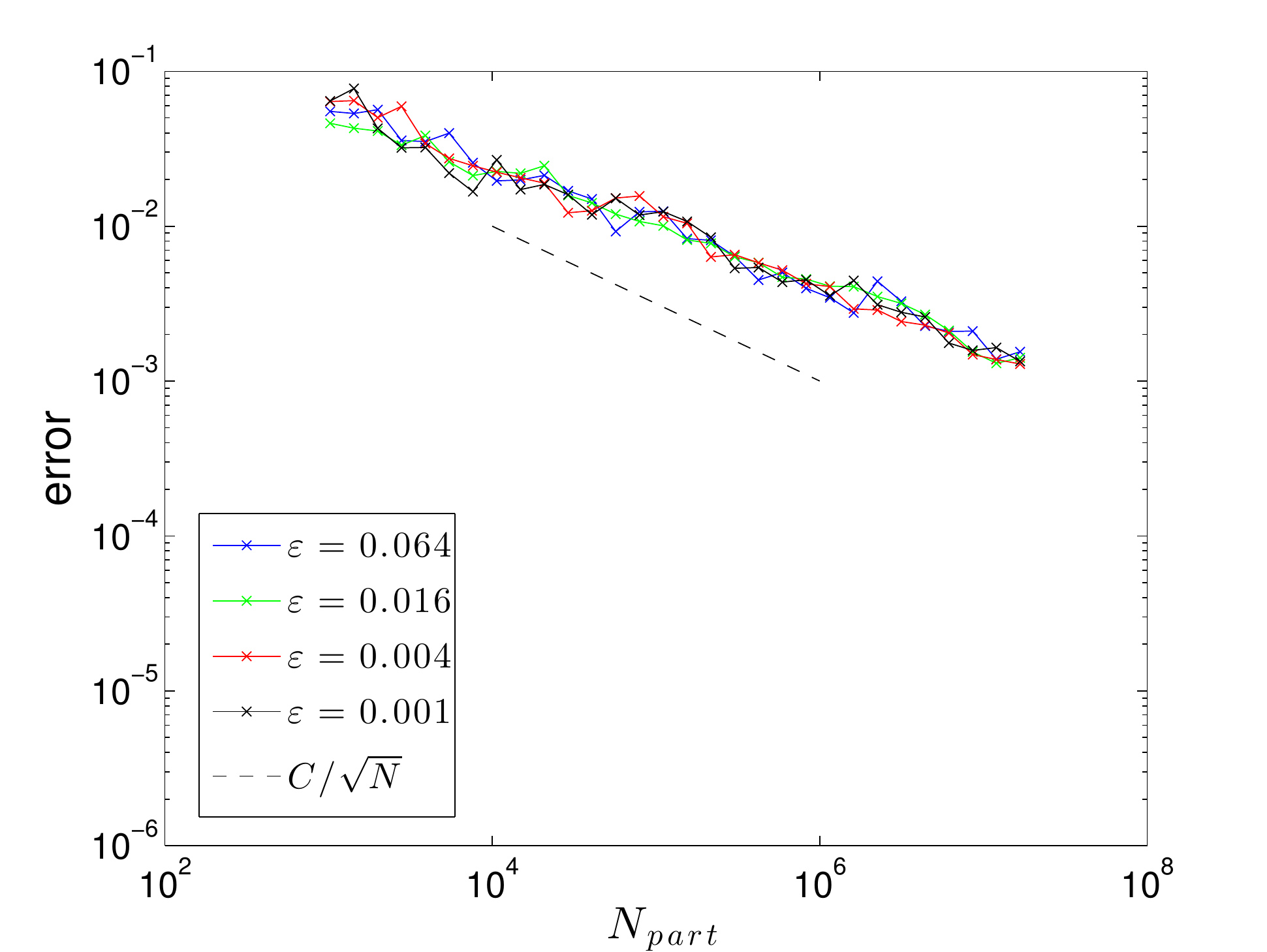}}}
    \caption{$L^1$ error on the initial density with respect to the number of particles for the quasi-Monte Carlo and the Monte Carlo methods}
\label{fig3}
\end{figure}
\subsection{Transport and transition phases}
\label{sub3}
In this section, we compare the dynamics computed with the quantum and the kinetic models. The initial data are the same as in the previous subsections: $\psi_0^\eps(x_1-x_1^0)$ given by \eqref{initquantum2} and $f_{+,0}^\eps$ defined by \eqref{f+0}. Here also, we take $(x^0_1,\xi_1^0, \xi^0_2)=(-2,1.3,0.1)$. The final time of the simulation is $t_f=4.5$.

\subsubsection*{Transport phase}
In a first step, in order to characterize the error made during the transport phase, we simulate the pseudo-graphene model, which displays the same transport properties as the graphene model but which induces no transition between plus and minus modes. We simulate the actions of two potentials:
\begin{equation}
\label{V1V2}
V_1(x_1)=\frac{1}{20}(x_1+10)^2\quad \mbox{and}\quad V_2(x_1)=\alpha\,\mathrm{atan}(2x_1+\frac{\pi}{2}),
\end{equation}
where $\alpha\approx 0.643$ has been adjusted such that, at the points where a particle is stopped by the potential barrier (resp. $x^*_1$ for $V_1$ and $x^*_2$ for $V_2$), we have $V'_1(x^*_1)=V_2'(x^*_2)$.
On Figure \ref{fig4}, we plot the error
\begin{align*}
\mathrm{error}=&\int \left| |\Pi^+\psi^\eps(t_f,x)|^2-\int{f_{+}^\eps}(t_f,x,\xi)d\xi\right|dx_1\\
&+\int \left| |\Pi^-\psi^\eps(t_f,x)|^2-\int{f_{-}^\eps}(t_f,x,\xi)d\xi\right|dx_1
\end{align*}
as a function of $\eps$. We observe two features. First, the error behaves as $\mathcal O(\eps^{1/2})$ in both cases. This can be explained by the fact that the derivative in $x$ and $\xi$ of the coherent wavepacket are of order $\eps^{-1/2}$ in $L^1$ norm. Second, the error for the transport in $V_2$ is 5 times higher than the error for the transport in $V_1$. This can be explained by the fact that $V_1$ is a harmonic potential, so the quantum transport operator coincides with its Wigner counterpart. Hence, for the transport by $V_1$, the main source of error comes from the fact that coherence effects between particles are not taken into account. For the transport by $V_2$, the main source of error comes from the replacement of the quantum transport operator by the classical one.
\begin{figure}[!htbp]
  \includegraphics[width=.7\textwidth]{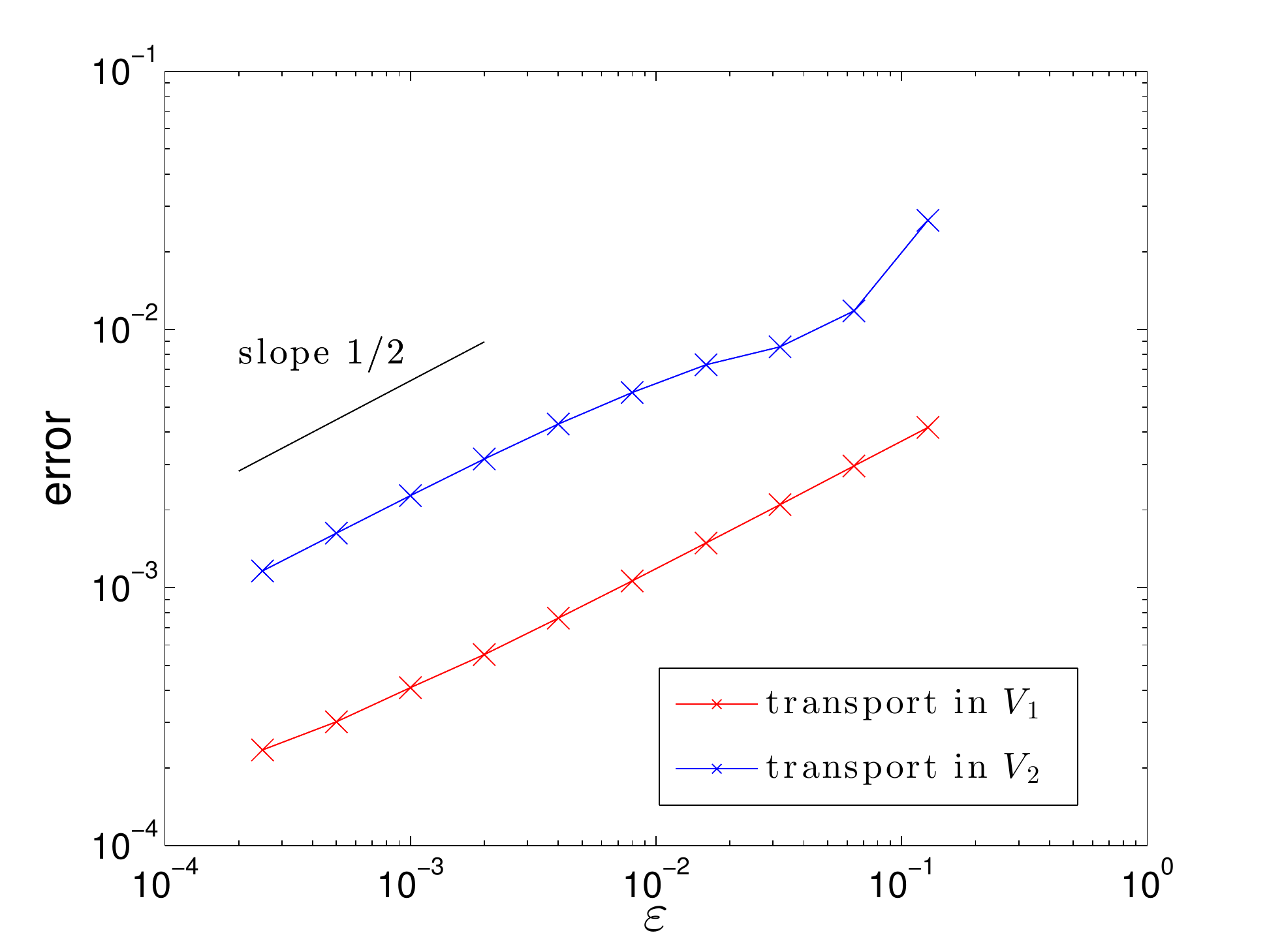}
    \caption{$L^1$ error on the densities for the pseudo-graphene model as a function of $\eps$, at time $t_f=4.5$, with the two potentials $V_1$ (harmonic) and $V_2$ (anharmonic).}
\label{fig4}
\end{figure}
\bs
\subsubsection*{Transition phase}
Let us now come back to the real graphene models, with transitions. We compute the transport by the potential $V_1$ of a coherent wavepacket initially at $(x^0_1,\xi_1^0, \xi^0_2)=(-2,1.3,0.1)$. We plot on Figure \ref{fig5bis} the plus and minus trajectories in the potential $V_1$. 
\begin{figure}[!htbp]
  \includegraphics[width=.8\textwidth]{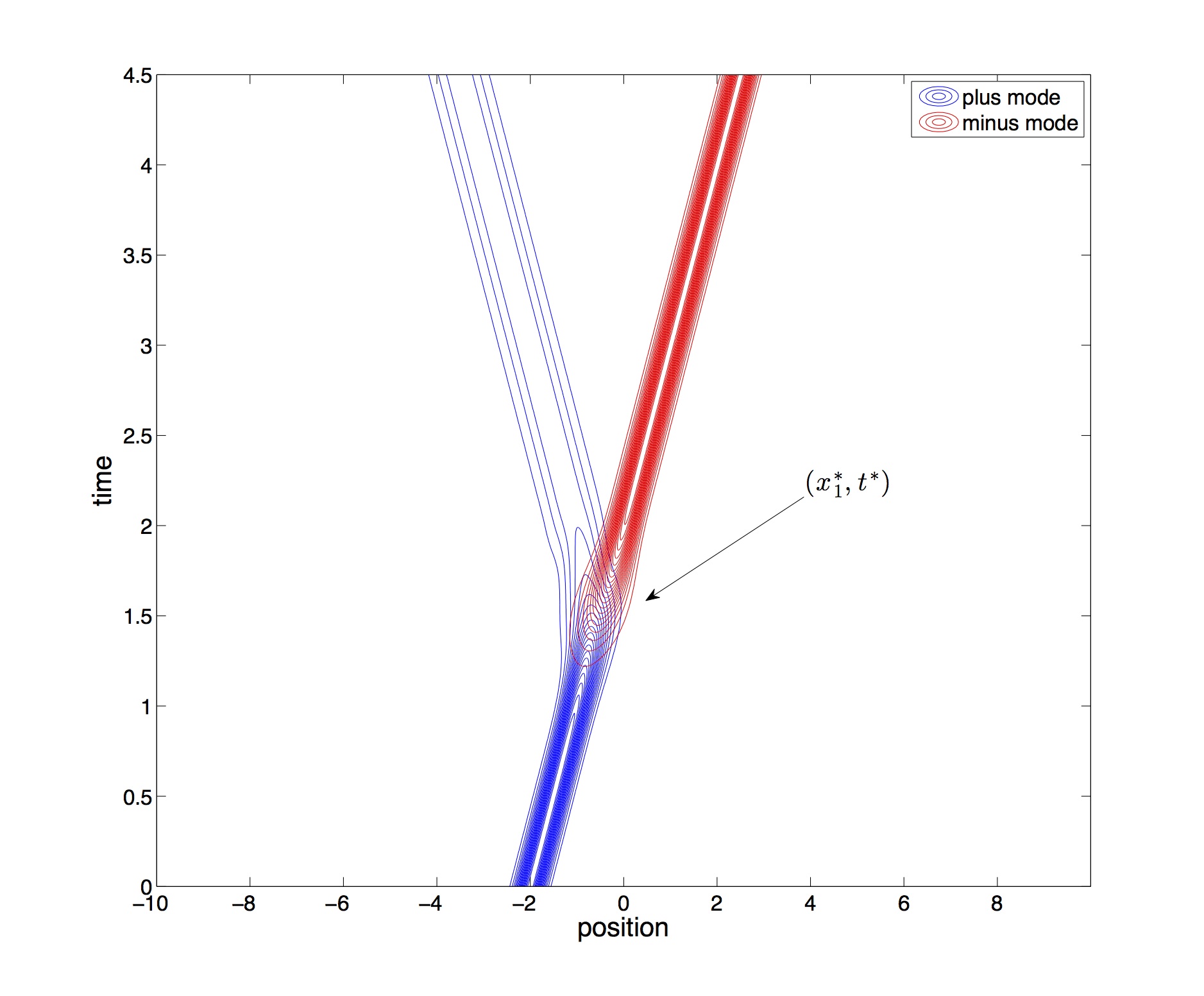}
    \caption{Contour plots of the plus and minus densities, quantum graphene model with $\eps=0.064$.}
\label{fig5bis}
\end{figure}
When the plus particles hits the potential barrier at the position $x_1^*$ and at time $t^*$, its momentum $\xi_1$ vanishes and a transition occurs: the mass is partially tranferred to a minus particle, with the transfer rate $T_\eps$. This phenomenology can be observed on Figure \ref{fig5bis} where, for $\eps=0.064$, the contour plots of plus and minus wavepackets are represented, computed with the quantum graphene model.  In order to check the formula \eqref{def:Teps}, we give in Table \ref{table2} the numerical transfer rate $T_\eps$ (i.e. the ratio of the transferred mass over the initial mass) and the quantity $-\frac{\pi (\xi_2^0)^2}{\eps \log T_\eps}$ (theoretically equal to $V'(x_1^*)$) for 6 values of $\eps$, computed with the quantum model and with the kinetic model. Note that $x_1^*$ is defined by $V_1(x_1^*)+\xi_2^0=V_1(x_1^0)+|\xi^0|$, which gives $x_1^*\approx -0.61$ and $V'_1(x_1^*)\approx 0.94$. The numerical results corroborate the predicted rates.
\begin{table}[h!]
\label{table2}
\caption{Transfer rates}\centering
\begin{tabular}{|c|c|c|c|c|c|c|}
\hline
$\eps$&0.128&0.064&0.032&0.016&0.008&0.004\\
\hline\hline
$T_\eps$ (quantum)&0.772&0.596&0.355&0.126&1.60$\times 10^{-2}$&2.55$\times 10^{-4}$\\
\hline
$T_\eps$ (kinetic)&0.769&0.593&0.351&0.123&1.52$\times 10^{-2}$&2.38$\times 10^{-4}$\\
\hline\hline
$-\frac{\pi (\xi_2^0)^2}{\eps \log T_\eps}$ (quantum)&0.9481&0.9486&0.9489&0.9491&0.9491&0.9492\\
\hline
$-\frac{\pi (\xi_2^0)^2}{\eps \log T_\eps}$ (kinetic)&0.9364&0.9387&0.9384&0.9383&0.9385&0.9415 \\
\hline\hline
\hline
\end{tabular}
\end{table}

But the rate $T_\eps$ is not the only parameter appearing in the transition phenomenon. We highlight the importance of the jump operator $J_\pm$, designed in order to ensure the energy conservation during the transition process. On Figure \ref{fig5}, we plot the $L^1$ error between the densities  for the quantum model and the densities for the kinetic model, with and without the jump. For clarity, we plot separately the error on the plus density (continuous red and blue lines) and the error on the minus density (dashed red and blue lines). The final time of the simulation is  still $t_f=4.5$. 

We observe on the minus density curves (dashed lines) that the jump process improves significantly the precision of the computation of the post-transition density: without the jump operator and for $\eps>10^{-2}$, the error on the minus density is of order 1. Note that, for $\eps<4\times10^{-3}$, the transfer rate becomes negligible and the major part of the error becomes the error made on the transport process, studied above: the plus density curves (continuous curves) are comparable to the red curve of Figure \ref{fig4} (the scales in these two figures are the same).
\begin{figure}[!htbp]
  \includegraphics[width=.7\textwidth]{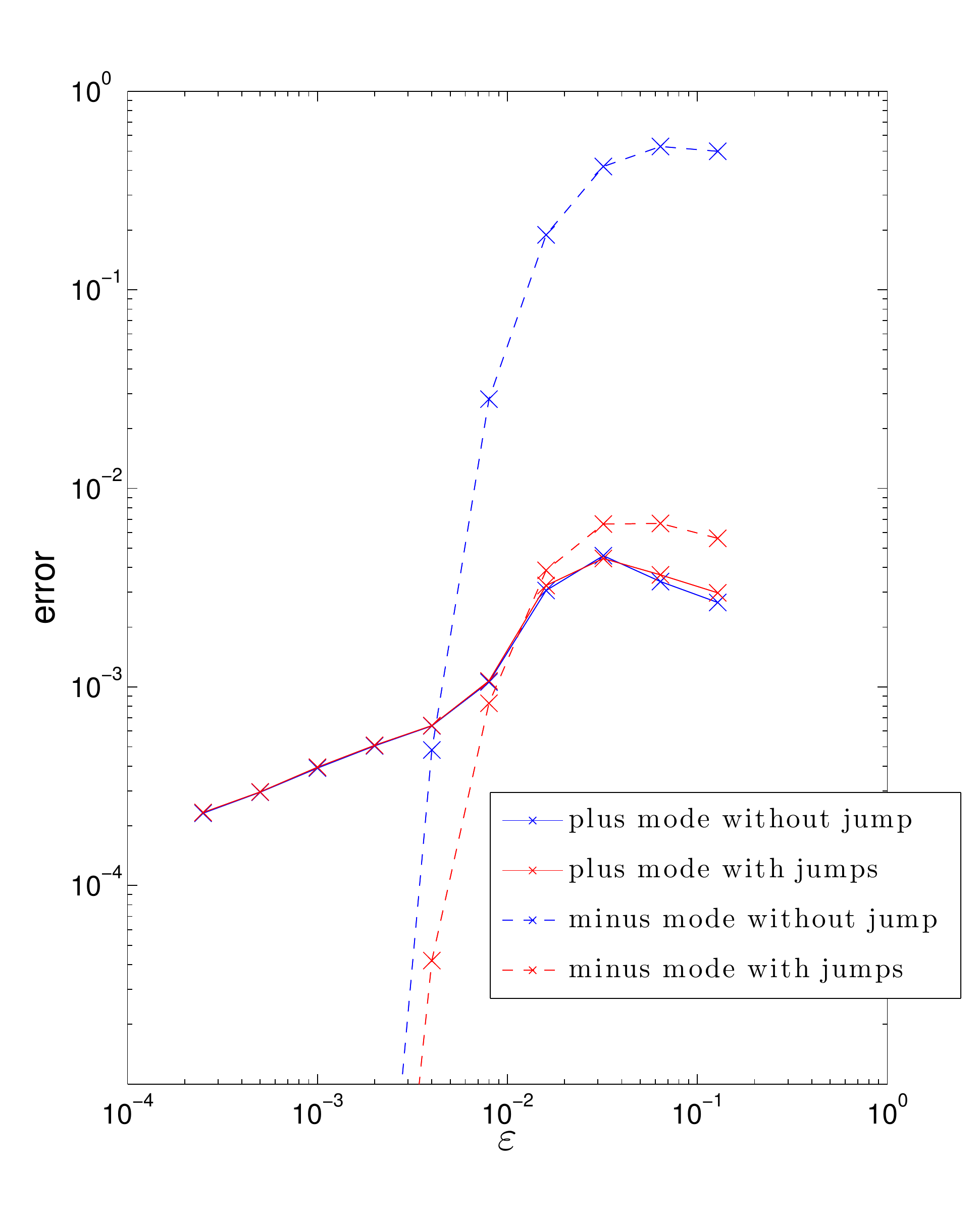}
    \caption{$L^1$ error on the densities  for the kinetic graphene model at time $t_f=4.5$, as a function of $\eps$, with and without the jump process.}
\label{fig5}
\end{figure}

\subsection{$N$-particles system}
\label{sub4}
In this last numerical experiments, we take the initial data as a mixture of coherent quantum states, with a double bump distribution function, i.e. the matrix density is \eqref{matinit} with
\begin{align*}
&&f^0(x^0,\xi^0)=\max\left(\cos(2\pi x)\un_{-7/4\leq x\leq -5/4}\,,\,\cos(2\pi x)\un_{-11/4\leq x\leq -9/4}\right)\qquad \\
&&\times(5/\pi)^{-1/2}\exp\left(-5(\xi_1-1.3)^2\right).
\end{align*}
The potential is $V_1$ defined in \eqref{V1V2}. We plot on Figure \ref{fig6} the plus and minus densities computed with the quantum and the kinetic graphene model, for $\eps=0.016$ at times $t=0,\,1.8,\,2.7$ and $4.5$ . The kinetic model is still used with $4\times 10^6$ particles, and the quantum model is now used with 5000 wavefunctions.
\begin{figure}[!htbp]
  \centerline{
  \subfigure[$t=0$]{\label{fig8a}\includegraphics[width=.6\textwidth]{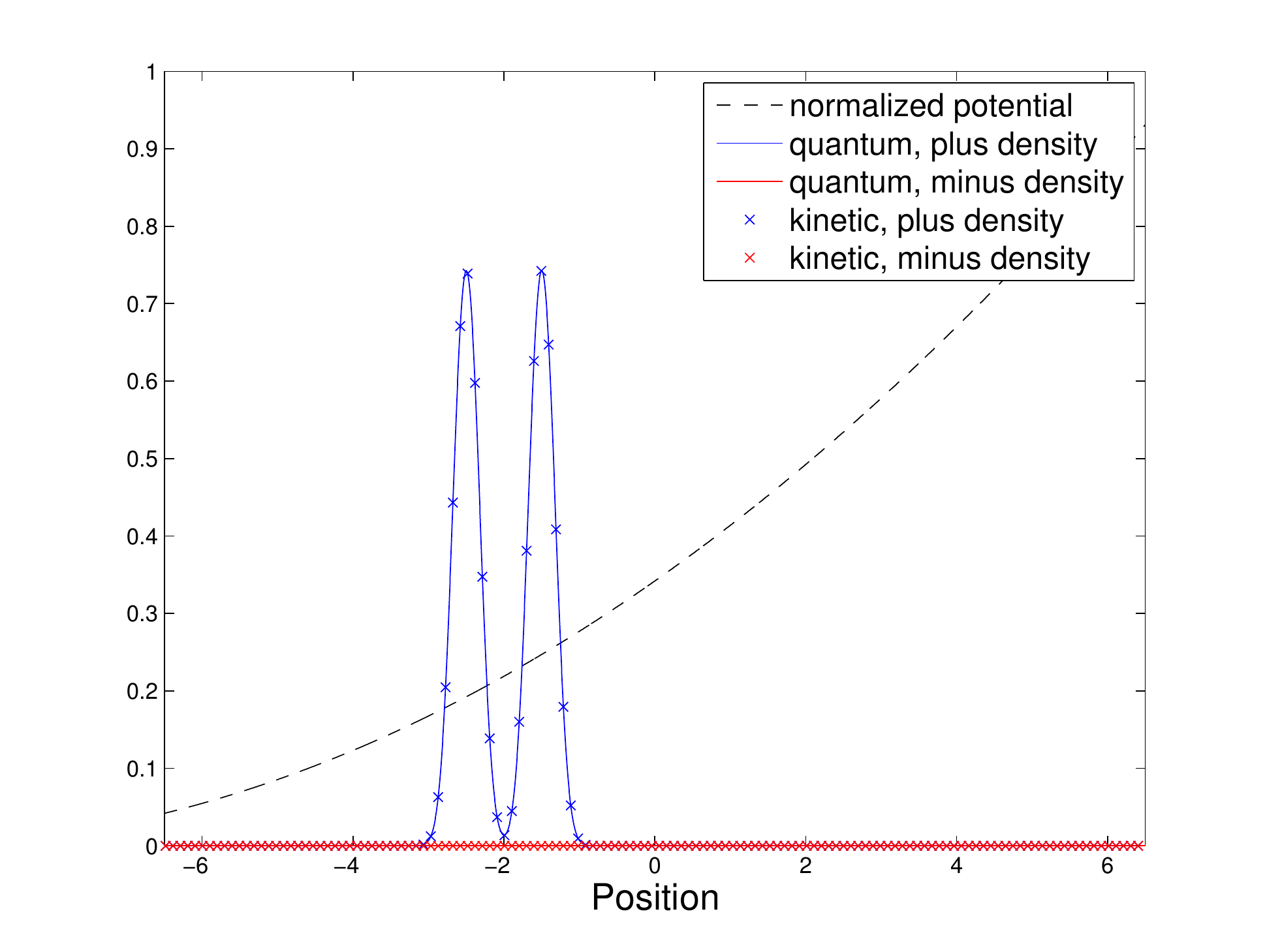}}\hspace*{-8mm}
  \subfigure[$t=1.8$]{\label{fig8b}\includegraphics[width=.6\textwidth]{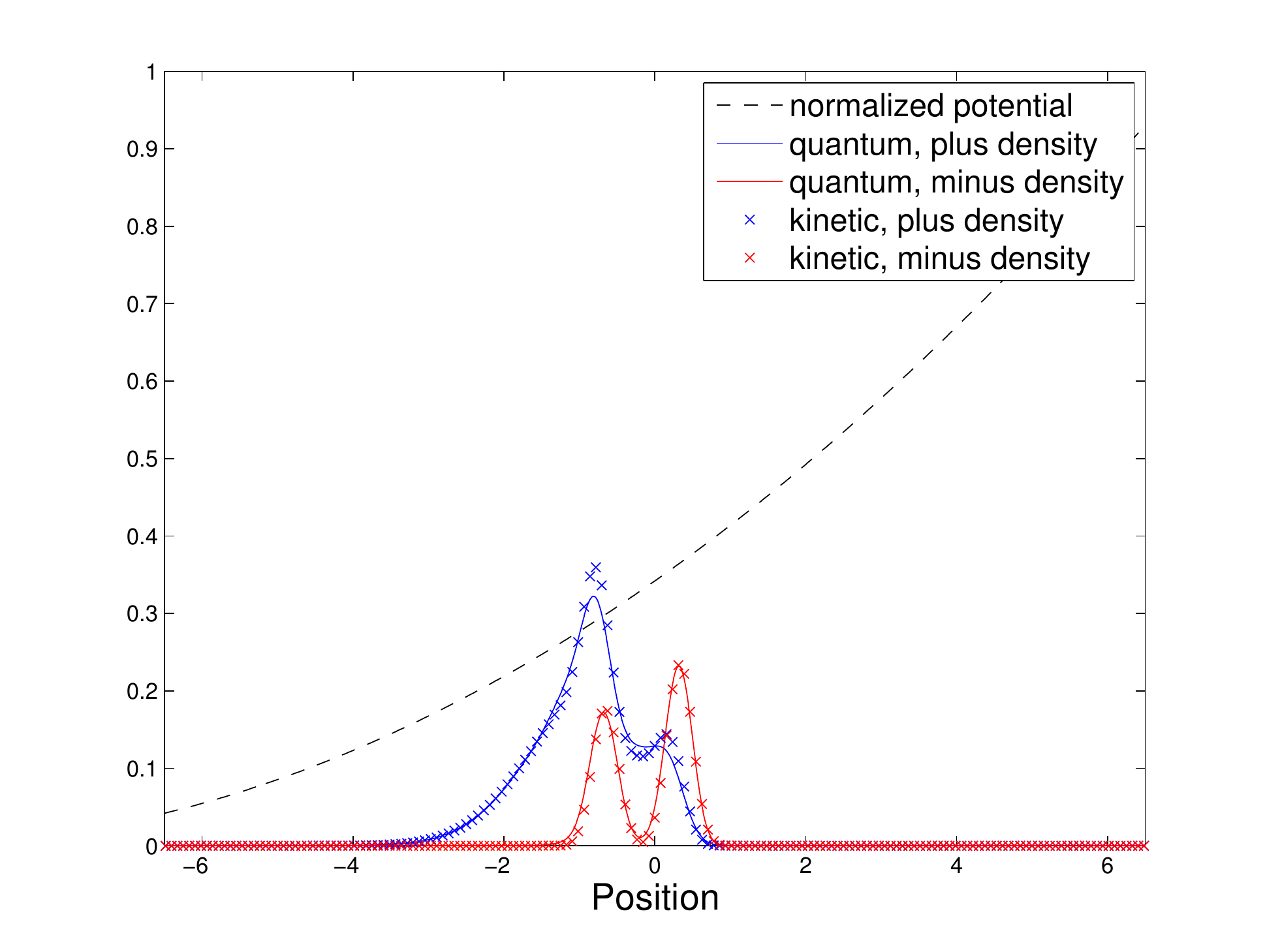}}}
  \centerline{\subfigure[$t=2.7$]{\label{fig8c}\includegraphics[width=.6\textwidth]{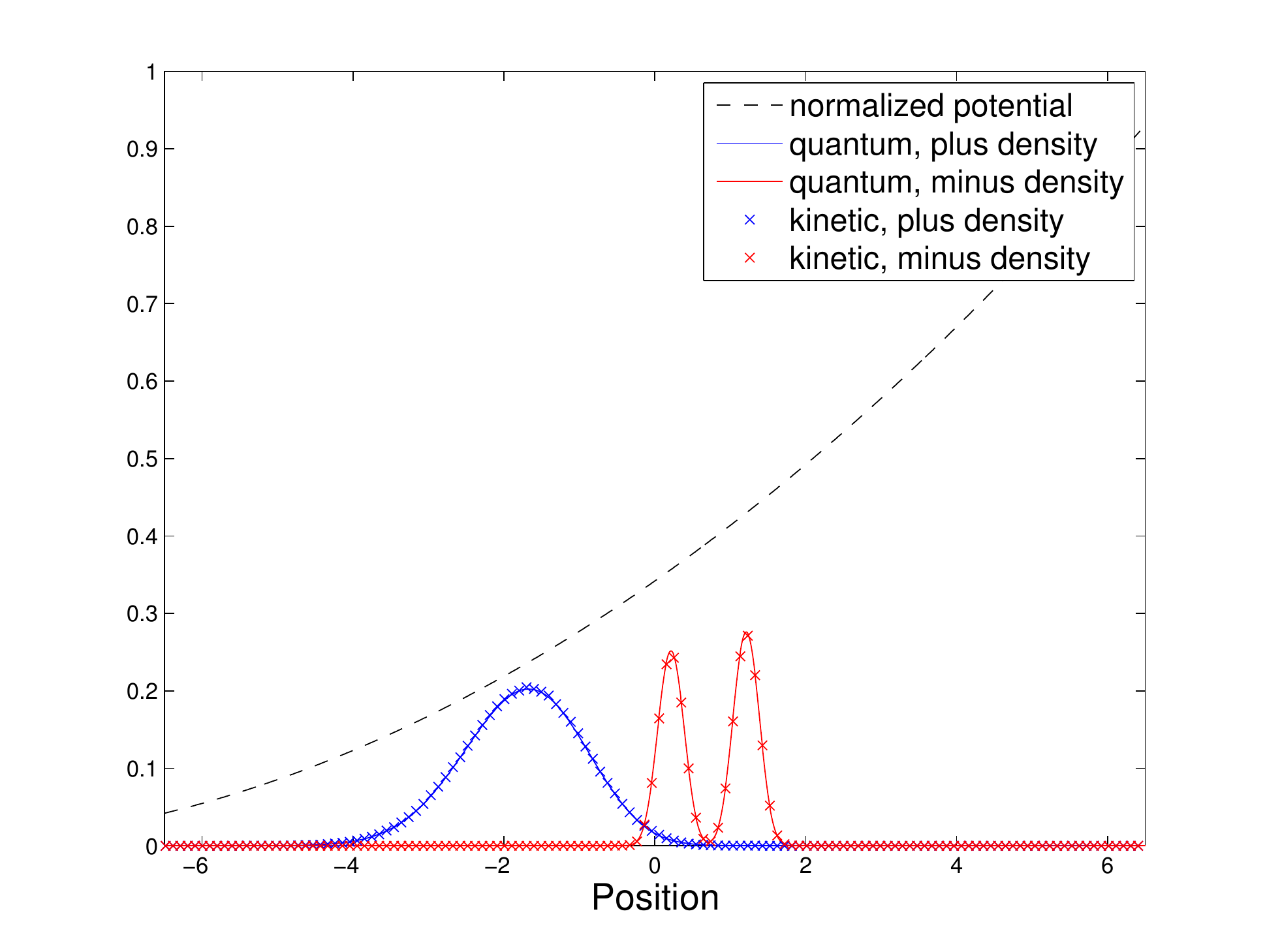}}\hspace*{-8mm}
  \subfigure[$t=4.5$]{\label{fig8d}\includegraphics[width=.6\textwidth]{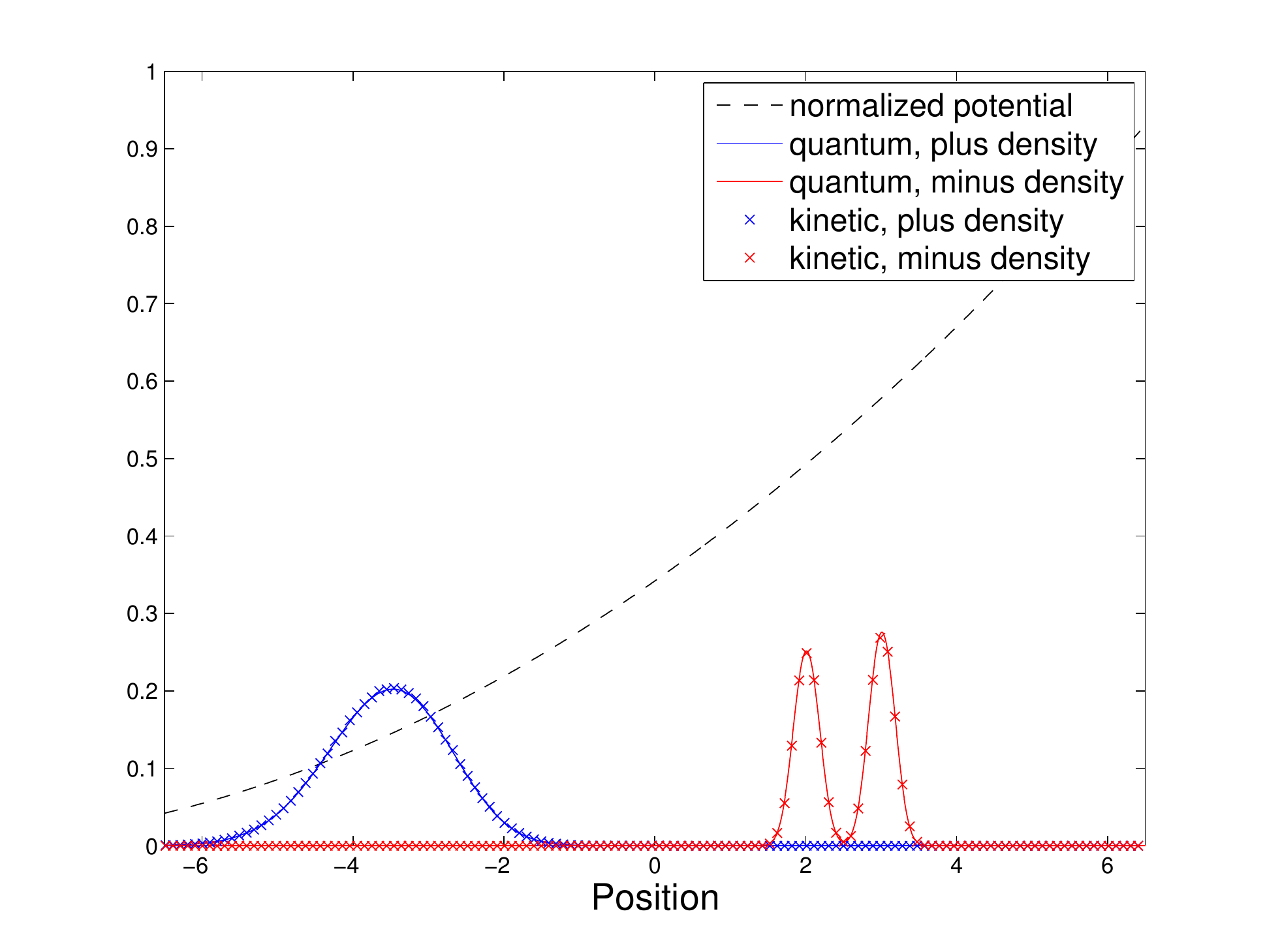}}}
    \caption{Propagation of a mixture of coherent wavepackets through a barrier for $\eps=0.016$.}
\label{fig6}
\end{figure}

Finally, in Table \ref{table3}, we provide the numerical tranfer rates between the plus and minus mixture of coherent states.
\begin{table}[h!]
\label{table3}
\caption{Transfer rates}\centering
\begin{tabular}{|c|c|c|c|c|c|c|}
\hline
$\eps$&0.128&0.064&0.032&0.016&0.008&0.004\\
\hline\hline
Transfer rate (quantum)&0.771&0.595&0.355&0.126&$1.62\times 10^{-2}$&$2.76\times 10^{-4}$\\
\hline
Transfer rate (kinetic)&0.768&0.592&0.350&0.123&$1.54\times 10^{-2}$&$2.57\times 10^{-4}$\\
\hline
\end{tabular}
\end{table}
Again, on Figure \ref{fig6} and in Table \ref{table3}, we observe a very good agreement between both models.


\section{Analytic justification of the algorithm}\label{sec:markov}

As emphasized  in the introduction, the algorithmic representation of the solutions of the kinetic equations~(\ref{eq:kinapp}) comes from a representation of these solutions via a Markov semi-group. We first present  this semi-group in Section~\ref{sec:description} and explain its connection with system~(\ref{eq:kinapp}). In particular, we reduce the proof of Theorem~\ref{theo:main} to a result on this semi-group, which will be the subject of Section~\ref{sec:proof}. For the convenience of the reader, the proofs of the two main results of this section are postponed in Sections~\ref{sec:proof1} and~\ref{sec:proof}, just after Section~\ref{sec:prelim} which is devoted to preliminaries. 

\subsection{The Markov semi-group description}\label{sec:description}
We consider the  Hamiltonian flows
$$\Phi^t_+= \left(x_+^t(x,\xi),\xi_+^t(x,\xi)\right)\;\;{\rm and}\;\;
\Phi^t_-= \left(x_-^t(x,\xi),\xi_-^t(x,\xi)\right)$$ with 
$\left(x^0_{\pm} (x,\xi),\xi^0_{\pm}(x,\xi)\right)=(x,\xi)$ and   
\begin{equation}\label{eq:clastraj}
\left\{\begin{array} l
\displaystyle {d\over dt} x^t_+=  {\xi^t_+ \over |\xi^t_+|}\;\;{\rm and}\;\;{d\over dt} \xi^t_+ = - \nabla V(x^t_+),\\[3mm]
\displaystyle {d\over dt} x^t_-= - {\xi^t_- \over |\xi^t_-|}\;\;{\rm and}\;\;{d\over dt} \xi^t_- =  -\nabla V(x^t_-).
\end{array}\right.
\end{equation}
As long as $\xi\not=0$,  the smoothness of the Hamiltonians $|\xi|\pm V(x)$  yields local existence and uniqueness of the trajectory passing through $(x,\xi)$ for any $x\in\R^2$. However, it may happens that $\xi^t_\pm\Tend{t}{ t^*}0$ for some $t^*\in\R$ and some index $+$ or $-$. If at the point $x^{t^*}_\pm$, the assumption~(\ref{ass:V}) is satisfied (that is if $\nabla V(x^{t^*}_\pm)\not=0$), then one can prove that there exists a unique continuation to the map $t\mapsto \Phi^t_\pm$ when $t>t^*$ (see Proposition~1 in~\cite{FG2} and Proposition~\ref{prop:trajectory} below where a precise statement and a proof are given for the convenience of the reader).
  As a consequence, the assumption~(\ref{ass:V}) guarantees the existence and uniqueness of the solutions to~(\ref{eq:clastraj}). However, these trajectories are no longer smooth  when passing through $\xi=0$; more precisely,  the vector $\dot \Phi^t$ has a discontinuity at $t=t^*$ whenever $\xi^{t^*}_\pm=0$. It is also interesting to notice that
if the latter assumption~(\ref{ass:V}) fails at $x^{t^*}_\pm$, then uniqueness is no longer guaranteed.

\medskip

\begin{remark}\label{rem:Sigma}
For any trajectory $\Phi^t_\pm$, the  quantity $|\xi^t_\pm|$ reaches its minimum when 
$$ {d\over dt} \left( |\xi^t_\pm|^2\right) =- \xi_\pm^t \cdot \nabla V(x^t_\pm)=0,$$
i.e. on points of the set~$\Sigma$ define in~(\ref{def:Sigma}).
\end{remark}

We are now going to introduce a branching process between both types of trajectories.
We  attach the labels $+$ and $-$ to  the phase space and for points 
$$(x,\xi,j)\in\R^{4}_\pm:= \R^{4}\times\{+1,-1\},$$ we consider trajectories
$$
{\mathcal T}_{\eps}^{(x,\xi,j)}:[0,+\infty)\rightarrow \R^{4}_\pm,
$$
which combine deterministic classical transport and random jumps 
between the levels at the manifold $\Sigma$. 
More precisely,
we set 
$$
{\mathcal
T}_{\eps}^{(x,\xi,j)}(t)=\left(\Phi^t_j(x,\xi),j\right)$$
 as long as
$\Phi^t_{j}(x,\xi)\not\in \Sigma$. Whenever the deterministic
flow $\Phi^t_j(x,\xi)$ hits the manifold~$\Sigma$ at a point
$(x^*,\xi^*)$, a random jump from 
$$(x^*,\xi^*,j)\;\;{\rm  to}\;\; \left(J_j(x^*,\xi^*),-j\right)$$ occurs with
probability $T^\eps(x^*,\xi^*)$.

\medskip

The jump aims at preserving at order $\mathcal O(\eps)$  the energy of the trajectory for points~$\xi$ where the transfer coefficient is relevant, that is points where $|\xi|\leq R\sqrt\eps$ according to Remark~\ref{UepsR}.  This is an important ingredient of the proof (see Remark~\ref{rem:energy}). Indeed, set 
$$E_\pm(x,\xi):=\pm|\xi|+V(x),$$
then, if $|\xi|\leq R\sqrt\eps$, at a jump from~$+$ to~$-$, we have 
$$
E_-\left(J_+(x,\xi)\right)  
 =  E_+(x,\xi)+\mathcal O(|\xi|^2)=E_+(x,\xi)+\mathcal O(R^2\eps). $$
Note that the importance of the jump has been illustrated numerically in~\cite{FL3} in the context of molecular propagation ; this jump was already performed in~\cite{HJ} for the construction of gaussian wave packets which are approximated solutions of a Schr\"odinger equation with matrix valued potential presenting a conical intersection.

\medskip

Since $\nabla V(x)\not=0$, the trajectories which reach the manifold~$\Sigma$ arrive there transversally to~$\Sigma$. As a consequence, in each  bounded time interval~$[0,T]$,
each path
$$
(x,\xi,j)\to {\mathcal T}_{\eps}^{(x,\xi,j)}(t)
$$
has a finite number of jumps and remains in a bounded region of the phase space~$\R^{4}_\pm$. Besides, away from
the jump manifold $\Sigma\times\{-1,+1\}$ each path is smooth.

\medskip 

Following \cite{dynkin,liggett}, we define the function $P_{\eps}(x,\xi,j;t,\cdot)$ as the function which associate to a measurable set $\Gamma\subset \R^4_\pm$  the probability $P_{\eps}(x,\xi,j;t,\Gamma)$  of
being at time $t$ in  $\Gamma$ having started in $(x,\xi,j)$. And we define a 
 time-dependent Markov
process $(\mathcal L_{\eps}^t)_{t\ge0}$ acting on bounded
measurable scalar functions $f: \R^{4}_\pm\to\C$ by
$$
({\mathcal L}_{\eps}^t\,f)(x,\xi,j):=
\int_{\R^{4}\times\{-1,+1\}}
f(q,p,k)\,P_{\eps}(x,\xi,j;t,d(q,p,k)).
$$
An explicit expression of ${\mathcal L}_\eps ^t$ is written on short interval times close to jump points in Section~\ref{sec:just} (see equations~(\ref{eq:g+}) and~(\ref{eq:g-}) below). 

\medskip

In order to define its action on Wigner functions, we need to identify pairs of functions $(a_+,a_-)$ with some function $a$ on $\R^4_\pm$, which is done by the identification 
\begin{equation}\label{eq:ident}
a(x,\xi,\pm1)=a_\pm(x,\xi),\;\;\forall (x,\xi)\in\R^4.
\end{equation}
Through this identification, the action of $({\mathcal L}^t_{\eps})_{t\ge0}$ on ${\mathcal C}_0^\infty(\R^4\setminus\{\xi=0\})$ is  given  by 
$$
({\mathcal L}_{\eps}^t \,a)(x,\xi):=\left(
\left({\mathcal L}^t_{\eps} a\right) (x,\xi,+1)\,,
\,\left({\mathcal L}^t_{\eps} a\right) (x,\xi,-1)\right),
$$
We extend this action to ${\mathcal D}'\left(\R^4\setminus\{\xi=0\}\right)$ by duality by setting 
$$\forall f\in{\mathcal D}'(\R^4),\;\;\forall a\in{\mathcal C}_0^\infty(\R^4\setminus \{\xi=0\}),\;\;  \langle ({\mathcal L}^t_{\eps} f)(\pm1),a \rangle= \langle f,({\mathcal L}^t_{\eps} a)(\pm1)\rangle.$$

\begin{proposition}[Resolution of the kinetic model]\label{prop:Markov}
Set $w^\eps(0)=\left( w^\eps_+(0),w^\eps_-(0)\right)$ and 
assume that (1) and (2) of Assumptions~\ref{ass} are satisfied. Then 
the function
$f^\eps(t,x,\xi)=({\mathcal L}_{\eps}^t w^\eps(0))(x,\xi)$ is the unique solution to
system~(\ref{eq:kinapp}) in ${\mathcal D}'\left(\R^4\setminus\{\xi=0\}\right)$.
\end{proposition}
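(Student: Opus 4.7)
The plan is to verify both existence and uniqueness by a local analysis near each transverse crossing of $\Sigma$. First, on the open set $\R^4\setminus\Sigma$ no jump of the Markov process can occur: between two successive crossings, the push-forward $(\mathcal{L}^t_\eps w^\eps(0))(\cdot,\cdot,\pm 1)$ coincides with the push-forward of $w^\eps_\pm(0)$ (or of the most recent post-jump state) by the Hamiltonian flow $\Phi^t_\pm$ and hence solves the corresponding free Liouville equation in $\mathcal{D}'(\R^4\setminus\Sigma)$. Since $K_\pm$ are supported on $\Sigma$, this matches \eqref{eq:kinapp} in that region.

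The substantive point is the contribution across $\Sigma$. Fix $(x_0,\xi_0)\in\Sigma$ with $|\xi_0|$ small enough that \eqref{eq:passingthroughSigma} applies, and set $\phi(x,\xi):=\xi\cdot\nabla V(x)$, so that $\Sigma=\{\phi=0\}$ locally. I would foliate a neighbourhood $\Omega$ by integral curves of $H_+$, parameterized by $(s,y)$ with $y\in\Sigma$, for which the coarea-type change of variable reads
\[
dx\,d\xi\;=\;\frac{|\nabla\phi(y)|}{|H_+\phi(y)|}\,ds\,d\Sigma(y),
\]
and identifies $H_+\phi/|\nabla\phi|=\Lambda_+$, as computed in \eqref{eq:passingthroughSigma} and \eqref{def=theta+}, as the signed transverse crossing speed. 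For a test function $a\in\mathcal{C}^\infty_c(\R\times\Omega\times\{-1,+1\})$ I would then differentiate $\langle f^\eps(t),a(t)\rangle$ in $t$ using the Markov property: the deterministic transport part of the semi-group contributes, upon integration by parts, the transport operators $\pm\frac{\xi}{|\xi|}\cdot\nabla_x-\nabla V\cdot\nabla_\xi$, whereas the jump part contributes, on the $+$ sheet, an action $|\Lambda_+|T_\eps\bigl(a(J_+(\cdot),-1)-a(\cdot,+1)\bigr)$ concentrated on $\Sigma$. Transferring this onto $f^\eps$ via the formula above, and checking signs using $\Lambda_+<0$ and $H_+\phi<0$ (so the relevant trace is the in-going one), produces the distributional identity
\[
\pa_t f^\eps_+ + H_+ f^\eps_+ = \Lambda_+\delta_\Sigma\bigl(T_\eps f^\eps_{+,\Sigma,\mathrm{in}}-(T_\eps f^\eps_{-,\Sigma,\mathrm{in}})\circ J_+\bigr),
\]
which is the first equation of \eqref{eq:kinapp}; the $-$ mode is identical.

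For uniqueness, given any other weak solution $\tilde f^\eps$ of \eqref{eq:kinapp} with the same initial data, the transversality of $H_\pm$ to $\Sigma$ guarantees that its in-going and out-going traces on $\Sigma$ are well defined in $L^1_{\mathrm{loc}}$ and linked by the jump relation encoded by $K_\pm$. Testing $\tilde f^\eps$ against the backward evolution $t\mapsto\mathcal{L}^{T-t}_\eps a$ for $a\in\mathcal{C}^\infty_c(\R^4\setminus\{\xi=0\})$, the dual action of the generator cancels the equation, so $t\mapsto\langle\tilde f^\eps(t),\mathcal{L}^{T-t}_\eps a\rangle$ is constant in $t$; evaluating at $t=0$ and $t=T$ gives $\tilde f^\eps(T)=\mathcal{L}^T_\eps w^\eps(0)=f^\eps(T)$ in $\mathcal{D}'(\R^4\setminus\{\xi=0\})$.

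The hard part will be the second step: matching the jump contribution of the Markov generator to the precise distributional form of $K_\pm$ requires careful bookkeeping of signs, of which side of $\Sigma$ provides the in-going trace, and of the Jacobian coming out of the foliation---everything must line up with \eqref{def=theta+} and \eqref{eq:Khe}--\eqref{eq:Keh}. This also involves making sense of the composition $f^\eps_{\mp,\Sigma,\mathrm{in}}\circ J_\pm$ when $J_\pm$ does not map $\Sigma$ to itself, which will require first extending the trace appropriately (e.g.\ along the flow) and verifying that the resulting object is well defined as a distribution supported on $\Sigma$.
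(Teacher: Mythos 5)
Your overall strategy for existence is essentially the paper's: compute the generator of the Markov semigroup, match its drift part with the transport terms and its jump part with the $\delta_\Sigma$-supported collision kernel, and use the transversality $H_\pm\phi\neq 0$ to make the boundary contribution precise. You correctly identify the crucial geometric quantity $H_+\phi/|\nabla\phi|=\Lambda_+$, which the paper proves as the third identity in Lemma~\ref{lemmetechnique} via Green's formula (\emph{i.e.}\ $H_+\cdot n=\Lambda_+$). However, your coarea-type change of variables is written with the Jacobian inverted. Parameterizing a neighbourhood of $\Sigma$ by $(s,y)\mapsto\Phi^s_+(y)$, $y\in\Sigma$, the (volume-preserving) Hamiltonian flow gives $dx\,d\xi=|H_+\cdot n|\,ds\,d\Sigma(y)=|\Lambda_+(y)|\,ds\,d\Sigma(y)$, not $\frac{|\nabla\phi|}{|H_+\phi|}\,ds\,d\Sigma=\frac{1}{|\Lambda_+|}\,ds\,d\Sigma$ as you wrote. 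If you carry your version through the integration by parts, the prefactor of $\delta_\Sigma$ in the identity $(\pa_t+H_+)f^\eps_+=\dots$ will come out as $1/\Lambda_+$ rather than $\Lambda_+$, which does not match \eqref{eq:Khe}. This is fixable, but as stated the computation is wrong; the paper's direct identity $H_+\bigl(\mathbf 1_{\tau_+<0}\bigr)=-\Lambda_+\delta_\Sigma$ in Lemma~\ref{lemmetechnique} encodes exactly the corrected Jacobian.

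Your uniqueness argument is a genuinely different route from the paper's. You propose a duality (Holmgren-type) argument: test any weak solution against the backward Markov evolution $\mathcal L^{T-t}_\eps a$ and conclude the pairing is constant. This is clean in principle but has a concrete gap the paper deliberately sidesteps: $\mathcal L^{T-t}_\eps a$ is \emph{not} a smooth test function. Because the semigroup contains a hard jump at $\Sigma$ (not a Poisson-rate jump), $\mathcal L^{T-t}_\eps a$ is discontinuous across the flow-pullback of $\Sigma$ even for $a\in\mathcal C^\infty_c$. A distributional solution of \eqref{eq:kinapp} can a priori only be paired with $\mathcal C^\infty_c$ test functions, so the constancy of $t\mapsto\langle\tilde f^\eps(t),\mathcal L^{T-t}_\eps a\rangle$ is not justified without either (i) a density/regularization argument for the test class, or (ii) establishing that weak solutions have $L^1$-traces on $\Sigma$ and enough transversal regularity to pair against piecewise-smooth functions. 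Step (ii) is essentially what the paper does: it first proves (Lemma~\ref{lem:trace}) that any solution to the kinetic system admits ingoing and outgoing $L^1$-traces on $\Sigma$, writes the solution explicitly in terms of those traces (equation \eqref{eq:formule}), and then shows via Lemma~\ref{lemmetechnique} that the equation forces the jump relation $(f^\eps_+)_{\Sigma,out}=(1-T_\eps)(f^\eps_+)_{\Sigma,in}+T_\eps(f^\eps_-)_{\Sigma,in}\circ J_+$, which determines the solution uniquely. If you wish to keep the duality route, you must first supply an argument analogous to Lemma~\ref{lem:trace} to enlarge the admissible test class; otherwise the paper's trace-based uniqueness is the more economical path. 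Finally, your concern about $J_\pm$ not mapping $\Sigma$ to itself exactly is legitimate (the image is $\Sigma+\mathcal O(|\xi|^2)$), and the paper addresses it by evaluating the solution (not the trace) at the jumped point in the semigroup formulas \eqref{eq:g+}--\eqref{eq:g-}, which is well defined away from $\Sigma$.
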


Then, 
  Theorem~\ref{theo:main} is a  corollary of the following proposition.

\begin{proposition}[Approximation by the semi-group]\label{prop:FL}
Set $w^\eps(0)=\left( w^\eps_+(0),w^\eps_-(0)\right)$ and 
assume for $\chi\in{\mathcal C}_c^\infty([0,T],\R)$ and $a\in{\mathcal C}_0^\infty(\R^4\setminus\{\xi=0\})$, Assumptions~\ref{ass} are satisfied,  then,
there exist positive constants $C, \eps_0>0$ such that for all
$0<\eps<\eps_0$, 
\begin{equation*}
 \qquad\left| {\rm
tr}\int_{\R^{2d+1}}\,\chi(t)
\left(w^\eps_{\pm}(t,x,,\xi)-({\mathcal L}_{\eps}^t w^\eps(0))(x,\xi,\pm 1)
\right)\,a(x,\xi) \, d x\,d \xi\,d t
\right| \le C \,\eps^{1/8}.
\end{equation*}
\end{proposition}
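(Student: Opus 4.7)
The strategy is to use the duality identity $\langle \mathcal{L}^t_\eps f,a\rangle = \langle f,\mathcal{L}^t_\eps a\rangle$ to transfer the comparison onto the test symbol, then to slice the time interval around the unique crossing of $\Sigma$ and argue separately on the adiabatic and transition segments. Identifying pairs with functions on $\R^4\times\{\pm 1\}$ via~\eqref{eq:ident} and writing $a_+(x,\xi,j) := a(x,\xi)\delta_{j,+1}$, the integrand at time $t$ for the plus sign becomes
$$\mathrm{tr}\langle w^\eps_+(t),a\rangle - \int w^\eps_+(0)(\mathcal{L}^t_\eps a_+)(\cdot,+1)\,dx\,d\xi - \int w^\eps_-(0)(\mathcal{L}^t_\eps a_+)(\cdot,-1)\,dx\,d\xi.$$
Since $\mathcal{L}^t_\eps$ is a Markov semi-group, $\|(\mathcal{L}^t_\eps a_+)(\cdot,-1)\|_\infty \leq \|a\|_\infty$, so the third term is bounded by $\|w^\eps_-(0)\|_{L^1}\|a\|_\infty = \mathcal O(\eps^{1/8})$ by Assumption~\ref{ass}(3). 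The task reduces to comparing the first two terms, for which I apply the fundamental theorem of calculus to the interpolant $F^\eps(s) := \mathrm{tr}\langle w^\eps(s),(\mathcal{L}^{t-s}_\eps a_+)(\cdot,+1)\rangle$ and bound $\dot F^\eps$.

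Fix a scale $\delta = \eps^\alpha$ with $\alpha\in(0,1/2)$ to be optimized, and split $[0,t]$ into an adiabatic regime, in which the backward Markov trajectories issued from $\mathrm{supp}\,a$ stay at distance at least $R\sqrt\eps$ from $\{\xi=0\}$, and a transition window of width $\sim\delta$ around the (unique, by Assumption~\ref{ass}(3)) passage through $\Sigma$. On the adiabatic regime, Remark~\ref{UepsR} ensures $T_\eps$ is exponentially small, so $\mathcal{L}^s_\eps a_+$ coincides with the classical pullback $a_+\circ\Phi^s_+$. A standard semi-classical Egorov estimate for the diagonal Hamiltonians $\pm|\xi|+V$ and the projector $\Pi^+$ (smooth on $\{\xi\neq 0\}$, as exploited in~\cite{GMMP}) then yields $\dot F^\eps(s) = \mathcal O(\eps)$ on these intervals, with a constant depending on $R$.

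The core of the argument is the transition window, where the quantum Landau--Zener transfer of mass must be matched to the jumps of the Markov semi-group. Here I invoke the microlocal normal form for conical intersections from~\cite{FG1,FG2}: in a $\sqrt\eps$-tube around the incoming trajectory, the quantum evolution reduces at leading order to a family of Landau--Zener model problems parametrized by the transverse phase-space variables, whose explicit asymptotic analysis furnishes precisely the three ingredients of $K_\pm$ in~\eqref{eq:Khe}--\eqref{eq:Keh}: the transfer rate $T_\eps$ of~\eqref{def:Teps}, the jump shift $J_\pm$ ensuring energy conservation to order $|\xi^*|^2 = \mathcal O(\eps)$, and the Jacobian $\Lambda_\pm$ from the change of measure across $\Sigma$ implied by~\eqref{eq:passingthroughSigma}. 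Gluing these to the adiabatic Egorov estimates on both sides of the window produces an error of the form $\mathcal O(\sqrt\eps/\delta) + \mathcal O(\delta^{1/2})$ per transition, where the first term accounts for the adiabatic-to-transition matching at $|\xi|\sim R\sqrt\eps$ and the second for the intrinsic Landau--Zener reduction error.

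The main obstacle will be to carry out this matching uniformly in the family of trajectories parametrized by $\mathrm{supp}\,a$, and to identify the resulting quantum transition with the Markov jump prescribed by~\eqref{eq:kinapp} on every trajectory rather than on a single one --- this is where the two-microlocal machinery of~\cite{FG1} enters essentially. Once this is done, optimizing $\delta$ (balancing $\sqrt\eps/\delta$ against $\delta^{1/2}$ gives $\delta = \eps^{1/3}$ and a per-crossing error $\mathcal O(\eps^{1/6})$) produces an error dominated by the $\mathcal O(\eps^{1/8})$ already arising from Assumption~\ref{ass}(3); integrating $\dot F^\eps$ in $s$ against $\chi(t)$ yields the claimed global $\mathcal O(\eps^{1/8})$ bound.
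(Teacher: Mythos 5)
Your overall architecture --- split the time interval into an adiabatic regime and a transition window, run an Egorov argument on the adiabatic part, and invoke the normal form and Landau--Zener analysis of \cite{FG1,FG2} on the transition window --- is in the right spirit, and the preliminary reduction (duality, pure-state decomposition implicit in bounding the $w^\eps_-(0)$ term by Assumption~\ref{ass}(3)) is a reasonable start. However, there is a concrete gap in the error accounting, and it misidentifies the source of the exponent $1/8$.

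First, the per-crossing error $\mathcal O(\sqrt\eps/\delta)+\mathcal O(\delta^{1/2})$ is asserted but not derived, and it does not match the actual error sources that the two-scale analysis produces. In the paper, the matching parameter is a momentum cutoff $R$ (the hopping region is $\{|\xi|\le R\sqrt\eps\}$), and the remainders come out as $\mathcal O\!\left(R^{-5}\eps^{-1/2}\right)$ (from pushing the off-diagonal terms in the Egorov argument to second order in Lemma~\ref{lem:B1B0}), $\mathcal O(R^{-2})$ and $\mathcal O(\sqrt\eps)$ (symbolic calculus and normal-form remainders), and $\mathcal O(R^3\sqrt\eps)$ (replacing $T_\eps(x,\xi)$ by the model rate $T_{LZ}(z_1/\sqrt\eps)$ inside the hopping region). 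Balancing $R^{-5}\eps^{-1/2}$ against $R^3\sqrt\eps$ forces $R=\eps^{-1/8}$ and yields $\eta_\eps=\mathcal O(\eps^{1/8})$. Your terms $\sqrt\eps/\delta$ and $\delta^{1/2}$ (with $\delta\sim R\sqrt\eps$) do not reproduce either of the two binding remainders, so the optimization $\delta=\eps^{1/3}$, error $\eps^{1/6}$, is not a consequence of the analysis you cite. In particular, you would need to prove that your decomposition suppresses the adiabatic off-diagonal contribution and the transfer-rate approximation to better than $\eps^{1/8}$, which is precisely what the balance $R=\eps^{-1/8}$ shows cannot be done with this machinery.

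Second, the causality in your last step is backwards. You argue that the per-crossing error is $\mathcal O(\eps^{1/6})$, and the final $\eps^{1/8}$ is ``already arising from Assumption~\ref{ass}(3).'' In the paper, the $\eps^{1/8}$ is intrinsic to the method (cf.\ Remark~\ref{rem:theo}(3), where the authors say they suspect it is not optimal), and Assumption~\ref{ass}(3) is phrased with $\eps^{1/8}$ so as not to be the bottleneck. Even if the conclusion formally reads the same, deriving $\eps^{1/8}$ from Assumption~(3) leaves the burden of proving your claimed $\eps^{1/6}$ for the rest, which is exactly where the missing estimates are. You should either carry out the Egorov estimate in the two-scale class with explicit $R$-dependence (Proposition~\ref{prop:propagation1}), including the second iteration that beats the naive $\mathcal O(R^{-3}\eps^{-1/2})$, or accept the $\eps^{1/8}$ as coming from $R=\eps^{-1/8}$ and then simply observe that Assumption~(3) is consistent with it. Finally, the step where the Markov jump is matched to the quantum scattering needs the explicit Landau--Zener scattering relation $k^{\eps,+}=S(z_1/\sqrt\eps)k^{\eps,-}$ (Proposition~\ref{prop:scat}) together with the phase-commutation Lemma~\ref{lem:chi(KK*)}; without these, ``identify the quantum transition with the Markov jump'' is the conclusion you want, not a step you can take.
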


\begin{remark}\label{rem:(3)}
Note that the hypothesis (3) of Assumptions~\ref{ass} imply that on the interval $[0,T]$, the trajectories which reach the support of $a$ has performed at most one jump. 
\end{remark}

The following Section~\ref{sec:prelim} states preliminary results and the two next subsections are devoted to the proof of Propositions~\ref{prop:Markov} and~\ref{prop:FL} respectively.


\subsection{Preliminaries}\label{sec:prelim}

In this section, we begin with a careful analysis of the geometry close to a point of $\Sigma$ in order to precise the setting in which the proofs will be performed.
 
\subsubsection{The generalized flow}\label{sec:prelimflow}
In this section, we gather some properties of  the flows~$\Phi^t_\pm$  that will be useful in the next sections. We first focus on the existence and uniqueness of the generalized trajectories of the Hamiltonian vector fields $H_\pm$ and recall the arguments of the proof given in~\cite{FG1}.

\begin{proposition}\label{prop:trajectory}
For any $x_0\in\R^2$ such that $\nabla V(x_0)\not=0$, there exists $\tau_0>0$ and a unique Lipschitz continuous map
 $$t\mapsto \left(x^t_\pm (x_0,0),\xi^t_\pm(x_0,0)\right),\;\; t\in[-\tau_0,\tau_0]$$
satisfying~(\ref{eq:clastraj}) for $t\not=0$ and such that $x^0_\pm(x_0,0)=x_0$, $\xi^0_\pm(x_0,0)=0$ and 
$$\displaylines{\dot x^t_\pm(x_0,0)\Tend{t}{0}\mp {\nabla V(x_0)\over |\nabla V(x_0)|},\;\;\dot \xi^t_\pm(x_0,\xi_0)\Tend{t}{0}- \nabla V_0(x_0).\cr}$$
\end{proposition}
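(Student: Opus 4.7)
My plan is to bypass the singularity of $\xi/|\xi|$ at $\xi = 0$ by exploiting the fact that, since $\nabla V(x_0) \neq 0$, any candidate solution must cross $\xi = 0$ \emph{transversally} with $\xi(t) \sim -t\,\nabla V(x_0)$. This allows one to factor out the $t$ that creates the singularity and recast the problem as a standard Banach fixed point.

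More precisely, I reformulate the ODE as the pair of integral equations
\begin{equation*}
\xi(t) = -\int_0^t \nabla V(x(s))\, ds, \qquad x(t) = x_0 \pm \int_0^t \frac{\xi(s)}{|\xi(s)|}\, ds.
\end{equation*}
For a Lipschitz path $x$ with $x(0)=x_0$ and $\mathrm{Lip}(x)\le 1$, the first equation gives $\xi(t)/t = -\frac{1}{t}\int_0^t \nabla V(x(s))\, ds$, which extends continuously by $\zeta(0) := -\nabla V(x_0)$. By uniform continuity of $\nabla V$ on a neighborhood of $x_0$, for $\tau_0$ small enough (depending on $|\nabla V(x_0)|$ and the local Lipschitz constant of $\nabla V$) one has $|\zeta(t)| \ge |\nabla V(x_0)|/2$ on $[-\tau_0,\tau_0]$. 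The singular factor then becomes the bounded quantity $\xi/|\xi| = \mathrm{sign}(t)\,\zeta(t)/|\zeta(t)|$.

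The plan is then to define, on the complete metric space
\begin{equation*}
X_{\tau_0} = \{x \in \mathrm{Lip}([-\tau_0,\tau_0];\R^2) : x(0) = x_0,\ \mathrm{Lip}(x)\le 1\}
\end{equation*}
equipped with the sup norm, the map
$(Tx)(t) := x_0 \pm \int_0^t \mathrm{sign}(s)\,\zeta[x](s)/|\zeta[x](s)|\, ds$, where $\zeta[x]$ is the function built from $x$ as above. I would verify in order: (a) $T$ sends $X_{\tau_0}$ into itself, since the integrand has modulus $1$; (b) if $x_1,x_2\in X_{\tau_0}$, then $\|\zeta[x_1]-\zeta[x_2]\|_\infty \le \|\nabla V\|_{\mathrm{Lip}}\|x_1-x_2\|_\infty$, and using the uniform lower bound on $|\zeta|$ this propagates to $\|\zeta[x_1]/|\zeta[x_1]| - \zeta[x_2]/|\zeta[x_2]|\|_\infty \le C\|x_1-x_2\|_\infty$; (c) therefore $\|Tx_1 - Tx_2\|_\infty \le C\tau_0\|x_1-x_2\|_\infty$, a strict contraction for $\tau_0$ small. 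Banach's theorem then produces a unique fixed point $x^*$, and the associated $\xi^*$ obtained by the first integral equation solves \eqref{eq:clastraj} for $t \ne 0$. The stated limits $\dot x^t_\pm \to \mp\nabla V(x_0)/|\nabla V(x_0)|$ and $\dot \xi^t_\pm \to -\nabla V(x_0)$ (and the overall Lipschitz character of the trajectory) then read off directly from $\xi^*(t) = t\,\zeta^*(t)$ and the continuity of $\zeta^*$ at $0$.

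For uniqueness among all Lipschitz solutions of \eqref{eq:clastraj}, any such solution automatically satisfies $|\dot x|\le 1$ and $|\dot \xi|\le \|\nabla V\|_\infty$, so after possibly shrinking $\tau_0$ it lies in $X_{\tau_0}$ and the contraction argument identifies it with $x^*$. The main technical delicacy is step (b): controlling $\zeta[x]/|\zeta[x]|$ requires the uniform lower bound $|\zeta|\ge |\nabla V(x_0)|/2$, which is the only reason $\tau_0$ must be chosen small. This is where the hypothesis $\nabla V(x_0)\ne 0$ is essential --- without it, $\zeta$ could vanish and the reparametrization $\xi(t) = t\,\zeta(t)$ would no longer tame the singularity of $\xi/|\xi|$, matching the remark in the paper that uniqueness genuinely fails when $\nabla V(x^{t^*}_\pm) = 0$.
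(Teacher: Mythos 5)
Your proof is correct and follows essentially the same route as the paper: both desingularize $\xi/|\xi|$ by writing $\xi(t)=t\,\zeta(t)$ with $\zeta$ continuous and bounded away from zero near the crossing (the paper's $\xi_j^t=-t\int_0^1\nabla V(x_j^{ts})\,ds$ is exactly your $t\,\zeta[x](t)$ after the change of variables $s\mapsto ts$), and both then close the argument by a Banach fixed point on the resulting regular integral equation for $x$ alone. The only organizational difference is that the paper solves two auxiliary \emph{smooth} flows $\Psi_1,\Psi_2$ and glues their restrictions to $\{t\ge 0\}$ and $\{t\le 0\}$, whereas you fold that gluing into a single integral equation by inserting a $\mathrm{sign}(s)$ factor in the integrand, which is an equivalent and slightly more compact way to produce the same Lipschitz (but non-$\mathcal{C}^1$) trajectory.
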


\begin{corollary}\label{cor:fields}
With the notations of Proposition~\ref{prop:trajectory}, we have
 $$\displaylines{
 \lim_{t\rightarrow 0^-} H_+(x^t_\pm,\xi^t_\pm) = \lim_{t\rightarrow 0^+}  H_-(x^t_\pm,\xi^t_\pm) =-\nabla V(x_0)\cdot\nabla_\xi+{\nabla V(x_0)\over |\nabla V(x_0)|} \cdot \nabla _x,\cr
  \lim_{t\rightarrow 0^+} H_-(x^t_\pm,\xi^t_\pm) = \lim_{t \rightarrow 0^-}  H_-(x^t_\pm,\xi^t_\pm) =-\nabla V(x_0)\cdot\nabla_\xi-{\nabla V(x_0)\over |\nabla V(x_0)|} \cdot \nabla_ x.\cr
 \cr}$$
\end{corollary}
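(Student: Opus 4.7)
The plan is to read off one-sided limits of the unit vector $\xi^t_\pm/|\xi^t_\pm|$ at the singular time $t=0$ from Proposition~\ref{prop:trajectory}, and then substitute them into the definition
$$H_\pm(x,\xi) = \pm\,\frac{\xi}{|\xi|}\cdot\nabla_x - \nabla V(x)\cdot\nabla_\xi.$$
The potential summand is harmless: by continuity of the generalized flow, $x^t_\pm(x_0,0)\to x_0$ as $t\to 0$, so $-\nabla V(x^t_\pm)\cdot\nabla_\xi \to -\nabla V(x_0)\cdot\nabla_\xi$, and this is the common constant term in the four claimed limits. All the work therefore concentrates on the transport direction $\pm\,\xi^t_\pm/|\xi^t_\pm|$.

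From Proposition~\ref{prop:trajectory} we have $\xi^0_\pm = 0$ together with $\dot\xi^t_\pm \to -\nabla V(x_0)$ as $t\to 0$, and this limit is genuinely two-sided because the ODE $\dot\xi = -\nabla V(x)$ is the same on either side of $0$ and $x^t_\pm$ is continuous. Integrating from $0$ yields the first-order expansion
$$\xi^t_\pm = -\,t\,\nabla V(x_0) + o(t), \qquad t\to 0,$$
so $|\xi^t_\pm| = |t|\,|\nabla V(x_0)| + o(|t|)$, and dividing gives the key formula
$$\frac{\xi^t_\pm}{|\xi^t_\pm|}\ \Tend{t}{0^\pm}\ \mp\,\frac{\nabla V(x_0)}{|\nabla V(x_0)|}.$$
Note that the sign flip here reflects the fact that $\xi^t_\pm$ is tangent at the singular time to the oriented line $\mathbb{R}\,\nabla V(x_0)$ but traverses the origin, so the unit direction jumps.

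Substituted into $H_+$, this formula yields $+\nabla V(x_0)/|\nabla V(x_0)|\cdot\nabla_x - \nabla V(x_0)\cdot\nabla_\xi$ in the limit $t\to 0^-$ and its opposite sign on the transport part in the limit $t\to 0^+$; the field $H_-$ carries an additional minus sign on the transport term which swaps the two cases. Pairing the equal outputs $\{H_+\text{ at }0^-\} = \{H_-\text{ at }0^+\}$ and $\{H_+\text{ at }0^+\} = \{H_-\text{ at }0^-\}$ yields precisely the two identities of the statement. There is no substantive obstacle: once the first-order Taylor expansion of $\xi^t_\pm$ at the singular time is granted by Proposition~\ref{prop:trajectory}, the corollary reduces to a one-line substitution.
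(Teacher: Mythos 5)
Your argument is correct and is exactly the calculation the paper leaves implicit (the corollary is stated immediately after Proposition~\ref{prop:trajectory} with no written proof): expand $\xi^t_\pm=-t\,\nabla V(x_0)+o(t)$ using the two-sided limit $\dot\xi^t_\pm\to-\nabla V(x_0)$, read off the sign-flipping one-sided limits of the unit vector $\xi^t_\pm/|\xi^t_\pm|$, and substitute into the definition \eqref{def:Hpm}. Note in passing that the second display of the corollary as printed contains a typo, reading $\lim_{t\to 0^+}H_-=\lim_{t\to 0^-}H_-$, which would force a false identity $\xi^{0^+}/|\xi^{0^+}|=\xi^{0^-}/|\xi^{0^-}|$; the first occurrence should be $H_+$, and your pairing $\lim_{t\to 0^+}H_+=\lim_{t\to 0^-}H_-$ yields the intended statement, consistent with the definitions of $H$ and $H'$ in \eqref{def:H}--\eqref{def:H'}.
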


In the following and with the notations of Proposition~\ref{prop:trajectory}, we shall set 
\begin{eqnarray}\label{def:H}
H(x_0)&:=& \lim_{t\rightarrow 0^-} H_+(x^t_\pm,\xi^t_\pm) = \lim_{t\rightarrow 0^+}  H_-(x^t_\pm,\xi^t_\pm) ,\\ 
\label{def:H'}
H'(x_0)&:=&    \lim_{t\rightarrow 0^+} H_-(x^t_\pm,\xi^t_\pm) = \lim_{t\rightarrow 0^-}  H_-(x^t_\pm,\xi^t_\pm) .
\end{eqnarray}
We observe that if $\omega=d\xi\wedge dx$ is the canonical skew-symmetric $2$-form of  the cotangent space of $\R^2$, we have 
\begin{equation}\label{eq:sigmaHH'}
\omega(H,H')=2|\nabla V(x_0)|>0.
\end{equation}

\begin{proof}[Proof of Proposition \ref{prop:trajectory}]
Following  Proposition~3 in~\cite{FG1}, we introduce two flows $\Psi_1^t=(x_1^t,\xi_1^t)$ and $\Psi_2^t=(x_2^t,\xi_2^t)$ which are defined by
$$\left\{
\begin{array}{l}
\dot x_j^t=(-1)^{j+1}{\rm sgn}(t){\xi_j^t\over|\xi^t_j|},\;\; x_j^0=x_0,\\
\dot \xi_j^t=-\nabla V(x_j^t),\end{array}\right.$$
or equivalently 
$$\xi_j^t=-t\int_0^1 \nabla V(x_j^{ts})ds,\;\; x_j^t=x_0+(-1)^{j+1}\int_0^t
{\int_0^1 \nabla V(x_j^{s\sigma})d\sigma \over
\left| \int_0^1 \nabla V(x_j^{s\sigma})d\sigma\right| }ds.$$
The last system can be solved on short time by a fixed point argument in an open subset of $\{\nabla V(x)\not=0\}$ and the resulting map $x\mapsto\Psi_j^t(x,0)$ is smooth. 
As a consequence, 
 there exists a  neighborhood $\Omega$ of $(x_0,0)$ such that $\Omega\subset  \{\nabla V(x)\not=0\}$ and $\tau_0>0$ such that 
for all $(x,0)\in \Omega$, the maps
$$\displaylines{
\left(\Phi^t_+(x,0)\right)_{ t\in[0,\tau_0]}=\left(\Psi^t_1(x,0)\right)_ {t\in[0,\tau_0]}\;{\rm and} \;\left(\Phi^t_+(x,0)\right)_{ t\in[-\tau_0,0]}=\left(\Psi^t_2(x,0)\right)_{ t\in[-\tau_0,0]},\cr
\left(\Phi^t_-(x,0)\right)_{ t\in[0,\tau_0]}=\left(\Psi^t_2(x,0)\right)_ {t\in[0,\tau_0]}\;{\rm and} \;\left(\Phi^t_-(x,0)\right)_{ t\in[-\tau_0,0]}=\left(\Psi^t_1(x,0)\right)_ {t\in[-\tau_0,0]},\cr}$$
solve our problem. The flows $\Phi^t_\pm$ (which are well defined for $\xi\not=0$) extend to Lipschitz continuous maps
$$t\mapsto \Phi^t_\pm(x,\xi),\;\;t\in[-\tau_0,\tau_0],\;\;(x,\xi)\in\Omega.$$
\end{proof}

\begin{remark}
Following the arguments of Section~6.2 in~\cite{FGL}, one can prove that for $|t|<\tau_0$ and  $\alpha\in \N^d$, the maps $(x,\xi)\mapsto \partial_x^\alpha\Phi^t_\pm(x,\xi)$ are  continuous maps on $\Omega$ with bounded locally integrable time derivatives $\partial_t\partial_x^\alpha \Phi^t(x,\xi)$.
\end{remark}

\medskip

 \subsubsection{Local analysis}
 In what follows, 
we shall associate with points   $(x,\xi)$, which are close enough to the set~$\Sigma$, a number~$\tau_\pm(x,\xi)$ which is the time that  separates~$(x,\xi)$ from the point of  the trajectory~$\Phi^t_\pm(x,\xi)$ which belongs to~$\Sigma$. 

\begin{proposition}\label{reg:tau} Let $x_0\in\R^2$ such that $\nabla V(x_0)\not=0$, there exists an open set $\Omega\subset\{\nabla V(x)\not=0\}\subset \R^4_{(x,\xi)}$ containing $(x_0,0)$ and such that the relations $\Phi^{\tau_\pm(x,\xi)}(x,\xi)\in\Sigma$ define two
continuous functions on $\Omega$
$$( x,\xi) \mapsto \tau_\pm(x,\xi).$$
\end{proposition}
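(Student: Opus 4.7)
I would argue as follows. Work in the neighborhood $\Omega$ provided by Proposition~\ref{prop:trajectory}, on which the flow $\Phi^t_\pm=(x_\pm^t,\xi_\pm^t)$ is defined as a Lipschitz map of $t\in[-\tau_0,\tau_0]$ and depends continuously on $(x,\xi)$. Shrinking $\Omega$ if necessary, I may assume that $|\nabla V|$ stays bounded below by some $c_0>0$ and that $|x_\pm^t-x_0|$ remains so small that $|\nabla V(x_\pm^t)|\geq c_0/2$ on the whole flow. The idea is to encode membership to $\Sigma$ in the scalar function
\[
g_\pm(t,x,\xi):=\xi_\pm^t(x,\xi)\cdot\nabla V(x_\pm^t(x,\xi)),
\]
which is continuous on $[-\tau_0,\tau_0]\times\Omega$ and whose zeros are exactly the times at which $\Phi^t_\pm$ meets $\Sigma$. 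Then $\tau_\pm$ will be obtained by a monotone implicit function argument.

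The key step is the monotonicity of $t\mapsto g_\pm(t,x,\xi)$. On the open set of times where $\xi_\pm^t\neq 0$, the flow is smooth and a direct computation (this is precisely the content of~\eqref{eq:passingthroughSigma}) gives
\[
\partial_t g_\pm = -|\nabla V(x_\pm^t)|^2 \pm d^2V(x_\pm^t)\xi_\pm^t\cdot\frac{\xi_\pm^t}{|\xi_\pm^t|}.
\]
The first term is bounded above by $-c_0^2/4$ on $\Omega$, while the second is bounded in absolute value by $\|d^2V\|_{L^\infty}|\xi_\pm^t|$. Consequently, further reducing $\Omega$ and $\tau_0$ so that $|\xi_\pm^t(x,\xi)|\leq c_0^2/(8\|d^2V\|_{L^\infty}+1)$ for all $(t,x,\xi)\in[-\tau_0,\tau_0]\times\Omega$, I obtain
\[
\partial_t g_\pm \leq -c_0^2/8 < 0
\]
almost everywhere in $t$. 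Since $g_\pm(\cdot,x,\xi)$ is Lipschitz, the fundamental theorem of calculus for absolutely continuous functions yields that $t\mapsto g_\pm(t,x,\xi)$ is strictly decreasing on $[-\tau_0,\tau_0]$ uniformly in $(x,\xi)\in\Omega$.

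Next I use this to produce $\tau_\pm$. At the base point, $g_\pm(0,x_0,0)=0$, so monotonicity gives
\[
g_\pm(-\tau_0,x_0,0) \geq c_0^2\tau_0/8 > 0,\qquad g_\pm(\tau_0,x_0,0)\leq -c_0^2\tau_0/8 < 0.
\]
By continuity of $g_\pm$, there is a (smaller) neighborhood $\Omega'\subset\Omega$ of $(x_0,0)$ on which the same strict inequalities hold. The intermediate value theorem applied to the continuous decreasing function $g_\pm(\cdot,x,\xi)$ then produces a unique $\tau_\pm(x,\xi)\in(-\tau_0,\tau_0)$ with $g_\pm(\tau_\pm(x,\xi),x,\xi)=0$, i.e.\ $\Phi^{\tau_\pm(x,\xi)}_\pm(x,\xi)\in\Sigma$. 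Continuity of $\tau_\pm$ follows from a standard compactness argument: if $(x_n,\xi_n)\to(x,\xi)$ in $\Omega'$, any cluster value $\tau^*$ of $\tau_\pm(x_n,\xi_n)$ satisfies $g_\pm(\tau^*,x,\xi)=0$ by joint continuity of $g_\pm$, hence $\tau^*=\tau_\pm(x,\xi)$ by uniqueness.

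The only subtle point is the presence of a kink in the flow whenever $\xi_\pm^t$ actually vanishes: the derivatives of $x_\pm^t$ are discontinuous there, so one cannot naively assert that $g_\pm$ is $C^1$ in $t$. However, the monotonicity argument only requires a uniform upper bound on $\partial_t g_\pm$ almost everywhere, and the transversality estimate~\eqref{eq:passingthroughSigma} provides exactly that, with the term $\pm d^2V(x_\pm^t)\xi_\pm^t\cdot\xi_\pm^t/|\xi_\pm^t|$ controlled by $|\xi_\pm^t|$ regardless of the direction of $\xi_\pm^t/|\xi_\pm^t|$. This is what makes the a.e.\ derivative argument work uniformly across the singular locus $\{\xi=0\}$ and yields continuity of $\tau_\pm$ on $\Omega'$.
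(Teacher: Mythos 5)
Your proof is correct and relies on exactly the same transversality ingredient as the paper's: the uniform sign (equivalently the bound~\eqref{eq:passingthroughSigma}) on the $t$-derivative of $\xi^t_\pm\cdot\nabla V(x^t_\pm)$, which the paper phrases as strict convexity $\frac{d^2}{dt^2}|\xi^t_\pm|^2\ge c_0>0$. You package it as strict monotone decay of $g_\pm$ plus the intermediate value theorem and obtain continuity of $\tau_\pm$ by compactness, whereas the paper constructs $\Omega^\pm$ as the flow-out of $\Sigma\cap U$ and proves continuity by a contradiction argument using the Lipschitz estimate $c_0|\tau_+(z_0)-\tau_+(z_n)|\le|\phi(\tau_+(z_0),z_n)|$; these are minor organizational differences around the same argument.
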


\begin{remark}
By definition of $\tau_\pm(x,\xi)$, we have 
$\nabla V(x_\pm^{\tau_\pm(x,\xi)})\cdot \xi_\pm^{\tau_\pm(x,\xi)}=0$ for any $(x,\xi)\in\Omega$.
\end{remark}

\begin{proof}
Let  us study the plus mode.  We observe that
\begin{equation}\label{eq:d^2}
{d\over dt} |\xi^t_+|^2= -\nabla V(x^t_+)\cdot \xi^t_+\;\;{\rm and}\;\;{d^2\over dt^2} |\xi^t_+|^2=| \nabla V(x^t_+)|^2-d^2 V(x^t_+) \xi^t_+ \cdot{ \xi^t_+\over |\xi^t_+|}
\end{equation}
Since $\nabla V(x_0)\not=0$, we can find a neighborhood $U$ of $(x_0,0)$, $c_0>0$ and $\tau_1>0$ such that 
\begin{equation}\label{defc0}
\forall (x,\xi)\in\Omega_1,\;\;\forall |t|\leq \tau_1,\;\; {d^2\over dt^2}| \xi^t_+ (x,\xi)|^2\geq c_0>0.
\end{equation}
Because of~(\ref{defc0}),  the  map $t \mapsto |\xi^t_+|$ reaches its minimum at most once in $U$. With any $(x,\xi) \in U$, we associate an interval $[t_{i}(x,\xi),t_{f}(x,\xi)]$  of maximal size such that $\Phi^t(x,\xi)\in U$  for all $t\in] t_i(x,\xi),t_f(x,\xi)[$. Because of the first relation of~(\ref{eq:d^2}), we are interested to the times where the curves $\Phi^t(x,\xi)$ crosses the hypersurface $\Sigma$, which happens at most once in $U$.  We set 
$$\Omega^+=\left\{(x,\xi)\in U,\;\exists (y,\eta)\in\Sigma\cap U,\;\exists s\in]t_{i}(y,\eta),t_{f}(y,\eta)[,\;(x,\xi)=\Phi^s_+(y,\eta)\right\}.$$
Then $\Omega^+$ is a neighborhood of $(x_0,0)$ 
 included in $U$ and such that the relation $\Phi^{\tau_+(x,\xi)}_+(x,\xi)\in\Sigma$ defines a map $\tau_+$ from $\Omega^+$ into $\R$.  We define similarly $\Omega^-$ and the map~$ \tau_-(x,\xi)$ and we choose $\Omega=\Omega^+\cap\Omega^-$ (see Fig. \ref{figure 1}). 

\begin{figure}[!htbp]
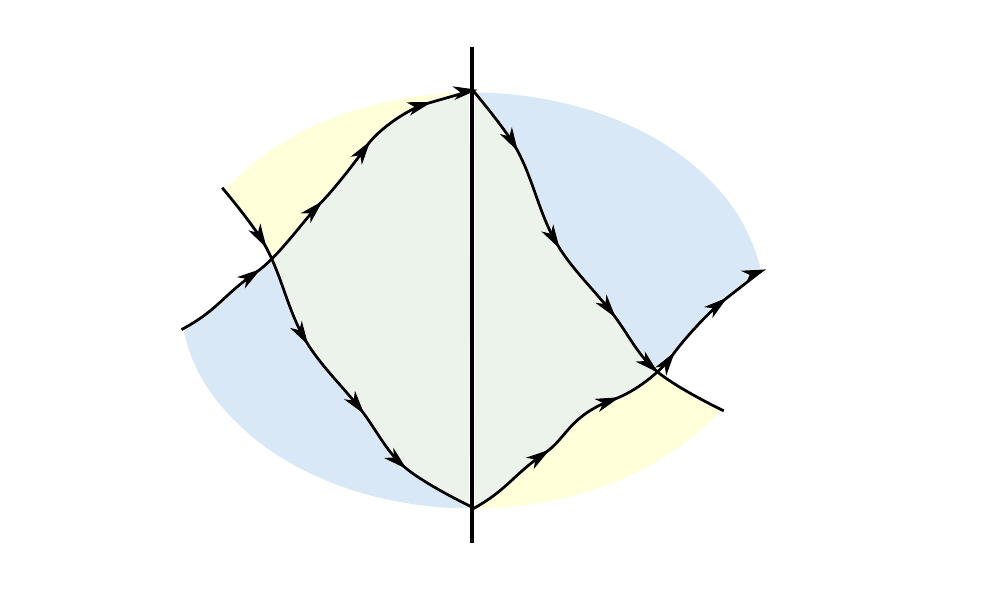
  \caption{Domains $\Omega_+$ (blue+green), $\Omega_-$ (yellow+green) and $\Omega$ (green) $=\Omega_+\cap\Omega_-=\Omega^{in}\cup\Omega^{out}$}
\label{figure 1}\end{figure}

\noindent Finally, it is classical to prove that $\tau_+$ is a continuous map on $\Omega$. We set 
$$z=(x,\xi)\;\;{\rm and}\;\; \phi(t,z)={d\over dt} |\xi^t_+|^2= -\nabla V(x_+^t(z))\cdot \xi_+^t(z)$$
 and we argue by contradiction. We assume 
that there exist $\alpha_0$, $z_0$ and a sequence~$(z_n)_{n\in\N^*}$ going to $z_0$ as $n$ goes to $+\infty$ and such that
$$|\tau_+(z_n)-\tau_+(z_0)|>\alpha_0.$$ Then, by the continuity of $\phi$, we get $\phi(\tau_+(z_0),z_n)\Tend{n}{+\infty} 0$ and since $\partial_t\phi\geq c_0$  (by~(\ref{defc0})), we have
$$c_0\left| \tau_+(z_0)-\tau_+(z_n)\right|\leq  \left| \phi(\tau_+(z_0),z_n)-\phi(\tau_+(z_n),z_n)\right|=|\phi(\tau_+(z_0),z_n)|.$$
Taking $n$ large enough, we get $|\tau_+(z_n)-\tau_+(z_0)|<\alpha_0/2$, whence a contradiction and the continuity of the map $z\mapsto \tau_+(z)$ is proved.\\
One argues similarly for the mode minus. \end{proof}

\begin{remark} The hypersurface $\Sigma$ parts $\Omega$ into two distinct connected regions  (see Fig. \ref{figure 1}):
\begin{eqnarray}\label{def:Omegainout}
{ \Omega}^{out}&:=&\{\tau_+<0, \; \tau_-<0\}\cap{\Omega}=\{\nabla V(x)\cdot \xi <0\}\cap \Omega\\
\nonumber
{\Omega}^{in}&:= &\{\tau_+>0,\;\tau_->0\}\cap\Omega=\{\nabla V(x)\cdot \xi >0\}\cap \Omega.
\end{eqnarray}
Indeed, if $\nabla V(x)\cdot\xi >0$, we have simultaneously, 
${d\over dt} |\xi^t_+|^2 <0$ and ${d\over dt} |\xi^t_-|^2 <0$, thus we have $\tau_+(x,\xi)>0$ and $\tau_-(x,\xi)>0$.
 Besides, for $(x,\xi)\in\Omega^{in}$, we have $\Phi^t(x,\xi)\in\Omega^{in}$ for $t\in[0,\tau_+(x,\xi)[$.
\end{remark}

\medskip

\subsubsection{Trace maps} We consider the open set $\Omega$ of Proposition~\ref{reg:tau} where $H^\pm$ are transverse to $\Sigma$. 
We can find some times $t_i$ and $t_f$, and four open sets $\Sigma_i^{\pm}$ and $\Sigma_f^\pm$, included in $\Omega$ and such that 
$\Sigma_f^\pm=\Phi^{t_f-t_i}(\Sigma_i^\pm)$. We set 
$${\mathcal V}=\{(t,\Phi^{t_f-t}_+(x,\xi),\;(x,\xi)\in\Sigma_f^+\}\cup\{(t,\Phi^{t_f-t}_-(x,\xi),\;(x,\xi)\in\Sigma_f^-\}.$$
We can assume that ${\mathcal V}\subset [t_i,t_f]\times \Omega$.
Because of the geometry of the trajectories, it is enough to study the equation in~
${\mathcal V}$. We have a partition of ${\mathcal V}$ as 
$${\mathcal V}=\widetilde\Sigma \cup {\mathcal V}^{in} \cup {\mathcal V}^{out}$$
with $\widetilde \Sigma = (\R\times\Sigma)\cap{\mathcal V}\subset [t_i,t_f]\times \Sigma$, ${\mathcal V}^{in}\subset \{\nabla V(x)\cdot\xi >0\}$ and ${\mathcal V}^{out}\subset \{\nabla V(x)\cdot \xi <0\}$.
We can also write 
\begin{eqnarray*}
{\mathcal V}^{in} & = & \{(t,x,\xi)\in {\mathcal V},\;\;\exists (s,y,\eta,j)\in[t,t_f]\times\Sigma\times\{-1,+1\},\;\;(x,\xi)=\Phi_j^{t-s}(y,\eta)\},\\
{\mathcal V}^{out} & = & \{(t,x,\xi)\in {\mathcal V},\;\;\exists (s,y,\eta,j)\in[t_i,t]\times\Sigma\times\{-1,+1\},\;\;(x,\xi)=\Phi_j^{t-s}(y,\eta)\}.
\end{eqnarray*}
\begin{lemma}\label{lem:trace}
With the above notations, let $f^\eps_\pm$ be two functions which are invariant by $\Phi^s_\pm$ in ${\mathcal V}^{in}$.
Assume that $f^\eps_\pm(t_i)$ is supported in $\Sigma_i^\pm$ and $f^\eps_\pm(t_i)\in L^1(\R^4)$. Then $f^\eps_\pm$ have traces on $\widetilde\Sigma$ that we denote by $(f^\eps_\pm)_{\Sigma,in}$ and $(f^\eps_\pm)_{\Sigma,in}\in L^1(\widetilde \Sigma)$.\\
Similarly, if $f^\eps_\pm$ are invariant by $\Phi^s_\pm$ in ${\mathcal V}^{out}$. Assume that $f^\eps_\pm(t_f)$ is supported in $\Sigma_f^\pm$ and $f^\eps_\pm(t_f)\in L^1(\R^4)$. Then $f^\eps_\pm$ have traces on $\widetilde \Sigma $ that we denote by $(f^\eps_\pm)_{\Sigma,out}$ and $(f^\eps_\pm)_{\Sigma,out}\in L^1(\widetilde \Sigma)$.
\end{lemma}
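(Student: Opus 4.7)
The proof is a change-of-variables argument based on the transversality of the spacetime flow to $\widetilde\Sigma$ established in~\eqref{eq:passingthroughSigma}.

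First, using the continuous time-to-impact function $\tau_\pm$ supplied by Proposition~\ref{reg:tau}, I would introduce the holonomy map
$$\Theta_\pm^{in}:\Sigma_i^\pm\to\widetilde\Sigma,\quad (x_0,\xi_0)\mapsto\bigl(t_i+\tau_\pm(x_0,\xi_0),\,\Phi_\pm^{\tau_\pm(x_0,\xi_0)}(x_0,\xi_0)\bigr).$$
After possibly shrinking $\Omega$ around the base point where $\xi=0$, the transversality condition~\eqref{eq:passingthroughSigma} ensures that $\Theta_\pm^{in}$ is a bi-Lipschitz homeomorphism from the open set $\Sigma_i^\pm\subset\R^4$ onto a relatively open subset of the $4$-dimensional manifold $\widetilde\Sigma$, with Jacobian $J_\pm(x_0,\xi_0):=|\det D\Theta_\pm^{in}(x_0,\xi_0)|$ bounded above and bounded away from $0$. (By choosing $\Sigma_i^\pm$ adequately we can even arrange $\Theta_\pm^{in}(\Sigma_i^\pm)=\widetilde\Sigma$.)

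Second, I use the invariance of $f^\eps_\pm$ under the spacetime flow in $\mathcal{V}^{in}$, namely $f^\eps_\pm\bigl(t,\Phi_\pm^{t-t_i}(x_0,\xi_0)\bigr)=f^\eps_\pm(t_i,x_0,\xi_0)$ for $t\in(t_i,t_i+\tau_\pm(x_0,\xi_0))$. This says that $f^\eps_\pm$ depends only on the leaf of the foliation of $\mathcal{V}^{in}$ by spacetime trajectories, so it descends to a function on the leaf space, which is identified both with $\Sigma_i^\pm$ (via initial point) and with $\widetilde\Sigma$ (via impact point), the identification being $\Theta_\pm^{in}$. One is thus forced to define
$$(f^\eps_\pm)_{\Sigma,in}(t^*,y,\eta):=f^\eps_\pm\bigl(t_i,(\Theta_\pm^{in})^{-1}(t^*,y,\eta)\bigr).$$
When $f^\eps_\pm$ happens to be continuous in $\mathcal{V}^{in}\cup\widetilde\Sigma$, this definition coincides with the ingoing pointwise limit $\lim_{s\to 0^+}f^\eps_\pm(t^*,\Phi_\pm^{-s}(y,\eta))$, as required for consistency.

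Third, the $L^1$ bound then follows by the change-of-variables formula applied to $\Theta_\pm^{in}$:
$$\int_{\widetilde\Sigma}\bigl|(f^\eps_\pm)_{\Sigma,in}\bigr|\,d\sigma_{\widetilde\Sigma}=\int_{\Sigma_i^\pm}|f^\eps_\pm(t_i,x_0,\xi_0)|\, J_\pm(x_0,\xi_0)\, dx_0\, d\xi_0\le C\,\|f^\eps_\pm(t_i)\|_{L^1(\R^4)}.$$
The outgoing statement is proved by the symmetric argument, using a holonomy $\Theta_\pm^{out}:\Sigma_f^\pm\to\widetilde\Sigma$ defined by the backward flow and the fact that $\tau_\pm<0$ characterizes $\mathcal{V}^{out}$.

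The main technical point is verifying that $\Theta_\pm^{in}$ is genuinely a Lipschitz homeomorphism with Jacobian bounded above and away from $0$. Differentiating the defining function $\phi(x,\xi)=\xi\cdot\nabla V(x)$ of $\Sigma$ along $H_\pm$ gives precisely the transversality quantity $-|\nabla V(x)|^2\pm|\xi|^{-1}d^2V(x)\xi\cdot\xi$ already appearing in $\Lambda_\pm$, which is $<0$ at $\xi=0$ under~\eqref{ass:V}; shrinking $\Omega$ keeps it bounded away from zero and controls $J_\pm$. Note that the flow is only Lipschitz (not $C^1$) across $\{\xi=0\}$ by Proposition~\ref{prop:trajectory}, but this regularity is enough for the change-of-variables formula to hold and for the trace to be defined as an honest $L^1$ function via the pushforward identification rather than as a pointwise boundary limit.
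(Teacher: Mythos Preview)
Your argument is correct and is essentially identical to the paper's: both define the holonomy map $(x,\xi)\mapsto\bigl(t_i+\tau_\pm(x,\xi),\Phi_\pm^{\tau_\pm(x,\xi)}(x,\xi)\bigr)$ from $\Sigma_i^\pm$ to $\widetilde\Sigma$, observe it is a homeomorphism, and set the trace equal to the pushforward of $f^\eps_\pm(t_i)$ by this map. Your version is in fact more carefully justified than the paper's (which simply asserts the map is a homeomorphism and the pushforward is $L^1$), since you spell out the transversality, the bi-Lipschitz regularity, and the Jacobian control needed for the $L^1$ bound.
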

\begin{proof}
We first use the functions $\tau_\pm$ for defining the four following continuous maps 
\begin{eqnarray*}
\kappa^{i}_{\pm}&: & \Sigma_i^\pm\rightarrow\widetilde \Sigma,\\
& &  (x,\xi) \mapsto \left( t_i+\tau_\pm(x,\xi), \Phi^{\tau_\pm(x,\xi)}_\pm(x,\xi)\right),
\end{eqnarray*}
and 
\begin{eqnarray*}
\kappa^{f}_{\pm}&: & \Sigma_{f}^\pm\rightarrow\widetilde \Sigma,\\
& &  (x,\xi) \mapsto\left(t_f+\tau_\pm(x,\xi), \Phi^{\tau_\pm(x,\xi)}_\pm(x,\xi)\right).
\end{eqnarray*}
These four maps are homeomorphisms. Set 
$$(f^\eps_{\pm})_{\Sigma,in}= (\kappa^{i}_\pm)_*f^\eps_{\pm}(t_i).$$
These two functions are $L^1$ functions of $\widetilde \Sigma$ which are the entering traces of 
$f^\eps_{\pm}$ on $\R\times \Sigma$.
We argue similarly for the out-going traces.
\end{proof}

\subsection{Proof of Proposition~\ref{prop:Markov}}\label{sec:proof1}
In this section we prove the existence and uniqueness of solutions to the kinetic equations~(\ref{eq:kinapp}) and we analyze the link between the Markov semi-group and this system of equations. 
\subsubsection{Uniqueness of solutions to~(\ref{eq:kinapp})}\label{sect:uniqueness}
 As a corollary of the analysis of the previous subsection, we obtain that the solutions of~(\ref{eq:kinapp}) are unique if they do exist. 
Indeed, by (1) of Assumptions~\ref{ass}, the data has been chosen supported outside $\Sigma$ so that for short time the kinetic system~(\ref{eq:kinapp}) reduces to classical transport by the two flows. We cut the data in a sum of compactly supported pieces so that each of these pieces satisfy the hypothesis of Lemma~\ref{lem:trace} after a certain amount of time. We only need to consider one of these pieces and we take the notations of Lemma~\ref{lem:trace}. 
Because of the previous decomposition of the data,  we may assume that $f^\eps_\pm(t_i)$ is compactly supported in $\Sigma_{in}^\pm$. 
By Lemma~\ref{lem:trace}, if the solution does exist  on the  interval of time $[t_i,t_f]$, we must  have 
\begin{eqnarray}\label{eq:formule}
&f^\eps_+(t,x,\xi) = {\bf 1}_{\tau_+(x,\xi)+t>0}(f^\eps_+)_{\Sigma,in}\left(\tau_+(x,\xi)+t,\Phi_+^{\tau_+(x,\xi)}(x,\xi)\right) &\\
\nonumber
&+ {\bf 1}_{\tau_+(x,\xi)+t<0}(f^\eps_+)_{\Sigma,out}\left(\tau_+(x,\xi)+t,\Phi_+^{\tau_+(x,\xi)}(x,\xi)\right).&
\end{eqnarray}
As a consequence, the solution for the plus mode will be unique (if it exists) if and only if the outgoing trace $(f^\eps_+)_{\Sigma,out}$ is uniquely determined by the entering traces $(f^\eps_+)_{\Sigma,in}$ and $(f^\eps_-)_{\Sigma,in}$.
In order to study the link between the entering and outgoing traces, we use the following lemma, the proof of which is postponed at the end of Section~\ref{sec:proof1}:
\begin{lemma}\label{lemmetechnique}
In the set of distributions, we have 
\begin{eqnarray*}
&(\partial_t+H_\pm)\left(\tau_\pm(x,\xi)+t\right)=0,\qquad (\partial_t+H_\pm)\Phi_\pm^{\tau_\pm(x,\xi)}(x,\xi)=0,\\
&H_+\left({\bf 1}_{\tau_+(x,\xi)< 0}\right)=- \Lambda_+(x,\xi) \delta_\Sigma (x,\xi),
\end{eqnarray*}
where $\Lambda_+$ is given by~(\ref{def=theta+}).
\end{lemma}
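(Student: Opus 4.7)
The proof splits into two parts: the two invariance identities follow from the group property of the Hamiltonian flow together with the defining property of $\tau_\pm$; the third identity follows from a distributional chain rule for the indicator ${\bf 1}_{\phi<0}$ with $\phi(x,\xi) := \xi\cdot\nabla V(x)$.

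For the invariance identities, fix $(x,\xi)\in\Omega\setminus\Sigma$. For $|s|$ small, the group property of $\Phi^t_\pm$ together with the uniqueness part of Proposition~\ref{reg:tau} yields
\[\tau_\pm\bigl(\Phi^s_\pm(x,\xi)\bigr) = \tau_\pm(x,\xi)-s,\qquad \Phi_\pm^{\tau_\pm(\Phi^s_\pm(x,\xi))}\bigl(\Phi^s_\pm(x,\xi)\bigr) = \Phi_\pm^{\tau_\pm(x,\xi)}(x,\xi),\]
since both sides of each equality designate the same crossing time, respectively the same crossing point on $\Sigma$, along the flow line through $(x,\xi)$. Differentiating at $s=0$ gives $H_\pm\tau_\pm=-1$ and $H_\pm(\Phi_\pm^{\tau_\pm})=0$, hence $(\partial_t+H_\pm)(t+\tau_\pm)=0$ and $(\partial_t+H_\pm)(\Phi_\pm^{\tau_\pm})=0$ pointwise on $\Omega\setminus\Sigma$. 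Since $\tau_\pm$ and $\Phi_\pm^{\tau_\pm}$ are continuous on all of $\Omega$ by Proposition~\ref{reg:tau}, these identities extend to $\mathcal{D}'(\Omega)$.

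For the third identity, the observation \eqref{def:Omegainout} yields ${\bf 1}_{\tau_+<0}={\bf 1}_{\phi<0}$ on $\Omega$, and $\phi$ is a smooth submersion near $\Sigma$ since $\nabla_{x,\xi}\phi = \bigl(d^2V(x)\xi,\nabla V(x)\bigr)\neq 0$ by \eqref{ass:V}. A direct computation from the definition of $H_+$ gives
\[H_+\phi(x,\xi) = \frac{\xi}{|\xi|}\cdot d^2V(x)\xi - |\nabla V(x)|^2 = -|\nabla V(x)|^2 + \frac{d^2V(x)\xi\cdot\xi}{|\xi|},\]
which is precisely the numerator of $\Lambda_+$ in \eqref{def=theta+}. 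Since $H_+$ is divergence-free, the distributional chain rule gives $H_+{\bf 1}_{\phi<0}=-\delta(\phi)\,H_+\phi$, and the standard co-area identification
\[\delta(\phi) = |\nabla_{x,\xi}\phi|^{-1}\delta_\Sigma = \bigl(|\nabla V(x)|^2+|d^2V(x)\xi|^2\bigr)^{-1/2}\delta_\Sigma\]
combines with the above to produce
\[H_+{\bf 1}_{\tau_+<0} = -\frac{-|\nabla V(x)|^2+|\xi|^{-1}d^2V(x)\xi\cdot\xi}{\sqrt{|\nabla V(x)|^2+|d^2V(x)\xi|^2}}\,\delta_\Sigma = -\Lambda_+(x,\xi)\,\delta_\Sigma(x,\xi).\]

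The main technical obstacle is to justify the distributional chain rule near $\{\xi=0\}\subset\Sigma$, where $H_+$ has a $\xi/|\xi|$ singularity. However, the apparently singular term $|\xi|^{-1}d^2V(x)\xi\cdot\xi$ is bounded by $\|d^2V\|_\infty|\xi|$, so $H_+\phi$ extends continuously across $\{\xi=0\}$; thus $\delta(\phi)\,H_+\phi$ is a well-defined Radon measure on $\Sigma$, and the remaining steps reduce to algebraic simplification using the explicit form of $\Lambda_+$.
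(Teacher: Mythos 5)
Your proof is correct and follows essentially the same route as the paper's: the first two identities come from the cocycle relations $\tau_\pm\circ\Phi^s_\pm=\tau_\pm-s$ and $\Phi^{\tau_\pm}_\pm\circ\Phi^s_\pm=\Phi^{\tau_\pm}_\pm$, and the third follows from the divergence theorem for the divergence-free field $H_+$ applied to the region $\{\tau_+<0\}=\{\xi\cdot\nabla V<0\}$, with the normal density recovered either directly (paper) or via $\delta(\phi)=|\nabla\phi|^{-1}\delta_\Sigma$ (you). Your added remark that $H_+\phi$ extends continuously across $\{\xi=0\}$ is a useful clarification that the paper leaves implicit, but it does not change the argument.
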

As a consequence, equation~(\ref{eq:formule}) and Lemma~\ref{lemmetechnique} yields
$$(\partial_t+H_+)f^\eps_+=\Lambda_+ \left((f^\eps_+)_{\Sigma,in}-(f^\eps_+)_{\Sigma,out}\right).$$
Besides, by~(\ref{eq:kinapp}) and the definition of $K_+$,
\begin{eqnarray*}
(\partial_t+H_+)f^\eps_+&=&K_+((f^\eps_+)_{\Sigma,in},(f^\eps_-)_{\Sigma,in}) \\
&= &  \Lambda_+\left(T_\eps (f^\eps_+)_{\Sigma,in}-T_\eps (f^\eps_-)_{\Sigma,in}\circ J_+\right).
\end{eqnarray*}
We deduce 
$$\Lambda_+ \left((f^\eps_+)_{\Sigma,in}-(f^\eps_+)_{\Sigma,out}\right)= \Lambda_+\left(T_\eps (f^\eps_+)_{\Sigma,in}-T_\eps (f^\eps_-)_{\Sigma,in}\circ J_+\right).$$
As a consequence, 
$$(f^\eps_+)_{\Sigma,out}= (1-T_\eps) (f^\eps_+)_{\Sigma,in}+T_\eps (f^\eps_-)_{\Sigma,in}\circ J_+,$$
and the outgoing and ingoing traces are linked. 
A similar argument holds for the minus mode and as a consequence, the solution of~(\ref{eq:kinapp}) is unique if it exists. \qed


\subsubsection{Existence of solutions to system~(\ref{eq:kinapp})}\label{sec:just}
We aim at proving that the semigroup~${\mathcal L}^t_{\eps}$ provides the unique solution of~(\ref{eq:kinapp}). More precisely, we want to prove
 that if  $$g^\eps(t)=( {\mathcal L}^t_{\eps}w^\eps(0)),\;\; t\in\R^+,$$
 and $g^\eps_\pm(t,x,\xi)=g^\eps(t,x,\xi,\pm 1)$, then  $(g^\eps_+,g^\eps_-)$ satisfies system~(\ref{eq:kinapp}) in ${\mathcal D}'(\R^4\setminus\{\xi=0\})$. By density of compactly supported continuous functions in $L^1(\R^4)$, it is enough to prove it for continuous initial data.
  
 \medskip
 
 By definition, the solutions of system~(\ref{eq:kinapp}) include classical transport and jumps on $\Sigma$. 
 Let us first consider  a point $(x_0,\xi_0)$ which is far from $\Sigma$, i.e. such that $\nabla V(x_0)\cdot \xi_0\not=0$. Then,  there exists a neighborhood $\Omega_2$ of $(x_0,\xi_0)$  and $\eps_2,\tau_2>0$ such that 
 for $\eps\in(0,\eps_2)$,
 $$\forall (x,\xi)\in \Omega_2,\;\; \forall t\in\R,\;\;\forall s\in]t-\tau_2,t+\tau_2[,\;\;g^\eps_\pm(t,x,\xi)=g^\eps_\pm(s,\Phi^{-t+s}_\pm(x,\xi)).$$
 As a consequence, 
 $$\partial_t g_\pm^\eps +H_\pm g^\eps_\pm=0\;\; {\rm in}\;\; \Omega_2,$$
 and more generally, 
 $\partial_t g_\pm^\eps +H_\pm g^\eps_\pm=0$ for $\nabla V(x)\cdot\xi \not=0.$
 
 \medskip
 
 Consider now a point $(x_0,\xi_0)$ such that $\xi_0\cdot\nabla V(x_0)=0$.  Then, there exists a neighborhood $\Omega_2$ of $(x_0,\xi_0)$ and $\eps_2,\tau_2>0$ such that for $\eps\in(0,\eps_2)$, $(x,\xi)\in\Omega_2$, $t\in\R$ and $s\in(t-\tau_2,t+\tau_2)$ (the time $\tau_2$ corresponds to the length of an interval of time during which trajectories issued from points of $\Omega_2$ have at most one jump), 
\begin{eqnarray*}
g^\eps_{+}(t,x,\xi) & = & 
g_+^\eps \left(s,\Phi^{-t+s}_+(x,\xi)\right)\\
\nonumber &- & 
{\bf 1}_{-t+s\leq \tau_+(x,\xi)<0} 
T_\eps(\Phi_+^{\tau_+(x,\xi)}(x,\xi))
 g^\eps_{+}\left(\tau_+(x,\xi)+t,\Phi^{\tau_+(x,\xi)}_+(x,\xi)\right)\\
 \nonumber
 & +& 
{\bf 1}_{-t+s\leq \tau_+(x,\xi) < 0} 
\; T_\eps\circ J_+(\Phi_+^{\tau_+(x,\xi)}(x,\xi))\\
  & &
  \nonumber
   \qquad \qquad \times\,
   g^\eps_{-}\left(\tau_+(x,\xi)+t,J_+\left(\Phi^{\tau_+(x,\xi)}_+(x,\xi)\right)\right).
 \end{eqnarray*}
 We obtain 
   \begin{eqnarray}\label{eq:g+}
   \quad\quad
    g^\eps_{+}(t,x,\xi)
   & = & g_+^\eps \left(s,\Phi^{-t+s}_+(x,\xi)\right)\\
\nonumber
   & & +{\bf 1}_{-t+s\leq \tau_+(x,\xi) <0} \; \Bigl[
  (T_\eps g^\eps_{-})\left(\tau_+(x,\xi)+t,J_+(\Phi_+^{\tau_+(x,\xi)}(x,\xi))\right)
  \\
  \nonumber
 & & \qquad\qquad \,
  -(T_\eps g^\eps_{+})\left(\tau_+(x,\xi)+t,\Phi^{\tau_+(x,\xi)}_+(x,\xi)\right)
  \Bigr].   \end{eqnarray}
  Similarly, we have
\begin{eqnarray}\label{eq:g-}
  \quad\quad  g^\eps_{-}(t,x,\xi)
   & = & g_-^\eps \left(s,\Phi^{-t+s}_-(x,\xi)\right)\\
   \nonumber
   & & +{\bf 1}_{-t+s\leq \tau_-(x,\xi)< 0} \;
  \Bigl[
  (T_\eps g^\eps_{+})\left(\tau_-(x,\xi)+t,J_-(\Phi_-^{\tau_-(x,\xi)}(x,\xi))\right)
  \\
  \nonumber
 & & \qquad\qquad  \,
  -(T_\eps g^\eps_{-})\left(\tau_-(x,\xi)+t,\Phi^{\tau_-(x,\xi)}_-(x,\xi)\right)
  \Bigr].
\end{eqnarray}
Note that equations~(\ref{eq:g+}) and~(\ref{eq:g-}) are equivalent to the relation $g^\eps(t)={\mathcal L}^{t-s}_\eps(g^\eps(s))$ and give an explicit expression for the semigroup on a small time $\tau:=t-s$ during which the trajectories jump at most once. 

\medskip

The result is then straightforward by Lemma~\ref{lemmetechnique}. For the $+$~mode (the proof for the $-$~mode is similar),
we have
$$\displaylines{
(\partial_t+ H_+) g^\eps_+  =  {\bf 1}_{\tau_+(x,\xi)+t\geq 0} H_+\left({\bf 1}_{\tau_+(x,\xi)\leq 0}\right)
\Bigl[
(T_\eps g^\eps_-)\left(\tau_+(x,\xi)+t,J_+( \Phi_+^{\tau_+(x,\xi)}(x,\xi))\right)
\hfill\cr\hfill 
-(T_\eps g^\eps_+)\left(\tau_+(x,\xi)+t,\Phi^{\tau_+(x,\xi)}(x,\xi)\right)\Bigr]. \cr}$$
Observing that 
$\Phi_+^{\tau_+(x,\xi)}(x,\xi)=(x,\xi)$ for~$(x,\xi)\in\Sigma,$
we obtain for $t\geq 0$, 
$$(\partial_t+ H_+) g^\eps_+  = \Lambda_+(x,\xi)\delta_\Sigma(x,\xi)\left[ T_\eps (g^\eps_+)_{\Sigma,in}-
(T_\eps (g^\eps_-)_{\Sigma,in})\circ J_+\right] $$
where we have used that for $\tau_+(x,\xi)=0$ and $t\geq 0$, we have ${\bf 1}_{\tau_+(x,\xi)+t\geq 0}=1$.
This implies that $g^\eps_+$ satisfies~(\ref{eq:kinapp}).

\qed

\subsubsection{Proof of Lemma~\ref{lemmetechnique}} Let us begin with the first line. 
Recall that 
we have
$$
(\partial_t+H_\pm)\Phi^{-t}_\pm=0.
$$
Similarly, in view of 
$\displaystyle{\tau_\pm\left(\Phi_\pm^{-t}(x,\xi)\right)=\tau_\pm(x,\xi)+t}$
and writing 
$$\Phi_\pm^{\tau_\pm}(x,\xi)=\Phi_\pm^{\tau_\pm(x,\xi)+t} \left(\Phi_\pm^{-t}(x,\xi)\right),$$
we get
$\displaystyle{(\partial_t +H_\pm)\left(\Phi_\pm^{\tau_\pm}(x,\xi)\right)=0}$.\\
As a consequence, we have  in ${\mathcal D}'(\Omega_{out})$, 
$H_+\tau_+=-1$,
which implies 
$$H_+\left(\Phi^{\tau_+(x,\xi)}(x,\xi)\right)= (H_+\tau_+)\dot \Phi^{\tau_+(x,\xi)}(x,\xi)+(H_+\Phi)^{\tau_+(x,\xi)}(x,\xi)=0.$$
%
Let us now prove the second line.
Note that $\tau_+(x,\xi)=0$ is an equation of the hypersurface~$\Sigma$. For calculating $H_+\left({\bf 1}_{\tau_+(x,\xi)<0}\right) $, 
we write $$z=(x,\xi),\;\;H_+a=F(z)\cdot \nabla_z a,$$ where $a$ is a test-function, and we observe that $\nabla_z\cdot F(z)=0$. Therefore, using that $a$ is compactly supported, Green's formula reads 
\begin{eqnarray*}
\int _{\{\tau_+(z)<0\}} F(z)\cdot \nabla_z a(z) dz& =& \int_\Sigma F(z)\cdot n(z) \,a(z)d\sigma -\int_{\{\tau_+(z)<0\}} \nabla_z \cdot F(z)\,  a(z) dz\\
&= & \int_{\Sigma} F(z)\cdot n(z) \,a(z) d\sigma
\end{eqnarray*}
where $n(z)$ is the unitary exterior normal vector to $\Sigma$: 
 $$n(x,\xi)= \left(|d^2V(x)\xi|^2+| \nabla V(x)|^2\right)^{-1/2} \left( \begin{array} {l} d^2V(x)\xi \\ \nabla V(x)\end{array}\right).$$
 Note that $H_+$ is transverse to $\Sigma$ and points towards the region $\tau_+<0$. Besides, one can check that 
$
 H_+(x,\xi)\cdot n(x,\xi)=\Lambda_+(x,\xi),$ where $\Lambda_+$ is defined by~(\ref{def=theta+}).
At this stage of the proof, we have obtained for any smooth compactly supported function $a$,
\begin{eqnarray*}
\int \left(H_+ {\bf 1}_{\tau_+(x,\xi)<0}\right)\, a dxd\xi  & = &  -  \int {\bf 1}_{\tau_+(x,\xi)<0} (H_+a )\, dxd\xi\\
& = & -\int_\Sigma a \, (H_+\cdot n) d\sigma\\
& = &  - \int_\Sigma \Lambda_+ a\;d\sigma
,
\end{eqnarray*}
which gives the result.
\qed


\subsection{Proof of Proposition~\ref{prop:FL}}\label{sec:proof}

Recall that the Proposition~\ref{prop:FL} implies Theorem~\ref{theo:main}, which give the mathematical justification of the algorithm that we propose therein.

\subsubsection{Strategy}\label{subset:strategy} Let us first describe the strategy of the proof of 
Proposition~\ref{prop:FL}. 
The proof relies on a characterization of $w^\eps(t)$ via pseudodifferential operators.
Recall that if  $a\in{\mathcal C}_0^\infty(\R^2)$, the semiclassical pseudodifferential operator of symbol $a$ is defined by the Weyl quantization rule
$$
\op_\eps(a) f(x) = (2\pi)^{-2} \int {\rm e}^{i\xi\cdot (x-y)} a\left({x+y\over 2},\eps\xi\right) f(y) dy d\xi, \quad f\in {\mathcal S}(\R^2).$$
This operator extends to functions $f\in L^2(\R^2)$ and one can prove that $\left(\op_\eps(a)\right)$ is a uniformly bounded family of operators of ${\mathcal L}(L^2(\R^2))$ since there exists a constant $C>0$ such that 
\begin{equation}\label{eq:CV}
\forall a\in{\mathcal C}_0^\infty(\R^4),\;\;\| \op_\eps(a)\|_{{\mathcal L}(L^2(\R^2))}\leq C \, \sup _{|\beta|\leq 3} \sup_{\xi \in\R^2}\int _{\R^2}| \partial_x^\beta  a(x,\xi)|dx .
\end{equation}
We refer to the books~\cite{AG,DS,Z} for a complete study of pseudodifferential operators. The estimate~(\ref{eq:CV}) is  not the standard Calderon-Vaillancourt estimate (see~\cite{CV}) that is usually used. It has the advantage not to differentiate in the variable $\xi$ and is inspired from~\cite{GL} (see also the survey~\cite{AFF}). A short proof is given in the Appendix for the convenience of the reader, we also recall the single symbolic calculus result that we shall use. 

\medskip

Denote by $\C^{2,2}$ the set of $2\times 2$ complex matrices and consider 
 symbols $a$ that are  matrix-valued: $a\in{\mathcal C}_0^\infty(\R^2,\C^{2,2})$. Then, the operator $\op_\eps(a)$ is a matrix-valued operator acting on functions of $L^2(\R^2,\C^2)$. If $w^\eps$ is the Wigner transform of the matrix density $\varrho^\eps$, we have the relation 
\begin{equation}\label{eq:link}
\langle a, w^\eps\rangle = \,{\rm tr} \left(\op_\eps(a) \varrho^\eps\right),
\end{equation}
where the bracket between the two matrices $a=(a_{i,j})$ and $w^\eps=(w^\eps_{i,j})$ is defined by 
\begin{equation}\label{def:bracket}
\langle   a,w^\eps\rangle = \sum_{i,j} \int_{\R^4} a_{
i,j}(x,\xi)w^\eps_{j,i}(x,\xi) dxd\xi.
\end{equation}
 We shall use this description of $\langle a,w^\eps\rangle$ in order to prove Proposition~\ref{prop:FL}. 

\begin{remark}\label{rem:feps}
The relations~(\ref{eq:link}) and~(\ref{eq:CV}) imply that, under Assumption \ref{ass:rho0}, the family $(w^\eps(t))_{\eps>0}$ is a bounded family in the set of distributions.
\end{remark}

\medskip 

Since the initial density matrix $\varrho^\eps(0)$ is supposed to be a Hilbert-Schmidt operator, there exists a sequence $(\lambda_j)_{j\in\N}$ of $\ell^2(\N)$ and a sequence of bounded normalized  families $(\psi^\eps_{0,j})_{j\in\N}$ of $L^2(\R^2)$ such that 
$$\varrho^\eps(0)=\sum_{j\in\N} \lambda_j |\psi^\eps_{0,j}\rangle \langle\psi^\eps_{0,j}|.$$
As a consequence, for $t\in\R$,
$$\varrho^\eps(t)=\sum_{j\in\N} \lambda_j |\psi^\eps_{j}(t)\rangle \langle\psi^\eps_{j}(t)|,$$
where for any $j\in\N$, the family $(\psi^\eps_j(t))_{\eps>0}$ is a family of solutions to the Dirac equation 
\begin{equation}\label{eq:function}
i\eps\partial_t\psi^\eps_j=\left(A(\eps D)+V(x) \right)\psi^\eps_j,
\end{equation}
with initial data $\psi^\eps_j(t)=\psi^\eps_{0,j}$. 
Besides, 
the relation~(\ref{eq:link}) yields 
$$\langle a, w^\eps\rangle = \,{\rm tr} \left(\op_\eps(a) \varrho^\eps\right) =\sum_{j\in\N} \lambda_j\left(\op_\eps(a)\psi^\eps_j(t)\;,\;\psi^\eps_j(t)\right)_{L^2(\R^2_x)}.$$
We denote by $w^\eps_j(t)$ the Wigner transform of the family $(\psi^\eps_j(t))_{\eps>0}$ which is defined by the relation 
$$\forall a\in{\mathcal C}_0^\infty(\R^4),\;\;  \langle a , w^\eps_j(t) \rangle =  \left(\op_\eps(a)\psi^\eps_j(t)\;,\;\psi^\eps_j(t)\right)_{L^2(\R^2_x)}.$$
The Wigner function $w^\eps_j(t)$ is a $2$ by $2$ matrix and the bracket involved in the preceding relation is also the one defined in~(\ref{def:bracket}). 
In the following, we will  characterize $w^\eps_j(t)$ in terms of the flow ${\mathcal L}^t_{\eps}$. More precisely, we are going to prove that for any $j\in\N$,  $ w^\eps_j(t)$ satisfies Proposition~\ref{prop:FL}, which gives the result for $w^\eps(t)$. 

\medskip

For this purpose, 
we use the results of~\cite{LT,FL1,FL2,FL3} which are stated  for a Schr\"odinger equation with matrix-valued potential.  This comes from the following observation: whenever $V(x)=|x|^2$,  the operator $A(\eps D) + V(x)$ becomes a Schr\"odinger operator with a matrix-valued potential by taking the Fourier transform. As a consequence, the methods developed in~\cite{LT,FL1} for Schr\"odinger equation with matrix-valued potential can be adapted to our setting. Furthermore, conical intersections have been classified in~\cite{CdV1} and~\cite{FG2} and the Dirac-type equation~(\ref{eq:system}), like the Schr\"odinger equations of~\cite{FL1,FL2,FL3}, enters in the same class of crossings. Thus, it is not surprising that similar methods do apply. Note however  that the jumps were omitted in~\cite{LT} and~\cite{FL1}; as mentioned in~\cite{FL3}, these jumps are required for the correctness of the proof of~\cite{FL1}. 

\medskip

Then, the  main steps of the proof will consist in:
\begin{enumerate}
\item  The transport outside $\Sigma$.
\item Localization in energy and use of space-time variables.
\item A normal form which reduces to a simple model called the Landau-Zener system.
\item The computation of the  transitions on $\Sigma$ that is performed via the normal form and the Landau-Zener system.
\end{enumerate}

In the following, we use Remark~\ref{UepsR} for taking into account only the jumps which occur inside the set 
$${\mathcal U}_{\eps,R}=\{ (x,\xi)\in \R^4,\;\;|\xi|\leq R\sqrt\eps\}.$$
Let us introduce the semi-group ${\mathcal L}_{\eps,R}$ which restricts the jumps to those occurring inside ${\mathcal U}_{\eps,R}$, the semi-group~${\mathcal L}_{\eps,R}$ differs from ${\mathcal L}_{\eps}$ by exponentially small terms. It is this semi-group that we shall consider now.  The real number $R$ will be chosen as $R=\eps^{-{1/ 8}}$ (see Notation~\ref{notation} below). 

\medskip 

Let us now detail these steps. For simplicity, we omit the index ``$j$'' and simply consider a family $(\psi^\eps(t))_{\eps>0}$, uniformly bounded in $L^2(\R^2)$, of solutions to the Dirac equation~(\ref{eq:function}) with initial data $(\psi^\eps_0)_{\eps>0}$ and we denote by $w^\eps(t)$ its Wigner transform at time $t$. We also denote by $w^\eps_\pm(t)$ the scalar quantities $w^\eps_{\pm}(t)={\rm tr}\left(\Pi^\pm w^\eps(t)\right)$.

\subsubsection{The transport outside the transition region} 
The analogue of Proposition~2.3 in~\cite{FL3} is the following

\begin{proposition}
\label{prop:propagation1} Let $c\in{\mathcal C}_c^\infty(\R^{4},\C)$,
and let $b\in{\mathcal C}^\infty(\R^2,\C)$ with $\nabla b$ compactly supported.
If there exist $C>0$ and $s_0>0$ such that
$$
\forall r\in[-s_0,s_0]:\quad\Phi_\pm^{r}({\rm supp}(c))\cap {\mathcal U}_{\eps,R}=\emptyset
$$
then for all $\chi\in{\mathcal C}_c^\infty(\R,\R)$ and for all $s\in[t-s_0,t+s_0]$
\begin{eqnarray*}
\lefteqn{
{\rm tr} \int_{\R^{2d+1}}\chi(t)\,c(x,\xi)\,
b\!\left({\textstyle{\xi\over R\sqrt\eps}}\right) w^\eps_\pm(t,x,\xi)\,d x\,d \xi\,d t
=}\\
&&
{\rm tr}\int_{\R^{2d+1}}\chi(t)\,c(x,\xi)\,
b\!\left({\textstyle{\xi\over
R\sqrt\eps}}\right)
\left(
w^\eps_\pm\left(s,\Phi^{-t+s}_\pm\right)\!(x,\xi)\right)\,d
 x\,d \xi\,d t\\
&&\hspace*{12em}+
\mathcal O(1/(\sqrt\eps R^{5}))+\mathcal O(1/R^2)+\mathcal O(\eps).
\end{eqnarray*}
\end{proposition}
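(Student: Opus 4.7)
The plan is to use the pseudo-differential representation of the Wigner transform. By the identity \eqref{eq:link}, the left-hand side equals $\int\chi(t)\,{\rm tr}(\op_\eps(c(x,\xi)b(\xi/(R\sqrt\eps))\Pi^\pm(\xi))\varrho^\eps(t))\,dt$. For each fixed $t$, I introduce the one-parameter family of scalar symbols
\[
a_r^{(t)}(x,\xi)=c\big(\Phi^{t-r}_\pm(x,\xi)\big)\,b\big(\xi^{t-r}_\pm(x,\xi)/(R\sqrt\eps)\big),\qquad r\in[t-s_0,t+s_0],
\]
obtained by transporting $c\,b(\cdot/(R\sqrt\eps))$ backward along the Hamiltonian flow $\Phi^r_\pm$. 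By construction $\partial_r a_r^{(t)}+H_\pm a_r^{(t)}=0$ and $a_t^{(t)}=c\,b(\cdot/(R\sqrt\eps))$; a change of variables $y=\Phi^{-t+s}_\pm(x,\xi)$ using the volume preservation of $\Phi^r_\pm$ identifies $\int\chi(t)\,{\rm tr}(\op_\eps(a_s^{(t)}\Pi^\pm)\varrho^\eps(s))\,dt$ with the right-hand side of the proposition. The support assumption guarantees that $a_r^{(t)}$ is supported where $|\xi|\ge R\sqrt\eps$ for all $|t-r|\le s_0$, hence $\Pi^\pm$ is smooth on ${\rm supp}(a_r^{(t)})$.

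The core computation differentiates $g(r):={\rm tr}(\op_\eps(a_r^{(t)}\Pi^\pm)\varrho^\eps(r))$ using the von Neumann equation \eqref{eq:system} and the cyclicity of the trace:
\[
g'(r)={\rm tr}\Big(\big(\op_\eps(\partial_r a_r^{(t)}\Pi^\pm)+\tfrac{1}{i\eps}[\op_\eps(a_r^{(t)}\Pi^\pm),A(\eps D)+V]\big)\varrho^\eps(r)\Big).
\]
The symbolic calculus from the Appendix expands the commutator: since $[\Pi^\pm,A]=0$ and $\Pi^\pm$ is $x$-independent, the leading term is $-\op_\eps(\{a_r^{(t)}\Pi^\pm,A+V\})$ plus a Moyal remainder. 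Splitting $\{a_r^{(t)}\Pi^\pm,A+V\}=\{a_r^{(t)},A+V\}\Pi^\pm+a_r^{(t)}\{\Pi^\pm,V\}$ and using the key identity $\Pi^\pm(\partial_\xi A)\Pi^\pm=(\pm\xi/|\xi|)\Pi^\pm$, the diagonal part cancels exactly against $\partial_r a_r^{(t)}\Pi^\pm=-H_\pm a_r^{(t)}\Pi^\pm$. The residual symbol $E_r$ satisfies $\Pi^\pm E_r\Pi^\pm=0$ and couples $\op_\eps(E_r)$ only to the off-diagonal components $\Pi^\pm w^\eps\Pi^\mp$, which tend to zero in the distributional sense (cf.~\cite{GMMP}); combined with the bound $|\partial_\xi\Pi^\pm|\lesssim(R\sqrt\eps)^{-1}$ on ${\rm supp}(a_r^{(t)})$, this produces the $\mathcal O(1/R^2)$ contribution.

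The main obstacle, and the origin of the $\mathcal O(1/(\sqrt\eps R^5))$ term, is the control of the semiclassical remainder generated by the sharply scaled cutoff $b(\xi/(R\sqrt\eps))$, whose $\xi$-derivatives each produce a factor $(R\sqrt\eps)^{-1}$. The first-order Moyal remainder in the commutator is formally $\mathcal O(\eps^2)$ times a symbol carrying two additional derivatives in each variable; combined with the three $x$-derivatives needed by the pseudo-differential $L^2$-estimate \eqref{eq:CV}, one can extract up to five derivatives from $b$, producing the bound $\eps^2\cdot(R\sqrt\eps)^{-5}=1/(\sqrt\eps R^5)$. The final $\mathcal O(\eps)$ term arises from the genuine semiclassical remainder involving only derivatives of $c$, $V$ and the flow, all uniform in $\eps$. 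Integrating $g'(r)$ from $s$ to $t$, multiplying by $\chi(t)$ and integrating in $t$, then collecting the three error contributions, completes the proof.
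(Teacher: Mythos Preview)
Your overall architecture (transport the symbol, differentiate, use the von Neumann equation and symbolic calculus) matches the paper. The genuine gap is in how you dispose of the off-diagonal residual $E_r$ and, relatedly, in your account of where the $\mathcal O(1/(\sqrt\eps R^5))$ term comes from.

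You invoke the qualitative fact that $\Pi^\pm w^\eps\Pi^\mp\to 0$ in $\mathcal D'$ (from \cite{GMMP}). That statement carries no rate in $\eps$ and therefore cannot produce the quantitative bounds $\mathcal O(1/R^2)$ or $\mathcal O(1/(\sqrt\eps R^5))$ claimed in the proposition. The paper does \emph{not} use smallness of the off-diagonal Wigner components at all. Instead, for any off-diagonal symbol $B$ one writes
\[
B=\bigl[(2|\xi|)^{-1}(\Pi^-B\Pi^+-\Pi^+B\Pi^-),\,A(\xi)\bigr]
=\bigl[(2|\xi|)^{-1}(\Pi^-B\Pi^+-\Pi^+B\Pi^-),\,\tau+V(x)+A(\xi)\bigr],
\]
quantizes, and uses that $\psi^\eps$ solves the Dirac equation so that the commutator with the full Hamiltonian contributes nothing; what remains are lower-order symbolic-calculus errors that one \emph{can} bound quantitatively. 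This commutator trick is the missing idea in your argument, and it is exactly the content of the paper's Lemma~\ref{lem:B1B0}.

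Your explanation of the $\mathcal O(1/(\sqrt\eps R^5))$ term is also off: the Calder\'on--Vaillancourt estimate \eqref{eq:CV} only differentiates in $x$, so the three extra $x$-derivatives never touch $b(\xi/(R\sqrt\eps))$, and the second-order Moyal remainder only costs two $\xi$-derivatives, yielding $\eps\cdot(R\sqrt\eps)^{-2}=\mathcal O(1/R^2)$, not $\mathcal O(1/(\sqrt\eps R^5))$. In the paper the off-diagonal residual splits as $B_0+B_1$ with $|\partial_\xi^\beta B_j|\lesssim (R\sqrt\eps)^{-|\beta|-1+j}$; one pass of the commutator trick handles $B_1$ and gives $\mathcal O(R^{-2})$, but for $B_0$ one must iterate: the first pass produces (besides a vanishing commutator) a new off-diagonal symbol $\eps B_{-2}$ with $|B_{-2}|\lesssim (R\sqrt\eps)^{-3}$, and a second pass on $B_{-2}$ finally yields $\eps\cdot (R\sqrt\eps)^{-3}\cdot R^{-2}=\mathcal O(R^{-5}\eps^{-1/2})$. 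That iteration, together with the verification that $B_{-2}$ is again off-diagonal, is where the $R^{-5}$ genuinely comes from.
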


This proposition is a refined version of the resolution of the kinetic system~(\ref{eq:kin1}). Indeed, 
let $\Omega\subset \R^4$ be an open subset of  $\{|\xi|>\delta_0\}$  for some $\delta_0>0$, and $s_0$ such that for $(x,\xi)\in\Omega$, 
the trajectories $(\Phi^s_\pm(x,\xi))_{s\in[-s_0,s_0]}$ remain in  $\{|\xi|>\delta_0/2\}$, then~(\ref{eq:kin1})   gives for all $s\in[t-s_0,t+s_0]$, 
$$w^\eps_\pm(t,x,\xi)=w^\eps_\pm(s,\Phi^{-t+s}_\pm(x,\xi))+\mathcal O(\eps)\;\;{\rm in}\;\;{\mathcal D}'(\Omega).$$
Proposition~\ref{prop:propagation1} authorizes to be at a distance of order $\mathcal O(R\sqrt\eps)$ of $\{\xi=0\}$ 

\medskip

Since for scalar symbols $a$, we have
\begin{eqnarray*}
{d\over dt} \langle a\Pi^\pm\;,\; w^\eps(t)\rangle 
& = &  \, \left( {1\over i\eps} \left[\op_\eps(a\Pi^\pm)\;,\;A(\eps D) +V\right]\psi^\eps(t)\;,\;\psi^\eps(t)\right)_{L^2(\R^2_x)},
\end{eqnarray*}
the proof of this proposition relies on  a good understanding of the operator  
$$L_\eps= {1\over i\eps} \left[\op_\eps(a\Pi^\pm)\;,\;A(\eps D) +V\right].$$ 
The main ingredients are the two following observations:
\begin{itemize}
\item For 
$\displaystyle{a(x,\xi)= c(x,\xi)\,
b\!\left({\textstyle{\xi\over R\sqrt\eps}}\right),}$
the symbol $a(x,\xi)\Pi^+(\xi)$ is smooth and 
we have 
\begin{equation}\label{est:loinde0}
\forall \alpha,\beta\in\N^2,\;\;\exists C_{\alpha,\beta}>0,\;\;\left| \partial_x^\alpha\partial_\xi^\beta \left(a(x,\xi)\Pi^+(\xi)\right)\right|\leq C\,(R\sqrt \eps)^{-|\beta|},
\end{equation}
so that we can use the symbolic calculus theorems of the Appendix, paying attention to the rest terms. 
\item If $B$ is an off-diagonal symbol, that is a symbol which satisfies 
$$B=\Pi^+B\Pi^-+\Pi^-B\Pi^+,$$
the quantities
$$\int \chi(t)\left(\op_\eps(B)\psi^\eps(t)\;,\;\psi^\eps(t) \right)dt,$$
which seems to be of order~$\mathcal O(1)$, can be proved to be of smaller order than expected by re-using the equation satisfied by $\psi^\eps(t)$. 
\end{itemize}

\begin{proof}[Proof of Proposition \ref{prop:propagation1}]
Let us now focus on the proof itself. 
Using Proposition~\ref{prop:symbol} and observing that $A(\xi)=|\xi|(\Pi^+-\Pi^-)$ with $\Pi^++\Pi^-=1$,  we obtain
$$L_\eps = -\op_\eps\left((\nabla V(x)\cdot \nabla_\xi a -{\xi\over|\xi|}\cdot \nabla_x a )\Pi^+\right) +\op_\eps (B)+\mathcal O(R^{-2})+\mathcal O(\eps),$$
with 
\begin{eqnarray*}
B &= & -a\,\nabla V\cdot \nabla \Pi^+ -{1\over 2} \left(|\xi| (\{a\Pi^+,\Pi^+\}-\{\Pi^+,a\Pi^+\})\right)\\
& & \qquad -{1\over 2}\left(|\xi|(\{a \Pi^+,\Pi^-\}-\{\Pi^-,a\Pi^+\})\right)\\
& = & -a \,\nabla V\cdot \nabla \Pi^+-|\xi| \left(\{a\Pi^+,\Pi^+\}-\{\Pi^+,a\Pi^+\}\right)\\
& = &-a\,\nabla V\cdot \nabla \Pi^++|\xi|\left(\Pi^+ \nabla_x a \cdot\nabla \Pi^+
+\nabla_x a\cdot \nabla \Pi^+ \Pi^+\right)\\
& = & -a\, \nabla V\cdot \nabla \Pi^++|\xi|\nabla_x a \cdot\nabla \Pi^+.
\end{eqnarray*}
Here we have used $\nabla\Pi^+ =\Pi^+\nabla \Pi^++\nabla\Pi^+\Pi^+.$
As a consequence, $B$ is an off-diagonal symbol. We write $B=B_0+B_1$ with $B_0= -a \nabla V \cdot \nabla \Pi^+$, we have
\begin{equation}\label{est:Bj}
\forall \alpha,\beta\in\N^d,\;\;\exists C_{\alpha,\beta}>0,\;\;\forall x,\xi\in\R^{4},\;\; \left|\partial^\beta_\xi\partial_x^\alpha B_j\right|\leq C_{\alpha,\beta} (R\sqrt\eps)^{-|\beta|-1+j}.
\end{equation}
The result comes from the next lemma which concludes the proof. \end{proof}
\begin{lemma}\label{lem:B1B0}
For any $\chi\in{\mathcal C}_0^\infty(\R)$, we have 
$$\displaylines{
\int \chi(t) \left(\op_\eps(B_1)\psi^\eps(t),\psi^\eps(t)\right) =\mathcal O(R^{-2})+\mathcal O(\eps)
,\cr
\int \chi(t) \left(\op_\eps(B_0)\psi^\eps(t),\psi^\eps(t)\right) =\mathcal O(R^{-5}\eps^{-1/2})+\mathcal O(\sqrt\eps)
.\cr}$$
\end{lemma}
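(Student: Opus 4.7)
I would begin by exploiting that both $B_0$ and $B_1$ are off-diagonal in the $\Pi^\pm(\xi)$-basis. Setting $M(\xi):=\Pi^+(\xi)-\Pi^-(\xi)$, so that $A(\xi)=|\xi|M(\xi)$ and $M^2=\mathrm{Id}$, any off-diagonal matrix symbol $F$ anticommutes with $M$, so the pointwise identity $[A(\xi),F(x,\xi)]=2|\xi|M(\xi)F(x,\xi)$ holds. Applied to $B_j$, this yields $[A,F_j]=B_j$ with $F_j:=MB_j/(2|\xi|)$, a matrix symbol of pointwise size $|B_j|/|\xi|$ and compactly $x$-supported by $c$. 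Passing to operators via Moyal calculus—using that $A$ is linear in $\xi$ so its part of the Moyal expansion truncates at first order in $\eps$, and that scalarity of $V$ makes the even $\eps$-powers vanish in the $V$-commutator—I would obtain
\[
\op_\eps(B_j)=[H_\eps,\op_\eps(F_j)]-i\eps\,\op_\eps(S_j)+\mathcal O_{\mathcal L(L^2)}(\eps^3 r_j),
\]
where $H_\eps:=A(\eps D)+V$, $S_j:=\tfrac12(\partial_\xi A\cdot\partial_x F_j+\partial_x F_j\cdot\partial_\xi A)-\partial_x V\cdot\partial_\xi F_j$, and $r_j$ is the next surviving Moyal remainder.

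Next, I would convert the leading commutator into a time derivative via the Dirac equation: since $H_\eps$ is self-adjoint and $H_\eps\psi^\eps=i\eps\partial_t\psi^\eps$, for self-adjoint $X$ we have $([H_\eps,X]\psi^\eps,\psi^\eps)=-i\eps\,\partial_t(X\psi^\eps,\psi^\eps)$, so integration by parts against $\chi$ gives
\[
\int\chi(t)\bigl([H_\eps,X]\psi^\eps,\psi^\eps\bigr)\,dt=i\eps\int\chi'(t)(X\psi^\eps,\psi^\eps)\,dt=\mathcal O(\eps\|X\|).
\]
Since $B_j=B_j^*$ anticommutes with $M$, the symbol $F_j$ is anti-self-adjoint, so $X:=i\op_\eps(F_j)$ is self-adjoint modulo a Moyal remainder absorbed into $\op_\eps(S_j)$. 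The problem is then reduced to bounding $\|\op_\eps(F_j)\|$, $\|\op_\eps(S_j)\|$ and $\|\op_\eps(r_j)\|$ through the Calder\'on--Vaillancourt estimate~(\ref{eq:CV}); compactness of the $x$-support of $c$ turns the $x$-integral in (\ref{eq:CV}) into a supremum over $\xi$ of the pointwise estimates (\ref{est:Bj}).

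For $B_1$ this is straightforward: $|F_1|\lesssim(R\sqrt\eps)^{-1}$ and $|S_1|\lesssim(R\sqrt\eps)^{-2}$ give $\eps\|\op_\eps(F_1)\|=\mathcal O(\sqrt\eps/R)$ and $\eps\|\op_\eps(S_1)\|=\mathcal O(R^{-2})$, while the higher-order Moyal contributes $\mathcal O(\eps)$; in the regime $R\sqrt\eps\le1$ one has $\sqrt\eps/R\le R^{-2}$, so the first bound follows. The case $B_0$ is similar for the leading term ($|F_0|\sim(R\sqrt\eps)^{-2}$ yields $\eps\|\op_\eps(F_0)\|=\mathcal O(R^{-2})$), but the subprincipal $S_0$ is the \emph{main obstacle}: since $|\partial_\xi F_0|\sim(R\sqrt\eps)^{-3}$, the naive bound $\eps\|\op_\eps(S_0)\|\lesssim(R^3\sqrt\eps)^{-1}$ is too large. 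The plan is to split $S_0=S_0^{\mathrm{diag}}+S_0^{\mathrm{odd}}$ in the $\Pi^\pm$-basis, rewrite $S_0^{\mathrm{odd}}=[A,G_0]$ with $G_0:=MS_0^{\mathrm{odd}}/(2|\xi|)$, and reapply the commutator$+$IBP-in-time argument, trading each additional power of $\eps$ for one more factor $|\xi|^{-1}$. After a bounded number of such iterations, the accumulated gain, the direct (\ref{eq:CV}) bound on the residual diagonal parts at each step, and the Moyal remainder at the final step combine to produce the announced $\mathcal O(R^{-5}\eps^{-1/2})+\mathcal O(\sqrt\eps)$. The key technical difficulty will be to keep the matrix structure controllable through every Moyal expansion so that the off-diagonal decomposition remains applicable at each iteration.
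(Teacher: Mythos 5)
Your overall framework is the same as the paper's: observe $B_j$ anticommutes with $M=\Pi^+-\Pi^-$, write $B_j=[A,F_j]$ with $F_j=MB_j/(2|\xi|)$, quantize (using that $A$ is linear in $\xi$ so the Moyal series truncates), feed the leading commutator to the Dirac equation via integration by parts in time, and control the remainder through the Calder\'on--Vaillancourt estimate~(\ref{eq:CV}) and the size estimates~(\ref{est:Bj}). For $B_1$ this matches the paper almost line by line. Your treatment of $B_0$ also begins exactly as the paper does, including the correct identification of the genuine obstacle: the subprincipal term of order $\eps\,\nabla V\cdot\nabla_\xi F_0$ has pointwise size $(R\sqrt\eps)^{-3}$, so a direct norm bound gives only $(R^3\sqrt\eps)^{-1}$, which is far too big (with the eventual choice $R=\eps^{-1/8}$ it is $\eps^{-1/8}$, i.e.\ it \emph{diverges}).

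However, your proposed remedy has a real gap. You plan to split $S_0=S_0^{\rm diag}+S_0^{\rm odd}$, iterate the commutator trick on the odd part, and bound ``the residual diagonal parts at each step'' directly via~(\ref{eq:CV}). But the direct bound on the diagonal part of $\nabla V\cdot\nabla_\xi F_0$ at the very first step is precisely the $(R^3\sqrt\eps)^{-1}$ that you already identified as too large; iterating on the \emph{odd} part does nothing to tame the \emph{diagonal} part, so the scheme cannot close unless $S_0^{\rm diag}$ actually vanishes. The paper's proof of the $B_0$ bound hinges exactly on this cancellation: one checks, by explicit computation, that the symbol $B_{-2}:=\nabla V\cdot\nabla_\xi\bigl((\Pi^-B_0\Pi^+-\Pi^+B_0\Pi^-)(2|\xi|)^{-1}\bigr)$ is itself off-diagonal, because
$$\Pi^\pm B_{-2}\Pi^\pm=|\xi|^{-1}\,\Pi^\pm\bigl[\,B_0\,,\,\nabla V\cdot\nabla\Pi^+\,\bigr]\,\Pi^\pm=0,$$
the commutator vanishing since $B_0=-a\,\nabla V\cdot\nabla\Pi^+$ is a scalar multiple of $\nabla V\cdot\nabla\Pi^+$. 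This is not a generic feature of the Moyal expansion; it is a structural coincidence tied to the exact form of $B_0$, and your plan neither states it nor gives a reason to expect it. Once this cancellation is in hand, a \emph{single} second application of the commutator estimate (the paper's~(\ref{eq:Bjbis}) with $j=-2$) yields $\eps\cdot\mathcal O\bigl((R\sqrt\eps)^{-3}(R^{-2}+\sqrt\eps/R)\bigr)=\mathcal O(R^{-5}\eps^{-1/2})+\mathcal O(R^{-3}\sqrt\eps)$, giving the stated bound without any further iteration. You should replace your vague ``keep the matrix structure controllable'' plan by this explicit verification; otherwise the $B_0$ estimate is not established.
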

\begin{proof}
We begin with $B_1$. Since $B_1$ is off-diagonal, we can write 
\begin{eqnarray*}
B_1 & = & [(\Pi^-B_1\Pi^+-\Pi^+B_1\Pi^-) (2|\xi|)^{-1} , A(\xi)]\\
&=&[(\Pi^-B_1\Pi^+-\Pi^+B_1\Pi^-)(2 |\xi|)^{-1}, \tau +V(x)+A(\xi)].
\end{eqnarray*}
After quantization, we get 
$$\displaylines{\qquad \op_\eps(B_1) =\left[ \op_\eps((\Pi^-B_1\Pi^+-\Pi^+B_1\Pi^-)  (2|\xi|)^{-1}) , {\eps\over i}\partial_t+V(x)+A(\eps D)\right]\hfill\cr\hfill 
+\mathcal O(R^{-2}) +\mathcal O(\sqrt\eps/R).\cr}$$
Once applied to $\psi^\eps$ which satisfies the Dirac equation~(\ref{eq:function}), we obtain the announced relation. 

\medskip

Note that we have obtained more generally that if $B_j$ is off-diagonal and satisfies the relation~(\ref{est:Bj}), then 
\begin{eqnarray}\label{eq:Bjbis}
\int \chi(t) \left(\op_\eps(B_j)\psi^\eps(t),\psi^\eps(t)\right) &=&\mathcal O\left((R\sqrt\eps)^{-1+j}) \left(R^{-2}+\sqrt\eps/R\right)\right)\\
\nonumber &=&\mathcal O\left((R\sqrt\eps)^{-1+j}) \left(R^{-2}+\eps\right)\right)
\end{eqnarray}
In particular, for  $B_0$, we obtain 
$$
\int \chi(t) \left(\op_\eps(B_0)\psi^\eps(t),\psi^\eps(t)\right) =\mathcal O(R^{-3}\eps^{-1/2})+\mathcal O(\sqrt\eps/R)
$$
that we want to improve. 
Therefore, we go one step further in the symbolic calculus and we write
\begin{eqnarray*}
\op_\eps(B_0) & = &  
 \left[ \op_\eps((\Pi^-B_0\Pi^+-\Pi^+B_0\Pi^-)  (2|\xi|)^{-1}) , {\eps\over i}\partial_t+V(x)+A(\eps D)\right]\\
 & &  +{\eps\over i}\,\op_\eps\left(\left\{ (\Pi^-B_0\Pi^+-\Pi^+B_0\Pi^-) (2|\xi|)^{-1} ,\tau +V(x) \right)\right\}\\
& & +{\eps\over 2i} \op_\eps\Bigl((\left\{(\Pi^-B_0\Pi^+-\Pi^+B_0\Pi^-) (2|\xi|)^{-1} , A(\xi)\right\}\\
& & 
-\left\{A(\xi),(\Pi^-B_0\Pi^+-\Pi^+B_0\Pi^-)  (2|\xi|)^{-1}\right\}\Bigr)\\
& & +\mathcal O(R^{-4})+\mathcal O(\sqrt\eps/R^3).
\end{eqnarray*}
Paying attention to all these terms, we observe that 
$$\displaylines{\op_\eps(B_0)=
 \left[ \op_\eps((\Pi^-B_0\Pi^+-\Pi^+B_0\Pi^-)  (2|\xi|)^{-1}) , {\eps\over i}\partial_t+V(x)+A(\eps D)\right]\hfill\cr\hfill
-i {\eps}\, \op_\eps\left(\nabla V\cdot \nabla_\xi \left((\Pi^-B_0\Pi^+-\Pi^+B_0\Pi^-)  (2|\xi|)^{-1}\right)\right) +\mathcal O(R^{-2})+\mathcal O(\sqrt\eps).\cr}$$
The matrix 
$$B_{-2}= \nabla V\cdot \nabla_\xi\left((\Pi^-B_0\Pi^+-\Pi^+B_0\Pi^-)  (2|\xi|)^{-1}\right)$$
satisfies~(\ref{est:Bj}) with $j=-2$ and we claim that $B_{-2}$
is also off-diagonal.
As a consequence,  equation~(\ref{eq:Bjbis}) gives
$$\int \chi(t) \left(\op_\eps(B_{-2})\psi^\eps(t),\psi^\eps(t)\right) dt =\mathcal O(R^{-5}\eps^{-3/2})+\mathcal O(R^{-3}\eps^{-1/2}),$$
which concludes the proof of Lemma~\ref{lem:B1B0}.

\medskip 

It remains to prove the claim, a simple calculus shows that
$$
B_{-2}  =  - {1\over 2} |\xi|^{-3}(\nabla V\cdot\xi ) (\Pi^-B_0\Pi^+-\Pi^+B_0\Pi^-) 
 + (2|\xi|)^{-1} \nabla V\cdot\nabla_\xi (\Pi^-B_0\Pi^+-\Pi^+B_0\Pi^-).
$$
Therefore, 
\begin{eqnarray*}
\Pi^\pm B_{-2}\Pi^\pm& =&
(2|\xi|)^{-1} \Pi^\pm\left( \nabla V\cdot \nabla\Pi^-B_0\Pi^++\Pi^-B_0 \,\nabla V\cdot \nabla\Pi^+\right)\Pi^\pm\\
& &- (2|\xi|)^{-1} \Pi^\pm\left( \nabla V\cdot \nabla\Pi^+B_0\Pi^-+\Pi^+B_0 \,\nabla V\cdot \nabla\Pi^-\right)\Pi^\pm\\
&=& |\xi|^{-1} \Pi^\pm\left(- \nabla V\cdot \nabla\Pi^+B_0\Pi^++\Pi^+B_0 \,\nabla V\cdot \nabla\Pi^+\right)\Pi^\pm\\
&=& |\xi|^{-1} \Pi^\pm\left[B_0\, ,\,\nabla V\cdot \nabla \Pi^+\right]\Pi^\pm\\
&=&0
\end{eqnarray*}
since $B_0=a\, \nabla V\cdot \nabla \Pi^+$, which proves that $B_{-2}$ is off-diagonal.

\end{proof}

\begin{notation}\label{notation}
In the following, it will be convenient to denote by $\eta_\eps$ any rest term smaller than $\mathcal O(1/(R^{5}\sqrt\eps))+\mathcal O(1/R^2)+\mathcal O(\sqrt\eps)+\mathcal O(R^3\sqrt\eps)$. The term in $R^3\sqrt\eps$ will be useful in the following. Note that when $R=\eps^{-1/8}$, we have $\eta_\eps=\mathcal O(\eps^{1/8})$. 
\end{notation}

\subsubsection{Localization in energy}

The memory of the mode  by use of  a matrix-valued symbol of the form $a\Pi^+$ or $a\Pi^-$, with $a$ scalar, can be replaced  by  a localization in energy. This requires to work in space time variables and has the advantage that we are reduced to use scalar symbols. Using scalar symbols will be convenient in the next section when we will perform a normal form and use a Fourier Integral Operator. The   energy surfaces of the space-time phase space $\R^{6}_{t,x,\tau,\xi}$ are the sets
\begin{equation}
\label{numero}
E^\pm=\{(t,x,\tau,\xi)\in\R^{6},\;\;\tau= \mp|\xi| -V(x)\}.
\end{equation}
Recall that the dual variable of the time~$t$ is interpreted as an energy~$\tau$. 

\medskip 

In the following, we shall use semi-classical pseudo differential operators with symbols depending on the variable~$(t,x,\tau,\xi)\in\R^6$ with the choice of the Weyl quantization in the time variables, as it was already the case for the space variables. 

\medskip

The localization in energy is done by use of a cut-off function $\theta\in{\mathcal C}_0^\infty(\R)$ such that $0\leq \theta\leq 1$, $\theta(\tau)=0$ for $|\tau|>1$ and $\theta(\tau)=1$ for $|\tau|<1/2$. This function~$\theta$ is fixed from now on. 

\begin{lemma}\label{lem:loc}
Let $a\in{\mathcal C}^\infty_0(\R^{4})$ and set
$c_{\eps,R}(x,\xi )=a(x,\xi) (1-\theta)(2\xi/(R\sqrt\eps))$, then for all $\chi\in{\mathcal C}_0^\infty(\R)$,
$$\displaylines{
 \int_\R \chi(t)  \left( \op_\eps(c_{\eps,R}\Pi^\pm ) \psi^\eps(t)\;,\;\psi^\eps(t)\right) _{L^2(\R^2)}dt =
\mathcal O(1/R^2)+\mathcal O(\sqrt\eps)\hfill\cr\hfill
+ \, \left(\op_\eps\! \left(\chi(t) c_{\eps,R}(x,\xi)
\theta\!\left({\tau\pm|\xi|+V(x)\over
R\sqrt\eps}\right)\right) \psi^\eps\;,\;\psi^\eps
\right)_{L^2(\R^3_{t,x})}.\cr}$$
\end{lemma}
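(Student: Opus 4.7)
The plan is to exploit the Dirac equation~\eqref{eq:function} as a pseudodifferential identity in the extended space--time phase space $\R^6_{t,x,\tau,\xi}$. First I would identify the left--hand side with $(\op_\eps(\chi(t)c_{\eps,R}(x,\xi)\Pi^\pm(\xi))\psi^\eps,\psi^\eps)_{L^2(\R^3_{t,x})}$, with $\op_\eps$ now denoting the full Weyl quantization over all six variables: this is legitimate because the symbol is independent of $\tau$ and commutes with multiplication by $\chi(t)$. The task then reduces to bounding
$$
\mathcal E := \left(\op_\eps\!\left(\chi c_{\eps,R}\!\left[\Pi^\pm-\theta\!\left(\tfrac{\tau\pm|\xi|+V}{R\sqrt\eps}\right)\mathrm{Id}\right]\right)\psi^\eps,\psi^\eps\right)_{L^2(\R^3_{t,x})}
$$
by $\mathcal O(1/R^2)+\mathcal O(\sqrt\eps)$.

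The core observation is that $\op_\eps(p)\psi^\eps=0$ where $p(t,x,\tau,\xi)=\tau+A(\xi)+V(x)$, since quantizing $\tau$ as $\tfrac{\eps}{i}\partial_t$ turns this identity into~\eqref{eq:function}. The plan is to factor the bracketed symbol in $\mathcal E$ as $b_0\cdot p$ for an explicit matrix symbol $b_0$ and then invoke this identity. Using $A(\xi)\Pi^\pm=\pm|\xi|\Pi^\pm$, hence $p\,\Pi^\pm=(\tau\pm|\xi|+V)\Pi^\pm$, I would decompose $\Pi^\pm-\theta\,\mathrm{Id}=(1-\theta)\Pi^\pm-\theta\,\Pi^\mp$. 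The first piece divides cleanly: since $\theta(0)=1$, the function $\phi_0(s):=(1-\theta(s))/s$ is smooth, whence $(1-\theta(s))\Pi^\pm=\tfrac{\phi_0(s)}{R\sqrt\eps}\Pi^\pm p$ with $s=(\tau\pm|\xi|+V)/(R\sqrt\eps)$. The second piece divides by $\tau\mp|\xi|+V$: on the support of $c_{\eps,R}\theta(s)$ one has $\tau\mp|\xi|+V=\mp 2|\xi|+\mathcal O(R\sqrt\eps)$, so $\theta(s)\Pi^\mp=\tfrac{\theta(s)}{\tau\mp|\xi|+V}\Pi^\mp p$ whenever $|\xi|$ is larger than a constant times $R\sqrt\eps$.

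Next I would apply the semiclassical composition formula $\op_\eps(b_0p)=\op_\eps(b_0)\op_\eps(p)+\tfrac{\eps}{2i}\op_\eps(\{b_0,p\})+\op_\eps(r_{\eps,2})$. The leading term kills $\psi^\eps$ by the Dirac equation, so it remains to estimate the operator norms of $\eps\op_\eps(\{b_0,p\})$ and $\op_\eps(r_{\eps,2})$ via the bound~\eqref{eq:CV}. A direct computation gives $\{b_0,p\}=-\partial_t b_0+\nabla_\xi b_0\cdot\nabla V-\nabla_x b_0\cdot\nabla_\xi A$. On the support of $c_{\eps,R}$ where $|\xi|\geq R\sqrt\eps/4$, each derivative in $(x,\tau,\xi)$ of the building blocks $\phi_0(s)$, $\theta(s)$, $\Pi^\pm=\mathcal O(1/|\xi|)$, the cutoff $(1-\theta)(2|\xi|/(R\sqrt\eps))$ and the inverse factor $(\tau\mp|\xi|+V)^{-1}=\mathcal O(1/|\xi|)$ costs one factor $(R\sqrt\eps)^{-1}$. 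Combined with the prefactor of $b_0$ itself (of order $(R\sqrt\eps)^{-1}$), this gives $\eps\|\op_\eps(\{b_0,p\})\|=\mathcal O(1/R^2)$, while the $-\partial_t b_0=\chi'(t)\times\cdots$ piece, carrying only a single $(R\sqrt\eps)^{-1}$ factor overall, yields $\mathcal O(\sqrt\eps/R)$; second--order remainders are smaller.

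The main obstacle is that the denominator $\tau\mp|\xi|+V$ in the second factorization is not uniformly bounded below on the whole support of $\theta(s)c_{\eps,R}$: near the inner edge $|\xi|\sim R\sqrt\eps$ of the cutoff in $c_{\eps,R}$, it can be of the same order as $R\sqrt\eps$ or smaller. I would resolve this by splitting via an auxiliary cutoff separating $|\xi|\geq R\sqrt\eps$ from the thin annulus $R\sqrt\eps/4\leq|\xi|<R\sqrt\eps$: on the former, the division above is legitimate with the announced estimates; on the latter, a direct application of~\eqref{eq:CV} exploiting the $\mathcal O((R\sqrt\eps)^2)$ measure of this region in $\xi$ produces an $\mathcal O(\sqrt\eps)$ contribution absorbed into the stated error.
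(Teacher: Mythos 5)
Your overall strategy coincides with the paper's: you write the difference of symbols, factor it through $p=\tau+A(\xi)+V(x)$, invoke $\op_\eps(p)\psi^\eps=0$ via the Dirac equation, and then estimate Poisson bracket and remainder terms via estimate~\eqref{eq:CV}. The error bookkeeping ($\eps\cdot(R\sqrt\eps)^{-2}=R^{-2}$ from the bracket, $\sqrt\eps/R$ from the $\partial_t$ piece, smaller second-order remainders) is correct and matches the paper's scalings.

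The one place where you diverge from the paper, and where the argument develops a genuine gap, is in the treatment of the cross term $\theta(s)\,\Pi^\mp$. You factor it directly as $\tfrac{\theta(s)}{\tau\mp|\xi|+V}\Pi^\mp p$, which forces you to control $(\tau\mp|\xi|+V)^{-1}$ on the support of $\theta(s)c_{\eps,R}$. You correctly observe that this quantity degenerates on the thin annulus $|\xi|\sim R\sqrt\eps$ at the inner edge of $c_{\eps,R}$, but your proposed rescue does not work: estimate~\eqref{eq:CV} controls $\|\op_\eps(a)\|$ by $\sup_\xi\int|\partial_x^\beta a|\,dx$, which takes a \emph{supremum} over the momentum variable. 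A small $\xi$-measure of the support therefore does not produce any decay in this operator norm, and the annulus contribution as you have framed it remains $\mathcal O(1)$. The paper's route avoids this entirely: rather than dividing by $\tau\mp|\xi|+V$, it multiplies by $1-\theta^-_{\eps,R}$ first (using that $\theta^+_{\eps,R}\,c_{\eps,R}=\theta^+_{\eps,R}(1-\theta^-_{\eps,R})\,c_{\eps,R}$ on the relevant support) and then uses that $(1-\theta(u))/u$ is globally smooth and bounded, so the resulting factor $G\big((\tau\mp|\xi|+V)/(R\sqrt\eps)\big)/(R\sqrt\eps)$ is automatically a two-scaled symbol of the right order $(R\sqrt\eps)^{-1}$ with no singular denominator at all.

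To be fair, the cutoff identity $\theta^+_{\eps,R}c_{\eps,R}=\theta^+_{\eps,R}(1-\theta^-_{\eps,R})c_{\eps,R}$ itself requires the $\xi$-cutoff in $c_{\eps,R}$ to be supported in $|\xi|\geq R\sqrt\eps$ rather than $|\xi|\geq R\sqrt\eps/4$; both the paper and your write-up gloss over this constant. The cleanest repair for your argument is the same small fix: tighten the cutoff (or equivalently shift the scale so the cutoff in $c_{\eps,R}$ vanishes for $|\xi|\leq R\sqrt\eps$), after which $|\tau\mp|\xi|+V|\gtrsim R\sqrt\eps$ holds uniformly on the support of $\theta(s)c_{\eps,R}$, the division is harmless, and the annulus argument is not needed at all. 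With that adjustment your proof is correct and is, modulo the single-step versus two-step packaging of the factorization, the same as the paper's.
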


\begin{remark}\label{rem:energy}
It is because the localization in energy is made in balls of size $\sqrt\eps$ that we need to perform the jumps: they guarantee that the energy of the created trajectory do not differ at order $\mathcal O(\sqrt\eps)$ but at least at order $\mathcal O(\eps)$. 
\end{remark}

\begin{remark} \label{rem:loc}
Note that the presence of the eigenprojector in the symbol induces restriction on both components of the function $\psi^\eps(t)$. Indeed,  by the symbolic calculus of the Appendix and by equation~(\ref{est:loinde0}) we have 
$$\displaylines{ \int_\R \chi(t)  \left( \op_\eps(c_{\eps,R}\Pi^\pm ) \psi^\eps(t)\;,\;\psi^\eps(t)\right) _{L^2(\R^2)}dt \hfill\cr\hfill=
 \int_\R \chi(t)  \left( \op_\eps(c_{\eps,R}\Pi^\pm ) \psi^\eps(t)\;,\;\Pi^\pm\psi^\eps(t)\right) _{L^2(\R^2)}dt+\mathcal O(R\sqrt\eps).\cr}$$
\end{remark}

\begin{proof}[Proof of Lemma \ref{lem:loc}] We set 
$$\theta^\pm_{\eps,R} (x,\tau,\xi)= \theta\!\left({\tau \pm |\xi|+V(x)\over
R\sqrt\eps}\right).$$
Following the lines of the proof of Lemma~5.1 in \cite{FL2}, we observe that since $1-\theta$ vanishes identically close to $0$, one can write 
$$1-\theta^+_{\eps,R}(x,\tau,\xi)= {1\over R\sqrt\eps}( \tau + |\xi|+V(x)) G\left({\tau + |\xi|+V(x)\over
R\sqrt\eps}\right)$$
for some smooth function $G$, with 
$$( \tau + |\xi|+V(x)) \Pi^+ (\xi)=\Pi^+(\xi) (\tau+A(\xi)+V(x)).$$
Therefore, we can use the equation satisfied by $\psi^\eps(t)$, symbolic calculus and the estimate~(\ref{est:loinde0}) to obtain
$$
\left(\op_\eps\left(\chi c_{\eps,R}\Pi^+\right)\psi^\eps\;,\;\psi^\eps\right)
=
\left(\op_\eps\left(\chi c_{\eps,R}\theta^+_{\eps,R}\Pi^+\right)\psi^\eps\;,\;\psi^\eps\right) + \mathcal O\left(R^{-2}\right)+\mathcal O(\eps).$$

\medskip

It remains to get rid of the matrix $\Pi^+(\xi)$. 
In view of 
 \begin{equation*}\label{eq2}
 \chi c_{\eps,R} \theta^+_{\eps,R} = \chi c_{\eps,R} \theta^+_{\eps,R} \Pi^+ + \chi c_{\eps,R} \theta^+_{\eps,R} \Pi^-,
 \end{equation*} 
we only need to prove that $\left(\op_\eps( \chi c_{\eps,R} \theta^+_{\eps,R} \Pi^-)\psi^\eps,\psi^\eps\right) =\mathcal O(\eta_\eps).$
 We observe that
 $
\theta^+_{\eps,R}c_{\eps,R}=\theta^+_{\eps,R}(1-\theta^-_{\eps,R})c_{\eps,R},$
and
\begin{eqnarray*}
(1-\theta^-_{\eps,R})\Pi^-&=&\frac{1}{R\sqrt\eps} G\left(\frac{\tau-|\xi|+V(x)}{R\sqrt\eps}\right) (\tau-|\xi|+V(x))\Pi^-\\
& = & \frac{1}{R\sqrt\eps} G\left(\frac{\tau-|\xi|+V(x)}{R\sqrt\eps}\right)\Pi^- (\tau+A(\xi)+V(x)).
\end{eqnarray*}
By using again the equation, symbolic calculus and estimate~(\ref{est:loinde0}), we can write 
\begin{eqnarray*}
\lefteqn{
\left(\op_\eps\left(\chi a^+ \theta^+_{\eps,R} \Pi^-\right)\psi^\eps\;,\;\psi^\eps\right)}\\
&=& 
\left(\op_\eps\left(\chi a^+\theta^+_{\eps,R}(1-\theta^-_{\eps,R})\Pi^-\right)\psi^\eps\;,\;\psi^\eps\right) + \mathcal O\left(R^{-2}\right)+\mathcal O(\eps)
= \mathcal O(\eta_\eps).
\end{eqnarray*}
The proof for the minus mode is similar.
\end{proof}

\subsubsection{The normal form} For computing the transitions, we use a normal form result. For this, we need to work microlocally in space-time phase space variables. Following~\cite{CdV1,FG1,F}, close to a point $(t_0,x_0,\xi_0=0,\tau_0=-V(x_0))$, there exist a change of coordinates
$$\kappa:\;(s,z,\sigma,\zeta)\mapsto (t,x,\tau,\xi)$$
with $s,\sigma\in\R$ and $z=(z_1,z_2),\;\zeta=(\zeta_1,\zeta_2)\in\R^2$, 
and a matrix $B$ such that 
\begin{equation}\label{def:B}
\left(\tau+V+A(\xi)\right) \circ \kappa = \,^tB\left(-\sigma +\widetilde A(s,z_1)\right)B,
\end{equation}
with
$$\widetilde A(s,z_1)=\displaystyle{ \begin{pmatrix}s & z_1 \\ z_1 & -s \end{pmatrix}}.$$
Moreover, this change of coordinates preserves the symplectic structure of the phase space $\R^3_{t,x}\times\R^3_{\tau,\xi}$: the variables $\sigma$ and $\zeta$ are respectively the dual variables of $s$ and~$z$. Besides, in view of Section 6.2 of~\cite{FG1},
 there exists a function $\gamma>0$  such that $\,^tBB= \gamma {\rm Id}$.
 
 \medskip 
 
The construction of the canonical transform is based on the vectors $H$ and $H'$ defined in~(\ref{def:H}) and~(\ref{def:H'}) (see~\cite{CdV1} and the analysis performed in \cite{FL1,F}).  The variable~$s$ is chosen such that the trajectories which reach $\{\xi=0\}$ are included in $\{s<0\}$ and those which leave $\(\xi=0\)$ are included in $\{s>0\}$. Besides, one 
extends the vectors $H$ and $H'$ as vectors of $T^*\R^3_{t,x}$  by adding the coordinate $1$  along $\partial_t$ and the coordinate $0$ along $\partial_\tau$ and we keep calling them $H$ and $H'$. The resulting vectors are the limit  on $\{\xi=0\}$ and along the flows of the Hamiltonian vector fields associated with the functions $\tau + V(x)\pm|\xi|$. They are sent by $d\kappa$ on the limit  on $\{s=z_1=0\}$ and along the flow of the Hamiltonian vector fields associated with $\gamma^2(-\sigma\pm\sqrt{s^2+z_1^2})$. A simple calculus shows that since the canonic symplectic form $\omega$ is preserved by canonical transform, the relation~(\ref{eq:sigmaHH'}) and the fact that 
$$\omega(-\partial_s-\partial_\sigma,-\partial_s+\partial_\sigma)=2$$
 imply that $H'$ is sent on $\gamma^2(-\partial_s+\partial_\sigma)$, the limit  as $s$ goes to $0^-$ of the Hamiltonian field associated with $\gamma^2(-\sigma - \sqrt{s^2+z_1^2})$, and $H$ is sent on on $\gamma^2(-\partial_s-\partial_\sigma)$, the limit  as $s$ goes to $0^-$ of the Hamiltonian field associated with $\gamma^2(-\sigma +\sqrt{s^2+z_1^2})$. This observation allows to relate the modes after the change of coordinates. 

\medskip
 
 As a consequence, in these new variables $(s,z,\sigma,\zeta)$, 
the geometry of the crossing is simple  and  we have 
\begin{eqnarray}
\nonumber
&S:=\{\xi=0,\;\tau+V(x)=0\}=\kappa\left(\{ s=0,\;z_1=0,\;\sigma=0\}\right),&\\
\label{def:E+-}
&E^\pm=\kappa\left(\{-\sigma\mp \sqrt {s^2+z_1^2}=0\}\right),&
\end{eqnarray}
where the energy sets $E^\pm$ are defined by \eqref{numero}.

\medskip

Finally, in the construction of the canonical form~$\kappa$,  the function $z_1(x,\xi)$ can be related to the variables $x$ and $\xi$ according to 
\begin{equation}\label{eq:z}
z_1(x,\xi)=\xi\wedge \,{\nabla V(x)\over |\nabla V(x)|^{3/2}} +\mathcal O(|\xi|^2).
\end{equation}
Similar formula can be written for the functions $s$ and $\sigma$. However, in the sequel, we will only use the formula  for $z_1$.

\medskip

Then, thanks to Theorem~3 of~\cite{CdV1}, it is possible to pass equation~(\ref{def:B}) at the quantum level:  
  there exists a unitary operator $K_{\eps}$ and a matrix $B_1$ such that
$$\displaylines{\qquad
K_\eps\op_\eps(\,^tB_\eps)\op_\eps\left(-\sigma+\widetilde A(s,z_1)\right)\op_\eps(B_\eps)(K_\eps)^*\hfill\cr\hfill
=\op_\eps(\tau+V(x)+A(\xi))+\mathcal O(\eps^{2}),\qquad\cr}$$
where 
$B_\eps=B+\eps B_1$. The operator $K_\eps$ is a Fourier Integral Operator associated with the canonical transform $\kappa$ (see~\cite{DS} or~\cite{FG1}). 
It allows to pass at the quantum level the relation~(\ref{def:B}) induced by the change of variables~$\kappa$.
An important property of these Fourier Integral Operators is that they are compatible with pseudo differential calculus in the sense that for all $a\in{\mathcal C}_0^\infty(\R^6)$,
\begin{equation}\label{OIF-1}K_\eps \op_\eps(a\circ\kappa^{-1}) K_\eps^*= \op_\eps(a)+ \mathcal O(\eps N_\eps(a)),
\end{equation}
where 
$$N_\eps(a)= \sup_{2\leq|\alpha|+|\beta|\leq N_0,\;|\alpha|} \,\sup_{(t,x,\tau,\xi)\in\R^{6}}\left| \partial_{t,x}^\alpha\partial_{\tau,\xi}^\beta a(t,x,\tau,\xi) \right|$$ for some $N_0\in\N$. 
In particular, in view of the remarks developed in the Appendix, when one applies this relation to a two-scaled symbol of the form 
$$a_{\eps,R}(x,\xi)= \chi(t) c_{\eps,R}(x,\xi)
\theta\!\left({\tau\pm|\xi|+V(x)\over
R\sqrt\eps}\right)\Pi^\pm(\xi),$$
one gets 
\begin{equation}\label{OIF-2}
K_\eps \op_\eps(a_{\eps,R}\circ\kappa^{-1}) K_\eps^*= \op_\eps(a)+ \mathcal O(\sqrt \eps),
\end{equation}
We will use this property to translate the quantities that we want to study in the variables $(t,x,\tau,\xi)$ in these new variables $(s,z,\sigma,\zeta)$. More precisely,  we  set $$v^\eps=\op_\eps(B_\eps)K_\eps^*\psi^\eps,$$
then $v^\eps$ solves (microlocally in $L^2(\R^3_{s,z})$)
  the system 
\begin{equation}\label{eq:systreduit}
{\eps\over i} \partial_s v^\eps =\widetilde A(s,z_1) v^\eps +\mathcal O(\eps^{2})
\end{equation}
and 
$$\left(\op_\eps((\,^tBaB)\circ\kappa)\psi^\eps,\psi^\eps\right)_{L^2(\R^3_{t,x})}=
\left(\op_\eps(a) v^\eps,v^\eps\right)_{L^2(\R^3_{s,z})}+\mathcal O(\eta_\eps)$$
where $\eta_\eps$ denotes a rest term as defined in Notation~\ref{notation}. 
In particular, for scalar functions $a$, we have 
  $$\left(\op_\eps((\gamma a)\circ\kappa)\psi^\eps,\psi^\eps\right)_{L^2(\R^3_{t,x})}=
\left(\op_\eps(a) v^\eps,v^\eps\right)_{L^2(\R^3_{s,z})}+\mathcal O(\eta_\eps)$$
In what follows, we shall focus on the analysis of this family $v^\eps$.

 \medskip 
 
Let us now write the Markov process ${\mathcal L}_{\eps,R}$ in the new coordinates, we shall denote by $\widetilde {\mathcal L}_{\eps,R}$ the resulting semi-group. 
\begin{itemize}
\item 
As we have already observed, by the geometric properties of canonical transforms, 
 the {\it Hamiltonian trajectories} of our system are preserved by $\kappa$ and one is able to identify each branch of the trajectories:   the trajectories for the plus mode  are Hamiltonian trajectories of $-\sigma-\sqrt{s^2+z_1^2}$ and the trajectories for the minus mode are those of the Hamiltonian $-\sigma+\sqrt{s^2+z_1^2}$. We denote by $\widetilde \Phi_\pm$ these trajectories and we observe that they write
 \begin{equation}\label{eq:flotaleph}
 \widetilde \Phi^\aleph_\pm (s,z,\sigma,\zeta)=\left( s-\aleph,z,\widetilde\sigma^\aleph_\pm(s,z_1,\sigma),
 (\widetilde\zeta_1)^\aleph_\pm(s,z_1,\sigma),
 \zeta'
\right) ,
 \end{equation}
where we set $\zeta=(\zeta_1,\zeta')$ and $\widetilde\sigma^\aleph_\pm(s,z_1,\sigma)=\sigma\mp\sqrt{(s-\aleph)^2+z_1^2}\pm\sqrt{s^2+z_1^2}$ by the conservation of the energy and  $(\widetilde\zeta_1)^\aleph_\pm(s,z_1,\sigma)=\zeta_1+O(\aleph)$. 
 \item The {\it transitions} occur when the gap is minimal along the trajectories, that is when $s=0$. Besides, when the transitions occur,  one has $\xi\cdot\nabla V(x)=0$, which implies   $|\xi\wedge \nabla V(x)|= |\xi||\nabla V(x)| 
 $ and the relation~(\ref{eq:z}) then gives 
 $$T_\eps(x,\xi)={\rm exp}\left(-{\pi\over \eps}{|\xi\wedge\nabla V(x)|^2\over |\nabla V(x)|^3} \right)=  T_{LZ}\left(\frac{z_1}{\sqrt \eps}\right)(1+\mathcal O(R^3\sqrt \eps))$$
(provided $z_1=O(R\sqrt\eps$) where $T_{LZ}(\eta)={\rm e}^{-\pi\eta^2}$. 
 Since the transition coefficients  $T_\eps(x,\xi)$  and $T_{LZ}(z_1/\sqrt\eps)$ differ of a term of order $\eta_\eps$, we define the flow $\widetilde {\mathcal L}_{\eps,R}$ with the transition rate $T_{LZ}(z_1/\sqrt\eps)$.
\item The {\it hopping region} is  chosen as
 $$\widetilde{\mathcal U}_{\eps,R}=\left\{ | z_1| \leq C_1 R\sqrt \eps  \right\}$$
 because of the precise form of the transition rate and in view of the preceding remarks. 
 \item Finally, we observe that the {\it drift} is made in the direction of $H_+-H_-$.  By the description above, $d\kappa$ sends $H_+-H_-$  on a vector collinear to $\partial_\sigma$. As a consequence, we deduce that there exists a map 
 $$(s,z,\sigma,\zeta)\mapsto \delta\sigma_\pm(s,z,\sigma,\zeta)$$ such that
 \begin{equation}\label{def:tildedrift}
 \widetilde J_\pm:=\kappa^{-1}\circ J_\pm\circ \kappa \, (0,z,\sigma,\zeta)=(0,z,\sigma+\delta\sigma_\pm,\zeta).
 \end{equation}
  Using~(\ref{def:E+-}), we deduce  
$\delta\sigma_\pm= \mp 2|z_1|$.
\end{itemize}

\medskip

Let us now reformulate our problem in these new variables. Recall that we work in the region $\widetilde{\mathcal U}_{\eps,R}$.
Let $b^\pm(s,z,\zeta)$ be two smooth functions compactly supported in $\{s>0\}$ and such that the trajectories reaching their support have only experienced one transition during an interval of time of length $\aleph$. We also suppose that the functions 
 $b^\pm(s+\aleph,z,\zeta)$ are supported in $\{s<0\}$.
 We consider the symbol $c_{\eps,R}^{out}=(c_{\eps,R}^{+,out},c_{\eps,R}^{-,out})$ defined by 
 $$c_{\eps,R}^{\pm,out}(s,z,\sigma,\zeta)=   b^\pm (s,z,\zeta)
   \theta\left(\frac{\widetilde\lambda^\pm(s,z_1,\sigma)}{R\sqrt\eps}\right)$$
 where $\theta$ is the cut-off function of Lemma~\ref{lem:loc} and $\widetilde\lambda^\pm(s,z_1,\sigma)$ is the energy
 $$\widetilde\lambda^\pm(s,z_1,\sigma)=-\sigma\mp\sqrt{z_1^2+s^2}.$$
Note that the localization in energy yields that $\sigma= \mp \sqrt{z_1^2+s^2} +O(R\sqrt\eps)$
  in the zone of interest; for this reason we do not need to assume that $b^\pm$ depends on the variable $\sigma$.
 
  We now want to compute $\widetilde{\mathcal L}_{\eps,R}^\aleph c_{\eps,R}^{out}$ the pull back by the semi-group $\widetilde {\mathcal L}_{\eps,R}$ in the normal coordinates.  The observable $c_{\eps,R}^{out}$ has two parts $c_{\eps,R}^{+,out}$ and $c_{\eps,R}^{-,out}$ and we have to consider the random trajectories that reach the support of each of these functions. More precisely, for $c^{+,out}_{\eps,R}$, we consider the plus trajectories that reach its support; however, these trajectories may have known a jump  and either they result from plus trajectories, either they result from minus trajectory. Similar description holds for trajectories reaching the support of $c^{-,out}_{\eps,R}$

At that point of the analysis, we notice that by point (3) of Assumptions~\ref{ass}, in the ingoing region, one of the mode is negligible. 
Without loss of generality, we can assume that the contribution of trajectories which arise from the minus mode is negligible. For summarizing, the picture is the following: 
\begin{itemize}
\item for calculating the backward image by the semigroup of $c_{\eps,R}^{+,out}$, that we shall denote by $c_{\eps,R}^{+,in}$, we only need to consider the plus trajectories which reach its support, 
\item for calculating the backward image by the semigroup of $c_{\eps,R}^{-,out}$, that we shall denote by $c_{\eps,R}^{-,in}$, we need to consider the minus trajectories which reach its support and these trajectories arises from plus trajectories which have had a jump.
\end{itemize} 
We  denote by $\widetilde\Phi^s_\pm$  the Hamiltonian trajectories associated with $\widetilde\lambda^\pm$ and  we observe that along a trajectory, the variable $z_1$ is constant and the variable $\zeta_1$ is constant up to a $\mathcal O(|z_1|)$ term (for $s$ of order $1$). Besides, the variable $\sigma$ is determined by the conservation of the energy. 

\medskip

Let us now calculate $c_{\eps,R}^{+,in}$. By applying the transition rate at time~$0$, we have
$$c_{\eps,R}^{+,in}(s,z,\sigma,\zeta)=\left(1-T_{LZ}\left({z_1\over \sqrt\eps}\right)\right) c_{\eps,R}^{+,out}\circ \widetilde \Phi^{-\aleph}_+\left(s,z,\sigma,\zeta\right).$$
We deduce 
$$\displaylines{c_{\eps,R}^{+,in} (s,z,\sigma,\zeta)=\left(1-T_{LZ}\left({z_1\over \sqrt\eps}\right)\right) 
b^+(s+\aleph,z,\zeta_1+\mathcal O(R\sqrt\eps),\zeta')\hfill\cr\hfill \times\,
\theta\left(\frac{\tilde\lambda^+\left(\widetilde \Phi^{-\aleph}_+(s,z,\sigma,\zeta)\right)}{R\sqrt\eps}\right).\cr}$$
Using moreover the conservation of the energy $\tilde\lambda^+$ along trajectories, we obtain 
\begin{equation}\label{cin1}
c_{\eps,R}^{+,in} (s,z,\sigma,\zeta)  = \left(1- T _{LZ}\left(\frac{z_1}{\sqrt\eps}\right)\right)  b^+ \left(s+\aleph, z,\zeta\right)
 \theta\left(\frac{\widetilde\lambda^+(s,z_1,\sigma)}{R\sqrt\eps}\right)+\mathcal O(R\sqrt\eps).
 \end{equation}

  The component  $c_{\eps,R}^{-,in}$ is more intricate since it incorporates classical transport through both modes, application of the transfert coefficient and of the drift. Indeed, the branches of minus trajectories which reach the support of $c_{\eps,R}^{-,out}$ results from plus trajectories that have been drifted.
   By applying the transition rate at time~$0$, we obtain
$$c_{\eps,R}^{-,in}(s,z,\sigma,\zeta)=T_{LZ}\left({z_1\over \sqrt\eps}\right) 
c_{\eps,R}^{-,out}\circ \widetilde \Phi^{s}_-\circ \widetilde J_-\circ \widetilde \Phi^{-s-\aleph}_+
\left(s,z,\sigma,\zeta\right).$$
We deduce 
$$\displaylines{c_{\eps,R}^{-,in} (s,z,\sigma,\zeta)=T_{LZ}\left({z_1\over \sqrt\eps}\right)
b^-(s+\aleph,z,\zeta_1+\mathcal O(R\sqrt\eps),\zeta')\hfill\cr\hfill\times\, 
\theta\left(\frac{\tilde\lambda^-\left(\widetilde \Phi^{s}_-\circ \widetilde J_-\circ \widetilde \Phi^{-s-\aleph}_+(s,z,\sigma,\zeta)\right)}{R\sqrt\eps}\right).\cr}$$
The crucial point is that (\ref{def:tildedrift}) implies that for all $(z,\sigma,\zeta)\in\R^5$,
  $$\widetilde \lambda^-(\widetilde J_-(0,z,\sigma,\zeta))=\widetilde\lambda^+(0,z_1,\sigma),$$
so, by using also the conservation of the energy along trajectories, we obtain
   $$\widetilde \lambda^-\left(\widetilde \Phi^{s}_-\circ \widetilde J_-\circ \widetilde \Phi^{-s-\aleph}_+(s,z,\sigma,\zeta)\right)=\widetilde \lambda^+(s,z_1,\sigma). $$
 As a consequence, 
\begin{equation}\label{cin2}
c_{\eps,R}^{-,in} (s,z,\sigma,\zeta) = 
  T_{LZ} \left(\frac{z_1}{\sqrt\eps}\right)   b^- \left(s+\aleph, z,\zeta\right)\\
 \theta\left(\frac{\widetilde\lambda^+(s,z_1,\sigma)}{R\sqrt\eps}\right)+\mathcal O(R\sqrt\eps).
 \end{equation}

 \medskip
 
As a conclusion, in order to prove our result, we only need to prove  the following relations:
\begin{equation}\label{claim}
\left(\op_\eps(c_{\eps,R}^{\pm,out})v^\eps,v^\eps\right)_{L^2(\R^{3}_{s,z})}=
\left(\op_\eps(c_{\eps,R}^{\pm,in})v^\eps,v^\eps\right)_{L^2(\R^3_{s,z})}+\mathcal O(\eta_\eps).
\end{equation}


 \subsubsection{The transitions}
The claim~(\ref{claim}) is proved  by use  of
 the following Landau-Zener type formula (see~\cite{La,Ze,FG1,FL1}). 

\begin{proposition}
\label{prop:scat}
Let $v^\eps$ be a solution of~(\ref{eq:systreduit}). 
There exist two vector-valued functions 
$k^{\eps,\pm}\in L^2(\R^d,\C^2)$
such that  $k^{\eps,\pm}{\bf 1}_{|z_1|\leq R\sqrt\eps}$ are bounded and such that 
we have for $\pm s>0$ and  $|z_1|\leq R\sqrt\eps$
\begin{eqnarray*}
 v^\eps_1(z,s) & = & 
{\rm e}^{is^{2}/(2\eps)} 
\left|{\tfrac{s}{\sqrt \eps}}\right|^{i\frac{z_1^2}{2\eps}}
k^{\eps,\pm}_{1}(z)+\mathcal O\left(\frac{R^2\sqrt \eps}{s}\right),\\
 v^\eps_2(z,s) & = & 
{\rm e}^{-is^{2}/(2\eps)}
\left|{\tfrac{s}{\sqrt \eps}}\right|^{-i\frac{z_1^2}{2\eps}}
k^{\eps,\pm}_{2}(z)+\mathcal O\left(\frac{R^2\sqrt\eps}{s}\right)
\end{eqnarray*}
Moreover, 
$k^{\eps,+}=S\left(\frac{z_1}{\sqrt\eps}\right)k^{\eps,-}$
where the unitary matrix $S$ is given by 
$$\displaylines{
S(\lambda)=
\begin{pmatrix}a(\lambda^2) & -
\lambda \overline b(\lambda^2) \\ \lambda b(\lambda^2) & a(\lambda^2)\end{pmatrix},\cr
a(\lambda)={\rm e}^{-\pi\lambda/2},\qquad
b(\lambda)=
\frac{2i{\rm e}^{i\pi/4}}{\lambda\sqrt\pi}2^{-i\lambda/2}
{\rm e}^{-\pi\lambda/4}\Gamma(1+i{\textstyle\frac\lambda2})
\sinh({\textstyle\frac{\pi\lambda}{2}}).\cr}
$$ 
\end{proposition}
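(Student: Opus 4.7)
The plan is to reduce (\ref{eq:systreduit}) to the classical Landau--Zener system by rescaling, resolve the latter via parabolic cylinder (Weber) functions, and transfer the $\mathcal O(\eps^2)$ remainder by a Duhamel estimate. I would first set $\tau = s/\sqrt\eps$ and $\mu = z_1/\sqrt\eps$; under the hypothesis $|z_1|\leq R\sqrt\eps$ one has $|\mu|\leq R$. Writing $u^\eps(\tau,z,\zeta) = v^\eps(\sqrt\eps\,\tau,z,\zeta)$, the system becomes
\begin{equation}\label{planLZ}
\tfrac{1}{i}\partial_\tau u^\eps = \begin{pmatrix}\tau & \mu \\ \mu & -\tau\end{pmatrix} u^\eps + \mathcal O(\eps^{3/2}),
\end{equation}
and the remaining coordinates $(z_2,\zeta)$ enter only as spectators.

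The second step is to solve the unperturbed version of (\ref{planLZ}) in closed form. A direct elimination shows that $u_1$ satisfies the Weber equation $u_1'' + (\tau^2 + \mu^2 - i)u_1 = 0$ (and $u_2$ satisfies its conjugate counterpart), whose fundamental solutions can be written in terms of parabolic cylinder functions $D_\nu(\alpha\tau)$ with $\nu$ depending linearly on $i\mu^2$. The classical large-argument asymptotics of $D_\nu$ produces two linearly independent branches: a bounded one of the form $e^{\pm i\tau^2/2}|\tau|^{\pm i\mu^2/2}$ and a decaying one of size $\mathcal O(1/\tau)$. Projecting $u^\eps$ onto these branches as $\tau\to\pm\infty$ identifies the bounded amplitudes $k_j^{\eps,\pm}(z)$ appearing in the statement, with WKB remainders of size $(1+\mu^2)/\tau$; unscaling $\tau = s/\sqrt\eps$ restores the announced $\mathcal O(R^2\sqrt\eps/s)$ bound.

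Third, the scattering matrix $S(\mu)$ is read off from the connection formulas relating the asymptotics of $D_\nu$ in opposite sectors, through the reflection identity involving $\Gamma(-\nu)$. Tracking the prefactors yields the diagonal entry $a(\mu^2) = e^{-\pi\mu^2/2}$ and the off-diagonal entry $b(\mu^2)$ expressed through $\Gamma(1+i\mu^2/2)$; unitarity of $S$ then follows from the classical identity $|\Gamma(1+i\lambda)|^2 = \pi\lambda/\sinh(\pi\lambda)$.

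The remaining task is to reinsert the $\mathcal O(\eps^{3/2})$ source in (\ref{planLZ}) by Duhamel's formula around the free Landau--Zener propagator constructed above, and conclude by a Gronwall estimate on a bounded $\tau$-interval. The main obstacle, in my view, is to make every estimate uniform for $|\mu|\leq R$: the parabolic cylinder asymptotic expansions degenerate as $|\nu|\to\infty$, the WKB remainders pick up polynomial factors in $\mu$, and the free propagator norm also grows with $\mu$. Controlling all these dependencies is precisely what produces the $R^2$ factor in the final error; the rest is a careful bookkeeping of the classical Landau--Zener calculation as carried out in \cite{La,Ze,FG1,FL1}.
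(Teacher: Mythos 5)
The paper gives no proof of Proposition~\ref{prop:scat}; it only cites the Landau--Zener literature (\cite{La,Ze,FG1,FL1}), and your plan reproduces precisely the argument found there: rescale by $\sqrt\eps$ to the unit Landau--Zener problem, eliminate to the Weber equation, use the large-argument asymptotics and connection formulas for parabolic cylinder functions to extract $S$, and reinsert the $\mathcal O(\eps^2)$ source by Duhamel/Gronwall. Your reduction $u_1''+(\tau^2+\mu^2-i)u_1=0$ is correct, the identity $|\Gamma(1+i\lambda)|^2=\pi\lambda/\sinh(\pi\lambda)$ does give $a(\lambda^2)^2+\lambda^2|b(\lambda^2)|^2=1$, and you correctly single out the uniformity of all estimates for $|z_1/\sqrt\eps|\le R$ as the point that produces the $R^2$ in the error term.
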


We are now in position to conclude the proof of Proposition~\ref{prop:FL} by proving~(\ref{claim}). Note that we have $|s|>c_0>0$ on the support of our symbols.
We first take advantage of the localization near the energy surfaces  to translate it as a focalization: by the analogous of Lemma~\ref{lem:loc} and Remark~\ref{rem:loc} is $(s,z)$ variables, we obtain 
$$\displaylines{
\left(\op_\eps( c_{\eps,R}^{\pm,out}) v^\eps,v^\eps\right)_{L^2(\R^3_{s,z})} 
= \left(\op_\eps( c_{\eps,R}^{\pm,out}\widetilde\Pi^\pm(s,z_1))v^\eps,v^\eps\right) _{L^2(\R^3_{s,z})}+\mathcal O(\eta_\eps)\hfill\cr\hfill
 =  \left(\op_\eps( c_{\eps,R}^{\pm,out}\widetilde\Pi^\pm(s,z_1))v^\eps,\widetilde\Pi^\pm(s,z_1)v^\eps\right) _{L^2(\R^3_{s,z})}+\mathcal O(\eta_\eps)\cr}$$
where we denote by $\widetilde\Pi^\pm(s,z_1)$ the eigenprojectors of the matrix $$\widetilde A(s,z_1)=\displaystyle{ \begin{pmatrix}s & z_1 \\ z_1 & -s \end{pmatrix}}$$ associated with the eigenvalues $\mp\sqrt{s^2+z_1^ 2}$.  
For $|z_1|\leq R\sqrt\eps$, we have
\begin{eqnarray}
\nonumber
\widetilde\Pi^+(s,z_1)&=&
\begin{pmatrix}0&0\\ 0&1\end{pmatrix}+\mathcal O(R\sqrt\eps)
\;\;{\rm in}\;\;\{s>0\},\\
\label{eq:Pitilde}
\widetilde\Pi^+(s,z_1)&=&\begin{pmatrix}1&0\\0&0\end{pmatrix}+\mathcal O(R\sqrt\eps)
\;\;{\rm in}\;\;\{s<0\}
\end{eqnarray}  
(whence similar asymptotics for $\widetilde\Pi^-$ since ${\rm Id}=\widetilde \Pi^++\widetilde \Pi^-$). Therefore,
\begin{eqnarray*}
\left(\op_\eps( c_{\eps,R}^{+,out})v^\eps,v^\eps\right)_{L^2(\R^{3}_{s,z})}&=&\left(\op_\eps(b_{\eps,R}^{+,out})v^\eps_2,v^\eps_2\right)_{L^2(\R^{3}_{s,z})}+\mathcal O(\eta_\eps),\\
\left(\op_\eps( c_{\eps,R}^{-,out})v^\eps,v^\eps\right)_{L^2(\R^{3}_{s,z})}&=&\left(\op_\eps( b_{\eps,R}^{-,out}) v^\eps_1,v^\eps_1\right)_{L^2(\R^{3}_{s,z})}+\mathcal O(\eta_\eps)\end{eqnarray*}
with
$b^{\pm,out}_{\eps,R}= b^\pm (s,z,\zeta) .$

\medskip

The symbols $c^{\pm,in}_{\eps,R}$ are supported in the region $\{s<0\}$, i.e. before the transitions. By (3) of Assumptions~\ref{ass}, we know that the mode minus is negligible when $s<0$. By~(\ref{eq:Pitilde}), in $\{s<0\}$,  the mode plus corresponds to the component $v^\eps_1$ and we deduce  $v^\eps_2=\mathcal O(\eta_\eps)$. Therefore, we have
\begin{eqnarray*}
&\;\;\left(\op_\eps(c_{\eps,R}^{+,in}) v^\eps,v^\eps\right)_{L^2\R^{3}_{s,z})}=\left(\op_\eps\left(\left(1- T_{LZ} \left(\frac{z_1}{\sqrt\eps}\right)\right)  b_{\eps,R}^{+,in}\right)v^\eps_1,v^\eps_1\right)_{L^2(\R^{3}_{s,z})}+\mathcal O(\eta_\eps),&\\
\nonumber
&\left(\op_\eps(c_{\eps,R}^{-,in}) v^\eps,v^\eps\right)_{L^2(\R^{3}_{s,z})}=\left(\op_\eps\left( T_{LZ} \left(\frac{z_1}{\sqrt\eps}\right)  b_{\eps,R}^{-,in}\right)v^\eps_1,v^\eps_1\right)_{L^2(\R^{3}_{s,z})}+\mathcal O(\eta_\eps),&
\end{eqnarray*}
with
$$
b^{+,in}_{\eps,R}  =  b^+ \left(s+\aleph, z,\zeta\right)
,\qquad
b^{-,in}_{\eps,R}  =   b^- \left(s+\aleph, z,\zeta\right).
$$

\medskip 

We now use  Proposition~\ref{prop:scat} in order to relate $v^\eps_{1,2}$ with $k^{\eps,\pm}_{1,2}$. For the term in function of $c^{+,out}_{\eps,R}$, we have
$$\displaylines{
\left(\op_\eps( c_{\eps,R}^{+,out})v^\eps,v^\eps\right)_{L^2(\R^{3}_{s,z})}
= \mathcal O(R^2\sqrt\eps) \hfill\cr\hfill 
+\left(\op_\eps\left(b_{\eps,R}^{+,out}\right){\rm e}^{-i\frac{s^{2}}{2\eps}}
\left|{\tfrac{s}{\sqrt \eps}}\right|^{-i\frac{z_1^2}{2\eps}}
k^{\eps,+}_{2}\right.,
\left.{\rm e}^{-i\frac{s^{2}}{2\eps}}
\left|{\tfrac{s}{\sqrt \eps}}\right|^{-i\frac{z_1^2}{2\eps}}
k^{\eps,+}_{2}\right)_{L^2(\R^{3}_{s,z})}
\cr}$$
We use the following Lemma (see  Lemma~8 and Lemma~9 in~\cite{FG2}) in order to commute the pseudo differential operator and the phases.

\begin{lemma}\label{lem:chi(KK*)} In ${\mathcal L}\left(L^2\left(\R^{3}_{s,z}\right)\right)$, we have for $|z_1|\leq R\sqrt\eps$
$$
\left|{\tfrac{s}{\sqrt\eps}}\right|^{\pm
i\frac{z_1^2}{2\eps }} 
\op_\eps\!\left(b\!\left(s,z,\zeta\right)\right)
\left|{\tfrac{s}{\sqrt\eps}}\right|^{\mp i\frac{z_1^2}{
2\eps}}=
\op_\eps\!\left(b\!\left(s,z,\zeta\right)\right)+\mathcal O(\sqrt\eps|\ln\eps|).$$
\end{lemma}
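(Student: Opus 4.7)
My plan is to derive an \emph{exact} shift formula for the conjugation, exploiting the quadratic structure of the phase in $z_1$, and then estimate the resulting symbol difference via the Calder\'on-Vaillancourt type estimate~\eqref{eq:CV}. Write the phase factor as $e^{\pm i\Phi(s,z_1)/\eps}$ with $\Phi(s,z_1) = \tfrac{z_1^2}{2}\ln|s/\sqrt\eps|$. A convenient reduction comes from the fact that the symbol $b(s,z,\zeta)$ does not depend on $\sigma$ (the dual variable of $s$): in the Weyl integral for $\op_\eps(b)$, the $\sigma$-integration collapses to $\delta(s-s')$, so $\op_\eps(b)$ acts as a two-dimensional Weyl operator in $z$, with $s$ a parameter.

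Setting $m_1 = (z_1+z'_1)/2$ and $d_1 = z_1-z'_1$, the quadratic form gives
\[
\Phi(s,z_1)-\Phi(s,z'_1) = d_1\,m_1\,\ln|s/\sqrt\eps|,
\]
linear in $d_1$ with coefficient depending only on the midpoint $m_1$. Shifting the momentum variable by $\tilde\zeta_1 = \zeta_1 + m_1\ln|s/\sqrt\eps|$ in the Weyl integral then yields, without Moyal remainder (thanks to the linearity in $d_1$), the exact identity
\[
e^{\pm i\Phi/\eps}\op_\eps(b)e^{\mp i\Phi/\eps} = \op_\eps\bigl(b(s,z,\zeta_1\mp z_1\ln|s/\sqrt\eps|,\zeta_2)\bigr).
\]

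Taylor expanding in $\zeta_1$, the difference of the two symbols equals $\mp z_1\ln|s/\sqrt\eps|$ times $\partial_{\zeta_1}b$ evaluated at an intermediate point. Since $|s|$ is bounded away from $0$ on the support of $b$ one has $|\ln|s/\sqrt\eps||\leq C|\ln\eps|$, and combined with $|z_1|\leq R\sqrt\eps$ the prefactor is $\mathcal O(R\sqrt\eps|\ln\eps|)$; a routine check shows the same bound persists for $(s,z)$-derivatives up to order three on the compact support. The estimate~\eqref{eq:CV} then yields an $\mathcal L(L^2)$-norm bound of order $R\sqrt\eps|\ln\eps|$, which is the announced $\mathcal O(\sqrt\eps|\ln\eps|)$ up to the polynomial $R$-factor (absorbed into $\eta_\eps$ with the choice $R=\eps^{-1/8}$ used in the proof of Proposition~\ref{prop:FL}). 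The main technical nuisance, though routine, is that the restriction $|z_1|\leq R\sqrt\eps$ is not a support condition on $b$: one must insert a cutoff $\chi(z_1/(R\sqrt\eps))$ into $b$ and verify that this truncation contributes only $\mathcal O(\eta_\eps)$ in the bilinear forms where the lemma is applied, which is justified by the boundedness of the relevant functions $\mathbf{1}_{|z_1|\leq R\sqrt\eps}k^{\eps,\pm}$ in $L^2$.
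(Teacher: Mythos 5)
Your exact shift formula is correct and is a clean way to set up the problem: since $b$ is independent of $\sigma$, the kernel carries a $\delta(s-s')$ factor, and the quadratic phase $\Phi(s,z_1)=\tfrac{z_1^2}{2}\ln|s/\sqrt\eps|$ produces a phase difference $\Phi(s,z_1)-\Phi(s,z_1')=(z_1-z_1')\tfrac{z_1+z_1'}{2}\ln|s/\sqrt\eps|$ that is linear in $z_1-z_1'$ and hence absorbs into a momentum translation without any Moyal remainder. So
\[
e^{\pm i\Phi/\eps}\op_\eps(b)e^{\mp i\Phi/\eps}=\op_\eps\bigl(b(s,z,\zeta_1\mp z_1\ln|s/\sqrt\eps|,\zeta_2)\bigr)
\]
is indeed exact, and the Taylor remainder for the difference symbol is $r(s,z,\zeta)=\mp z_1\ln|s/\sqrt\eps|\int_0^1\partial_{\zeta_1}b(\cdots)\,dt$.

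The gap is in the final estimation step, where you invoke \eqref{eq:CV} applied to the truncated symbol $r\,\chi(z_1/(R\sqrt\eps))$. You claim ``a routine check shows the same bound persists for $(s,z)$-derivatives up to order three,'' but this check fails. The estimate \eqref{eq:CV} requires controlling $\sup_\zeta\int|\partial_z^\beta(r\chi)|\,dz$ up to $|\beta|\leq 3$, and the $z_1$-derivatives of the cutoff introduce factors of $(R\sqrt\eps)^{-1}$ that exactly cancel the gain from the small support: for instance $\partial_{z_1}^2(r\chi)$ contains $2(\partial_{z_1}r)(\partial_{z_1}\chi)$ which is of size $|\ln\eps|/(R\sqrt\eps)$ on a $z_1$-set of measure $\sim R\sqrt\eps$, giving an $L^1_z$-contribution of order $|\ln\eps|$, not $R\sqrt\eps|\ln\eps|$. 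So the passage to $L^1$ of $\partial_z^\beta$ is lossy here, and \eqref{eq:CV} only delivers $\mathcal O(|\ln\eps|)$, which is not small at all. The operator $\op_\eps(r\chi)$ \emph{is} of norm $\mathcal O(R\sqrt\eps|\ln\eps|)$ — your stated conclusion is right — but to get it one must either run the Schur test directly on the kernel, integrating by parts in $\zeta$ (where the cutoff costs nothing) so that the pointwise evaluation at midpoints inside $\{|z_1|\lesssim R\sqrt\eps\}$ yields the prefactor $R\sqrt\eps|\ln\eps|$ before any $z$-integration occurs, or use the two-scale unitary $T_\eps$ of the Appendix to rescale $z_1$ to $z_1\sqrt\eps$ (which renders all $z_1$-derivatives of the rescaled symbol uniformly bounded of size $R\sqrt\eps|\ln\eps|$) and then apply the standard Calder\'on–Vaillancourt estimate. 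In short: the shift formula and the size of the remainder are both correct, but the cited estimate \eqref{eq:CV} does not prove the size claim; the argument needs either a direct Schur bound exploiting $\zeta$-smoothness or the two-scaled calculus of the Appendix.
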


As a consequence, we obtain 
$$
\left(\op_\eps( c_{\eps,R}^{+,out})v^\eps,v^\eps\right)_{L^2(\R^{3}_{s,z})}
=
\left(\op_\eps\!\left(b^{+,out}_{\eps,R}\right)k^{\eps,+}_{2},k^{\eps,+}_{2}\right)_{L^2(\R^{3}_{s,z})}
+ \mathcal O(\eta_\eps).$$
Using the change of variable $s\mapsto s+\aleph$, we find  
$$
\left(\op_\eps( c_{\eps,R}^{+,out})v^\eps,v^\eps\right)_{L^2(\R^{3}_{s,z})}
=
\left(\op_\eps\!\left(b^{+,in}_{\eps,R}\right)k^{\eps,+}_{2},k^{\eps,+}_{2}\right)_{L^2(\R^{3}_{s,z})}
+ \mathcal O(\eta_\eps).$$
By Proposition~\ref{prop:scat}, we have 
$$k^{\eps,+}_2=
\frac{z_1}{\sqrt\eps}b\left(\frac{z_1^2}{\eps}\right) k_1^{\eps,-}+a\left(\frac{z_1^2}{\eps}\right)k_2^{\eps,-}$$
Besides, 
 $b^\pm(s+\aleph)$ is supported in $\{s<0\}$
and, as we said before, by (3) of Assumptions~\ref{ass}, near $\{\sigma+s=0,\,\,s<0\}$  (i.e. near minus trajectories entering in the hopping zone), 
$
v^\eps_2(z,s)  = 
\mathcal O(\eta_\eps)
$.  Therefore,  because of the link between $v^\eps_2$ and $k_2^{\eps,-}$ in the region $\{s<0\}$, 
$
k_2^{\eps,-}(z) = \mathcal O(\eta_\eps)
$ and, using  the relation
$
\lambda^2|b(\lambda)|^2=1-a(\lambda)^2= 1-{\rm e}^{-\pi\lambda},
$ 
we obtain
$$
\frac{z_1}{\sqrt\eps}\, \overline b\left(\frac{z_1^2}{\eps}\right)\, 
\op_\eps\!\left(b^{+,in}_{\eps,R}\right)
\frac{z_1}{\sqrt\eps}\, \overline b\left(\frac{z_1^2}{\eps}\right)
=\op_\eps\left(
\left((1-T_{LZ}\left(\frac{z_1}{\sqrt\eps}\right)\right)\, 
b^{+,in}_{\eps,R}\right)+ \mathcal O(\sqrt\eps)
$$
in ${\mathcal L}(L^2(\R^{3}_{s,z}))$.
Finally, in view of the relations satisfied by $c_{\eps,R}^{\pm,in}$,  and arguing as before, we can conclude that
$$\displaylines{
\left(\op_\eps(c_{\eps,R}^{+,out})v^\eps,v^\eps\right)_{L^2(\R^{3}_{s,z})}  = 
\left( \op_\eps\left( \left(1-T_{LZ}\left(\frac{z_1}{\sqrt\eps}\right)\right)b^{+,in}_{\eps,R}\right) k^{\eps,-}_1,k^{\eps,-}_1\right)_{L^2(\R^{3}_{s,z})}\cr
\hfill
+\mathcal O(\eta_\eps)\quad\cr\hfill 
 =  \left(\op_\eps(c_{\eps,R}^{+,in})v^\eps,v^\eps\right)_{L^2(\R^{d+1})}+\mathcal O(\eta_\eps).\cr}$$
Arguing similarly for $c_{\eps,R}^{-,out}$, we obtain~(\ref{claim}).

\qed


\section{Appendix: Pseudo-differential calculus}

In this Appendix, we recall a few results of symbolic calculus that we use in this article and we 
prove the estimate~(\ref{eq:CV}), which is at the core of these results.

\medskip

The estimate~(\ref{eq:CV}) relies on the Schur Lemma. With $f\in L^2(\R^d)$, we associate  
$${\mathcal F}_\eps(f)(\xi): =(2\pi\eps)^{-d}\widehat f \left({\xi\over\eps}\right) = (2\pi\eps)^{-d/2} \int f(x) {\rm e}^{-{i \over\eps}\xi\cdot x} dx,$$
and we observe that for any $a\in{\mathcal C}_0^\infty(\R^{2d}) $, 
\begin{eqnarray*}
\left(\op_\eps(a)f,f\right) & =& (2\pi\eps)^{-d}
\int a\left({x+y\over 2},\xi\right){\rm e}^{{i\over \eps}\xi\cdot(x-y) }f(y)\overline f(x) dx\,dy\,d\xi\\
& =& (2\pi\eps)^{-3d}
\int a\left({x+y\over 2},\xi\right){\rm e}^{{i\over \eps}\left(\xi\cdot(x-y) +i\eta\cdot y-ix\cdot\zeta\right)}\\
& & \qquad\qquad\qquad \times\, 
\widehat f\left({\eta\over\eps}\right)\overline {\widehat f}\left({\zeta\over \eps}\right) dx\,dy\,d\xi\, d\eta\,d\zeta\\
& =& (2\pi\eps)^{-2d}
\int a\left(X,\xi\right){\rm e}^{{i\over \eps}\left(\xi\cdot v +i\eta\cdot\left(X-{v\over 2}\right)-i\zeta\left(X+{v\over 2}\right) \right)}\\
& & \qquad\qquad\qquad \times\, 
 {\mathcal F}_\eps (f)(\eta)\overline{{\mathcal F}_\eps( f)}(\zeta) dX\,dv\,d\xi\,d\eta\,d\zeta\\
& =& (2\pi\eps)^{-d}
\int a\left(-X,{\eta+\zeta\over 2}\right){\rm e}^{{i\over \eps}X\cdot(\zeta-\eta)} {\mathcal F}_\eps (f)(\eta)\overline{{\mathcal F}_\eps( f)}(\zeta) dX\,d\eta\,d\zeta\\
&=&  \left(K_\eps {\mathcal F}_\eps(f),{\mathcal F}_\eps(f)\right),
 \end{eqnarray*}
where $K_\eps$ is the operator of kernel $k_\eps(\xi,\xi')$,
$$k_\eps(\xi,\xi')={1\over \eps^d} \widetilde a \left({\xi+\xi'\over 2},{\xi-\xi'\over\eps}\right),$$
with 
$$\widetilde a (\xi,v)=(2\pi)^{-d} \int a(-x,\xi) {\rm e}^{ix\cdot v} dx.$$
By the Plancherel theorem, the norm of $\op_\eps(a)$ and of $K_\eps$ are the same. Besides, 
using the  Schur Lemma, we obtain 
\begin{eqnarray*}
\| K_\eps\| & \leq & {\rm Max} \left( \sup_{\xi\in\R^d} \int |k_\eps(\xi,\xi')|d\xi',  \sup_{\xi'\in\R^d} \int |k_\eps(\xi,\xi')|d\xi
\right)\\
& \leq & \int_{\R^d} \sup_{\xi\in\R^d} \left| \widetilde a (\xi,v)\right| dv\\
& \leq & C\, \sup_{|\beta|\leq d+1}\,\sup_{\xi\in\R^d} \int \left| \partial_x^\beta a(x,\xi) \right| dx.
\end{eqnarray*}
for some constants $C$ independent of $a$ and $\eps$, which gives the result. 

\medskip

By use of the same techniques and of the Taylor formula, one can prove the following result about the composition of pseudo differential operators (see for example section 4.1 of chapter 2 in~\cite{AFF}):

\begin{proposition}\label{prop:symbol}
Let $a,b\in{\mathcal C}_0^\infty(\R^{2d},\C^{N,N})$, $N\in\N$, then
$$\op_\eps(a)\op_\eps(b)  = \op_\eps(ab)+{\eps\over 2i} \op_\eps(\{a,b\})+\eps^2 R_\eps,$$
with $\{a,b\}=\nabla_\xi a \cdot \nabla _x b-\nabla _xa\cdot \nabla_\xi b$ and 
$$\| R_\eps\|_{{\mathcal L}(L^2(\R^d)}\leq C \,\sup_{|\alpha|+|\beta|=2} \sup_{  |\gamma|\leq d+1}\,\sup_{\xi\in\R^d}\left( \int \left| \partial_\xi^\alpha \partial_x^{\beta+\gamma}  a(x,\xi) \right| dx\right)\left( \int \left| \partial_\xi^\beta \partial_x^{\alpha+\gamma}  b(x,\xi) \right| dx\right)$$
for some constant $C>0$ independent of $a$, $b$ and $\eps$.
\end{proposition}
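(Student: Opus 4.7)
The plan is to follow the classical Weyl composition formula and control the remainder by a variant of the Schur-lemma argument used for estimate~(\ref{eq:CV}). First, a direct computation from the Weyl quantization formula (together with the change of variables already performed in the excerpt) shows that $\op_\eps(a)\op_\eps(b) = \op_\eps(c_\eps)$ with the Moyal product
$$
c_\eps(x,\xi) = \frac{1}{(\pi\eps)^{2d}}\iiiint a(x+u,\xi+\mu)\,b(x+v,\xi+\nu)\,e^{\frac{2i}{\eps}(u\cdot\nu - v\cdot\mu)}\,du\,dv\,d\mu\,d\nu.
$$
Under the change of variables $u\mapsto \sqrt{\eps}\,u$, etc., this becomes a non-oscillatory integral in $\sqrt\eps$-rescaled coordinates, and one expects an expansion in powers of $\eps$.

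I would then expand $a(x+u,\xi+\mu)$ and $b(x+v,\xi+\nu)$ by Taylor's formula to second order around $(x,\xi)$ and plug into the composition formula. The zeroth-order term of $a$ paired with the zeroth-order term of $b$ produces $a(x,\xi)b(x,\xi)$ after integrating out the Gaussian-like (distributional) phases. The cross terms between first-order terms of $a$ and first-order terms of $b$ produce, using the identities
$\int u\, e^{\frac{2i}{\eps}u\cdot\nu}\,du = (i\eps/2)\,\nabla_\nu(\cdots)$ and similarly for $v$, the Poisson bracket contribution $\frac{\eps}{2i}\{a,b\}(x,\xi)$. All remaining combinations carry at least two derivatives (one from each symbol, or two from one) together with two factors of $(u,v,\mu,\nu)$ under the oscillating phase, and may be written in the Taylor-with-integral-remainder form
$$
\eps^2 r_\eps(x,\xi) = \sum_{|\alpha|+|\beta|=2} \eps^2 \int_0^1 (1-t)\,R^t_{\alpha,\beta}(x,\xi)\,dt,
$$
where each $R^t_{\alpha,\beta}$ is an oscillatory integral of the form above with $a$ replaced by $\partial_\xi^\alpha\partial_x^\beta a$ and $b$ replaced by $\partial_\xi^{\beta'}\partial_x^{\alpha'}b$ with $|\alpha|+|\beta|=|\alpha'|+|\beta'|=2$ and the $x$- and $\xi$-derivatives distributed symmetrically between the two symbols as dictated by the phase (this is where the pattern $\partial_\xi^\alpha \partial_x^\beta a$ vs.\ $\partial_\xi^\beta \partial_x^\alpha b$ appears in the final bound).

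The main obstacle is then to bound $\op_\eps(R^t_{\alpha,\beta})$ in $\mathcal{L}(L^2(\R^d))$ uniformly in $t\in[0,1]$ and $\eps>0$. The route is exactly the one used in the proof of~(\ref{eq:CV}) recalled above: conjugate by ${\mathcal F}_\eps$ so that $\op_\eps(R^t_{\alpha,\beta})$ becomes an integral operator with kernel
$k^t_\eps(\xi,\xi') = \eps^{-d}\,\widetilde{R^t_{\alpha,\beta}}\!\left(\frac{\xi+\xi'}{2},\frac{\xi-\xi'}{\eps}\right)$,
where $\widetilde{\cdot}$ denotes partial Fourier transform in the first variable. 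Applying Schur's lemma as before bounds the norm by $\sup_\xi \int |\widetilde{R^t_{\alpha,\beta}}(\xi,v)|\,dv$; integration by parts in $x$ up to order $|\gamma|\leq d+1$ provides the decay in $v$ needed for the $dv$-integral to converge, producing the extra $\partial_x^\gamma$ derivatives in the statement. Since the integrand $R^t_{\alpha,\beta}$ is (pointwise in $(x,\xi)$) a product of two translated symbols with rescaled phases, the Schur norm factorizes into the $L^\infty_\xi L^1_x$ norms of $\partial_\xi^\alpha\partial_x^{\beta+\gamma}a$ and of $\partial_\xi^\beta\partial_x^{\alpha+\gamma}b$, giving exactly the bound claimed in the statement. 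The delicate bookkeeping is the distribution of $x$- versus $\xi$-derivatives between $a$ and $b$; a clean way is to do the Taylor expansion with remainder in the variable $(u,\mu)$ only (keeping $b$ untouched), and then transfer the $\mu$-weights onto $b$ by integration by parts in $v$, which is what produces the symmetric $(\alpha,\beta)\leftrightarrow(\beta,\alpha)$ pattern in the final estimate.
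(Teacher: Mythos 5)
Your sketch follows exactly the route the paper indicates (it gives no detailed proof, only the remark that the result follows "by use of the same techniques and of the Taylor formula," i.e.\ the Moyal product expanded by Taylor's formula with the remainder controlled by the Schur-lemma argument of estimate~(\ref{eq:CV})); your identification of the $(\alpha,\beta)\leftrightarrow(\beta,\alpha)$ derivative pattern coming from the phase pairing $u\leftrightarrow\nu$, $v\leftrightarrow\mu$, and of the extra $\partial_x^\gamma$, $|\gamma|\le d+1$, from the $dv$-integration in Schur's lemma, is precisely what produces the stated bound. The only place that would need to be fleshed out in a full write-up is the factorization of the Schur norm of the remainder kernel into the two $L^\infty_\xi L^1_x$ norms, but this is the right mechanism and matches the cited reference.
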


\medskip

Let us give a few comments on symbolic calculus involving two-scaled symbols of the form 
$$a_{\eps,R}(x,\xi)=a\left(x,\xi,{f(x,\xi)\over R\sqrt\eps}\right)$$
for some smooth function $f(x,\xi)$ and smooth bounded function $a(x,\xi,\eta)$ compactly supported in variables $(x,\xi)$ uniformly in $\eta$ and with bounded derivatives in~$\eta$.
Consider the unitary scaling operator $T_\eps$ defined by 
$$\forall u\in L^2(\R^d),\;\;T_\eps u(x)=\eps^{1/4} u(x\sqrt\eps).$$
Note that this scaling operator is at the core of Calder\'on-Vaillancourt proof (\cite{CV}).
Then, we observe that the relation 
$$\op_\eps(a_{\eps,R})= T_\eps^*\op_1\left(a\left(x\sqrt\eps ,\xi\sqrt\eps,{f(x\sqrt\eps,\xi\sqrt\eps )\over R\sqrt\eps}\right)\right)T_\eps,$$
yield the uniform boundedness of the operator $\op_\eps(a_{\eps,R})$ on $L^2(\R^d)$ by the standard Calder\'on-Vaillancourt estimate
$$\exists C,N>0,\;\; \forall a\in{\mathcal C}_0^\infty(\R^{2d}),\;\; \| \op_1(a)\| _{{\mathcal L}(L^2(\R^d))}\leq C \,\sup_{|\alpha|+|\beta|\leq N}\sup_{(x,\xi)\in\R^{2d}}\left| \partial_x^\alpha \partial_\xi^\beta a (x,\xi) \right|.$$
Besides, by using similarly the operator $T_\eps$, the reader will convince oneself that Proposition~\ref{prop:symbol} holds with rest terms of size $\sqrt\eps$ as soon as one and only one of the involved symbols is two-scaled. That is the precise reason why, following the construction of Fourier Integral Operators (as performed in~\cite{FG1} for example), equation~(\ref{OIF-1}) extend to two-scaled symbols and writes~(\ref{OIF-2}).



\begin{thebibliography}{99}

\bibitem{AG} S. Alinhac, P. G\'erard, { Pseudo-differential operators and the Nash-Moser Theorem}, Graduate Studies in Mathematics, {\bf 82}, AMS, 2007.

\bibitem{AFF} N. Anantharaman, F. Faure, C. Fermanian Kammerer, Chaos Quantique,  Journ\'ees math\'ematiques X-UPS 2014. Comit\'e Editorial: Pascale Harinck, Alain Plagne, Claude Sabbah, {\it Les \'Editions de l'\'Ecole polytechnique} (2014) {\tt http: http://www.math.polytechnique.fr/xups/volumes.html}. 


\bibitem{lundstrom} D. Berdebes, T. Low, M. Lundstrom, Low Bias Transport in Graphene: An Introduction, Lecture Notes of the 2009 NCN Purdue Summer School: Electronics from the Bottom Up, available at http://nanohub.org/topics/ElectronicsFromTheBottomUp.

\bibitem{CV} A. P. Calder\'on, R. Vaillancourt, On the boundedness of pseudodifferetial operators, {\it J. Math. Soc. Japan}. {\bf 23}, 2, 374--378 (1971).

\bibitem{rmp1} A. H. Castro Neto, F. Guinea, N. M. R. Peres, K. S. Novoselov, A. K. Geim, The electronic properties of graphene, Rev. Mod. Phys. {\bf 81}, 109--162 (2009).

\bibitem{CdV1}
{ Y. Colin de Verdi\`ere},
{ The level crossing problem in semi-classical analysis I.
The symmetric case},
Ann.\ Inst.\ Fourier, 53, no.\ 4 (2003),
pp.\ 1023--1054.

\bibitem{CdV2}
{ Y. Colin de Verdi\`{e}re},
{The level crossing problem in semi-classical analysis II.
The hermitian case},
Ann.\  Inst.\  Fourier, 54, no.\ 5 (2004), 
pp.\ 1423--1441.

\bibitem{DS}
M. Dimassi,  J.\ Sj\"ostrand, 
Spectral asymptotics in the semi-classical limit,
London Mathematical Society Lecture Note Series, 268,  
Cambridge University Press, 1999.

\bibitem{dynkin}  E. B. Dynkin, Markov processes. Vol. I. Die Grundlehren der Mathematischen Wissenschaften, B\"ande 121, 122 Academic Press Inc., Publishers, New York; Springer-Verlag, Berlin-G\"ottingen-Heidelberg 1965.

\bibitem{rmp2} S. Das Sarma, S. Adam, E. H. Hwang, E. Rossi,  Electronic transport in two-dimensional graphene, Rev. Mod. Phys. {\bf 83}, 407--470 (2011). 

\bibitem{EHM14} R. El Hajj, F. M\'ehats, Analysis of models for quantum transport of electrons in graphene layers, Math. Models Methods Appl. Sci. {\bf 24}, 11, 2287--2310 (2014).

\bibitem{FS} A. Faraj, S. Jin, Some studies on asymptotic and numerical approximation of 2D Dirac equation for graphene, preprint arXiv:1505.05988.

\bibitem{F} C. Fermanian Kammerer, Normal forms for conical intersections in quantum chemistry, {\it Math. Phys. Elect. Jour.}, {\bf 13}, N$^{\rm o}$ 4 (2007).


\bibitem{FG1} C. Fermanian Kammerer, P. G\'erard, Mesures semi-classiques et
croisements de modes, {\it Bull. Soc. math. France}, {\bf 130},
N$^o$1, 123--168 (2002).


\bibitem{FG2} C. Fermanian Kammerer, P. G\'erard, A Landau-Zener formula for
non-degenerated involutive codimension 3 crossings., {\it Ann.
Henri Poincar\'e}, {\bf 4},  N$^o$1, 513--552 (2003).

\bibitem{FGL} C. Fermanian Kammerer, P. G\'erard, C. Lasser, Wigner measure propagation and Lipschitz conical singularity for general initial data, {\it Arch. Ration. Mech. Anal.}, {\bf 209}, 209--236 (2013).

\bibitem{FL1} C. Fermanian Kammerer, C. Lasser, Propagation through generic level crossings: a surface hopping semigroup.  {SIAM J. of Math. Anal. },  {\bf 140}, 1, p. 103-133 (2008).

\bibitem{FL2} C. Fermanian Kammerer, C. Lasser, Single switch surface hopping for molecular dynamics with transitions. Journal of Chemical Physics, {\bf 128}, 144102 (2008).

\bibitem{FL3} C. Fermanian Kammerer, C. Lasser, Single switch surface hopping for molecular dynamics. Proceedings of the {\it International meeting on  "Mathematical Methods for Ab Initio Quantum Chemistry" -- Nice, 2008} .

\bibitem{FL4} C. Fermanian Kammerer, C. Lasser, Egorov Theoreù for avoided crossings of eigenvalue surfaces ({\it work in progress}).

\bibitem{GL}  P. G\'erard, E. Leichtnam, Ergodic properties of eigenfunctions for the {D}irichlet problem, {\it Duke Math. J.} {\bf 71}, no. 2 (1993), pp.\ 559--607.


\bibitem{GMMP} P. G\'erard, P. A. Markowich, N. J. Mauser, F. Poupaud,
Homogenization Limits and Wigner Transforms, {\it Comm. Pure Appl.
Math.}, {\bf 50}, 4, 323--379 (1997) and Erratum: Homogenization
limits and Wigner Transforms, {\it Comm. Pure Appl. Math.}, {\bf
53}, 280--281 (2000).

\bibitem{Ha94}
{ G.\ Hagedorn},
{ Molecular Propagation through Electron Energy Level
Crossings},
Mem.\ Am.\ Math.\ Soc., 111, no.\ 536, 1994.


\bibitem{HJ} G. A. Hagedorn, A. Joye, Landau-Zener transitions through small electronic
eigenvalue gaps in the Born-Oppenheimer approximation. {\it Ann.
Inst. Henri Poincar\'e}, {\bf 68}, N$^o$1, p. 85-134  (1998).

\bibitem{GNI} E. Hairer, C. Lubich, G. Wanner, Geometric numerical integration. Structure-preserving algorithms for ordinary differential equations. Reprint of the second (2006) edition. Springer Series in Computational Mathematics, 31. Springer, Heidelberg, 2010.

\bibitem{hbcd} V. Hung Nguyen, A. Bournel, C. Chassat, P. Dollfus, Quantum transport of Dirac fermions in graphene field effect transistors, International Conference on Simulation of Semiconductor Processes and Devices (SISPAD), Bologne, Italie, 6--8 September 2010, SISPAD 2010 Proceedings, 9--12 (2010).

\bibitem{La}
L. Landau,
 Collected papers of L. Landau,
Pergamon Press, 1965.

\bibitem{ST}
{H.\ Spohn and S.\ Teufel},
{ Adiabatic decoupling and time-dependent Born-Oppenheimer theory},
Commun.\ Math.\ Phys., 224, no.\ 1 (2001), pp.\ 113--132.

\bibitem{KL} S. Kube, C. Lasser, M. Weber: Monte Carlo sampling of Wigner functions and surface hopping quantum dynamics, {\it J. Comput. Phys.} 228, 1947--1962 (2009).


\bibitem{LT} C. Lasser, S. Teufel, Propagation through Conical Crossings: an
Asymptotic Semigroup, {\it Comm. Pure Appl. Math.} 58, 9,
1188--1230 (2005).

\bibitem{liggett} T. Liggett, {Interacting particle systems.} Grundlehren der Mathematischen Wissenschaften [Fundamental Principles of Mathematical Sciences], 276. Springer-Verlag, New York, 1985.

\bibitem{MS} A. Martinez, V. Sordoni, Twisted Pseudodifferential Calculus and Application to the Quantum Evolution of Molecules, {\it  Memoirs of American 
Mathematical Society}, 200, p. 1 - 82 (2009).


\bibitem{morandi} O. Morandi, F. Sch\"urrer, Wigner model for quantum transport in graphene, J. Phys. A: Math. Theor. {\bf 44} (2011) 265301.

\bibitem{Ze}
C. Zener,
{Non-adiabatic crossing of energy levels},
Proc.\ Roy.\ Soc.\ Lond., 137 (1932), pp.~696--702.

\bibitem{Z}
M. Zworski, {\it Semi-classical analysis}, 
Graduate Studies in Mathematics, 138, AMS, 2012.

\end{thebibliography}
\end{document}